\newcommand{\reducedplus}{\mathpalette\reduced@plus\relax}
\newcommand{\reduced@plus}[2]{%
  \sbox6{$\m@th#1+$}%
  \sbox8{\scalebox{0.875}{\copy6}}%
  \dimen@=\dimexpr(\wd6-\wd8)/3\relax
  \raisebox{\dimen@}{\box8}%
}
\newcommand{\boxoperation}[2][\mathbin]{%
  #1{\mathpalette\box@operation{#2}}%
}
\newcommand{\box@operation}[2]{%
  \ooalign{$\m@th#1\boxempty$\cr\hidewidth$\m@th#1#2$\hidewidth\cr}%
}
\newcommand{\bplus}{\boxoperation{\reducedplus}}
\newcommand{\YES}{\textsc{Yes}}
\newcommand{\NO}{\textsc{No}}
\newcommand{\bigboxplus}{
  \mathop{
    \vphantom{\bigoplus} 
    \mathchoice
      {\vcenter{\hbox{\resizebox{\widthof{$\displaystyle\bigoplus$}}{!}{$\bplus$}}}}
      {\vcenter{\hbox{\resizebox{\widthof{$\bigoplus$}}{!}{$\bplus$}}}}
      {\vcenter{\hbox{\resizebox{\widthof{$\scriptstyle\oplus$}}{!}{$\bplus$}}}}
      {\vcenter{\hbox{\resizebox{\widthof{$\scriptscriptstyle\oplus$}}{!}{$\bplus$}}}}
  }\displaylimits 
}
\newcommand\ang[1]{{\ensuremath\langle #1\rangle}}
\newcommand\NP{{\operatorname{NP}}}
\DeclareSymbolFont{extraup}{U}{zavm}{m}{n}
\DeclareMathSymbol{\varheart}{\mathalpha}{extraup}{86}
\DeclareMathSymbol{\vardiamond}{\mathalpha}{extraup}{87}
\newcommand{\Test}[2]{
\def\temp{#2}\ifx\temp\empty
  \operatorname{Test}_{#1}
\else
  \operatorname{Test}_{#1}^{#2}
\fi
}
\newcommand{\qTo}[1]{
  \to_{#1}
}
\newcommand{\qToMR}[1]{
  \xrightarrow{\operatorname{MR}}{}{\hspace{-.35em}}_{#1}
}
\newcommand{\C}{\mathbb{C}}
\newcommand{\K}{\mathbf{K}}
\newcommand{\X}{\mathbf{X}}
\newcommand{\Y}{\mathbf{Y}}
\newcommand{\N}{\mathbb{N}}
\newcommand{\R}{\mathbb{R}}
\newcommand{\HH}{\mathbb{H}}
\newcommand{\freeM}{\mathbb{F}_{\Mminion}}
\newcommand{\freeQH}{\mathbb{F}_{\Qminion_\HH}}
\renewcommand{\vec}[1]{\mathbf{#1}}
\newcommand{\bh}{\vec{h}}
\newcommand{\bu}{\vec{u}}
\newcommand{\bx}{\vec{x}}
\newcommand{\bv}{\vec{v}}
\newcommand{\by}{\vec{y}}
\newcommand{\bw}{\vec{w}}
\newcommand{\be}{\vec{e}}
\newcommand{\bq}{\vec{q}}
\newcommand{\bz}{\vec{z}}
\DeclareMathAlphabet{\mathbx}{U}{BOONDOX-ds}{m}{n}
\SetMathAlphabet{\mathbx}{bold}{U}{BOONDOX-ds}{b}{n}
\DeclareMathAlphabet{\mathbbx} {U}{BOONDOX-ds}{b}{n}
\DeclareMathOperator{\tr}{tr}
\DeclareMathOperator{\Proj}{Proj}
\DeclareMathOperator{\pr}{pr}
\DeclareMathOperator{\SDP}{SDP}
\DeclareMathOperator{\Pol}{Pol}
\DeclareMathOperator{\PCSP}{PCSP}
\DeclareMathOperator{\CSP}{CSP}
\DeclareMathOperator{\id}{id}
\DeclareMathOperator{\rg}{rg}
\DeclareMathOperator{\ar}{ar}
\DeclareMathOperator{\End}{End}
\newcommand{\Dminion}{\ensuremath{{\mathscr{D}}}}
\newcommand{\Sminion}{\ensuremath{{\mathscr{S}}}}
\newcommand{\Mminion}{\ensuremath{{\mathscr{M}}}}
\newcommand{\Nminion}{\ensuremath{{\mathscr{N}}}}
\newcommand{\Qminion}{\ensuremath{{\mathscr{Q}}}}
\newcommand{\Cminion}{\ensuremath{{\mathscr{C}}}}
\newcommand{\bzero}{\mathbf{0}}
\newcommand{\zeroOp}{\mathbx{0}}
\theoremstyle{plain}
\newtheorem{thm}{Theorem}
\newtheorem*{thm*}{Theorem}
\newtheorem{lem}[thm]{Lemma}
\newtheorem*{lem*}{Lemma}
\newtheorem{prop}[thm]{Proposition}
\newtheorem*{prop*}{Proposition}
\newtheorem{cor}[thm]{Corollary}
\newtheorem{conj}[thm]{Conjecture}
\newtheorem{question}[thm]{Question}
\theoremstyle{definition}
\newtheorem{defn}[thm]{Definition}
\newtheorem{rem}[thm]{Remark}
\newtheorem{example}[thm]{Example}
\begin{document}

\author{Lorenzo Ciardo\\
University of Oxford\\
\texttt{lorenzo.ciardo@cs.ox.ac.uk}
}

\title{Quantum advantage and CSP complexity\thanks{This work will appear in the Proceedings of the 39th Annual ACM/IEEE Symposium on Logic in Computer Science (LICS’24). The research leading to these results was supported by UKRI EP/X024431/1.}} 

\date{}

\maketitle

\begin{abstract}
\noindent Information-processing tasks modelled by homomorphisms between relational structures can witness quantum advantage when entanglement is used as a computational resource. We prove that the occurrence of quantum advantage is determined by the same type of algebraic structure (known as a minion) that captures the polymorphism identities of CSPs and, thus, CSP complexity. We investigate the connection between the minion of quantum advantage and other known minions controlling CSP tractability and width. In this way, we make use of complexity results from the algebraic theory of CSPs to characterise the occurrence of quantum advantage in the case of graphs, and to obtain new necessary and sufficient conditions in the case of arbitrary relational structures.
\end{abstract}

\section{Introduction}

The \emph{constraint satisfaction problem} (CSP) is a computational paradigm that provides a common framework for studying the complexity of a variety of combinatorial problems. The input consists in a set of variables, a set of possible values for the variables, and a set of constraints between the variables; the goal is to determine whether there exists an assignment of values to the variables that satisfies the constraints. In its full generality, the CSP is NP-complete. Nevertheless, restricting the admissible constraints to a selected class can make the problem solvable in polynomial time. This is the case of linear equations, which can be formulated as a CSP where all constraints are linear relations between the variables.
The theory of CSP complexity aims at explaining the complexity of a CSP over some specific type of constraints in terms of the structure of the constraints.

An elegant way of formalising the CSP is through the notion of homomorphisms between relational structures. A relational structure consists of a set of vertices and a set of relations on the vertices; for example, a graph is a relational structure having a single binary symmetric irreflexive relation. A homomorphism $\X\to\Y$ between two relational structures $\X$ and $\Y$ is a map between the vertex sets of $\X$ and $\Y$ that preserves all relations, in the sense that the image of a tuple in a relation of $\X$ must belong to the corresponding relation of $\Y$. We can then formulate the CSP \emph{parameterised} by $\Y$ as the problem of deciding if an arbitrary structure $\X$ admits a homomorphism to $\Y$. In this setting, the relations of $\X$ encode the tuples of variables that appear in some constraint of the CSP, while the relations of $\Y$ encode the types of admissible constraints. 

For example, linear equations over a finite field $\mathbb{F}$ of $n$ elements correspond to $\CSP(\Y)$, where the domain of $\Y$ is $\{1,\dots,n\}$, and the relations of $\Y$ are affine hyperplanes in vector spaces over $\mathbb{F}$. If $\Y$ is the $n$-clique $\K_n$, $\CSP(\Y)$ corresponds to another primary example of a CSP --- namely, the \emph{graph $n$-colouring} problem, which consists in testing whether the chromatic number of a graph $\X$ is at most $n$. 
A natural generalisation is the \emph{graph $\Y$-colouring} problem, which corresponds to $\CSP(\Y)$ for some fixed graph $\Y$. 
The complexity of graph $n$-colouring was characterised by Karp~\cite{Karp72}: If $n=2$, the problem is in $\operatorname{P}$; otherwise, it is $\NP$-complete. Hell and Ne\v{s}et\v{r}il~\cite{HellN90} extended the dichotomy by proving that graph $\Y$-colouring is in $\operatorname{P}$ if $\Y$ is bipartite, and it is $\NP$-complete otherwise.
In~\cite{Feder98:monotone}, Feder and Vardi proposed the conjecture that Karp's and Hell--Ne\v{s}et\v{r}il's dichotomies could in fact extend to all CSPs.
Feder--Vardi's Dichotomy Conjecture inspired a major research programme in the theory of CSPs, that culminated twenty years later with its positive resolution obtained by Bulatov~\cite{Bulatov17:focs} and, independently, by Zhuk~\cite{Zhuk20:jacm}.
\\

The advent of quantum computation led to the natural programme of understanding the type of advantage derived by having access to quantum resources --- as opposed to classical resources --- in computational tasks. In order to formalise this notion of \emph{quantum advantage} for computational tasks consisting in constraint satisfaction problems, it shall be convenient to adopt an alternative description of homomorphisms between relational structures, in terms of winning strategies for non-local games.

For the sake of simplicity, suppose that $\X$ and $\Y$ are two graphs. The \emph{$\X\Y$-homomorphism game}, introduced in~\cite{MancinskaRoberson16}, involves two players (Alice and Bob) and a Verifier. The Verifier sends Alice and Bob vertices $x_A$ and $x_B$ of $\X$, respectively. Alice and Bob (who cannot communicate during the game, but can agree on a strategy before the game starts) respond by sending the Verifier vertices $y_A$ and $y_B$ of $\Y$, respectively. They win the $\X\Y$-homomorphism game if $y_A$ and $y_B$ are adjacent (resp. equal) vertices of $\Y$ whenever $x_A$ and $x_B$ are adjacent (resp. equal) vertices of $\X$. If only resources of classical type are allowed, the existence of a perfect strategy for this game --- i.e., a strategy that makes Alice and Bob win with probability $1$ --- is equivalent to the existence of a homomorphism $\X\to\Y$. In fact, a classical perfect strategy simply consists in Alice and Bob sharing a protocol for selecting some homomorphism $f:\X\to\Y$, and responding with the vertex $f(x)$ whenever they are sent a vertex $x$ by the Verifier. 

Suppose now that Alice and Bob have access to quantum resources, modelled by an entangled state in the space $\HH_A\otimes\HH_B$, where $\HH_A\simeq\HH_B$ are finite-dimensional (real or complex) Hilbert spaces. This state is shared by Alice and Bob; the separation between the two players corresponds to the fact that Alice (resp. Bob) can only perform measurements on $\HH_A$ (resp. $\HH_B$). This extra resource allows the players to devise quantum-assisted strategies: Each player performs a positive operator-valued measure of their part of the state, which results in a vertex of $\Y$ that is then sent to the Verifier. A \emph{quantum perfect strategy} over the space $\HH=\HH_A\simeq\HH_B$ for the $\X\Y$-homomorphism game is a quantum-assisted strategy that makes Alice and Bob win with probability $1$. In this case, $\X$ and $\Y$ are said to be \emph{quantum homomorphic} over $\HH$. 
The extra correlation between the answers of Alice and Bob produced by the entanglement of the quantum state can result in strategies that are strictly stronger than classical strategies. In other words, it is possible for two graphs $\X$ and $\Y$ to be quantum homomorphic even when they are not homomorphic in the classical sense.
In~\cite{MancinskaRoberson16}, the question of understanding for which graphs $\Y$ the notions of quantum and classical homomorphisms differ was posed. More precisely, we say that $\Y$ \emph{has quantum advantage} if there exists some graph $\X$ such that $\X$ and $\Y$ are quantum homomorphic but not classically homomorphic.
\begin{question}[\cite{MancinskaRoberson16}]
\label{question_quantum_advantage_graphs}
    Characterise the class of graphs $\Y$ having quantum advantage.
\end{question}

The notions of quantum perfect strategy, quantum homomorphism, and quantum advantage that we consider in this work extend those described above, in that they apply to arbitrary relational structures as opposed to only graphs. This extension was recently introduced in~\cite{abramsky2017quantum}, and it involves a modified version of the $\X\Y$-homomorphism game discussed above, where the roles of Alice and Bob are asymmetric: Alice receives a tuple $\bx_A$ in some relation of $\X$, and returns a tuple $\by_A$ in the corresponding relation of $\Y$; Bob receives a vertex $x_B$ of $\X$ and returns a vertex $y_B$ of $\Y$. They win the game if, whenever $x_B$ appears in the tuple $\bx_A$, $y_B$ appears in the corresponding position of the tuple $\by_B$.\footnote{If $\X$ and $\Y$ are graphs, it was noted in~\cite{abramsky2017quantum} that this definition results in a stronger notion of quantum homomorphism than the one from~\cite{MancinskaRoberson16}. As we shall see in Section~\ref{sec_quantum_advantage_graphs}, the results of this paper apply to both definitions.}\\

The main conceptual contribution of this work is to establish that \emph{the occurrence of quantum advantage is governed by 
the same type of algebraic structure that determines the complexity of constraint satisfaction problems} --- namely, the algebraic structure known as a \emph{minion}. 
We start by discussing the role of minions in CSP complexity.

The mathematical framework that allowed attacking and, eventually, proving Feder-Vardi's Dichotomy
Conjecture was the so-called \emph{algebraic approach to CSPs}, whose origins can be traced back to~\cite{Jeavons97:closure,Jeavons98:algebraic,Pippenger02}. The gist of this approach is that the complexity of a CSP is entirely determined by a type of identities satisfied by its higher-order symmetries, known as \emph{polymorphisms}. A polymorphism of a relational structure $\Y$ is a homomorphism $\Y^\ell\to\Y$, where $\Y^\ell$ is the $\ell$-fold direct power of $\Y$. If $\Y$ and $\Y'$ are two structures such that the set $\Pol(\Y)$ of the  polymorphisms of $\Y$ is included in $\Pol(\Y')$ --- thus meaning that $\Y'$ has ``more symmetries'' than $\Y$ ---
there exists a specific kind of polynomial-time reduction (known as a \emph{gadget reduction}) from $\CSP(\Y')$ to $\CSP(\Y)$ (in symbols, $\CSP(\Y')\leq_{\operatorname{g}}\CSP(\Y)$).
This gives a powerful way of comparing the complexity of different CSPs, by looking at the properties of their polymorphisms. 
The implication
\begin{align*}
\Pol(\Y)\subseteq\Pol(\Y')\;\;\;\;\Rightarrow\;\;\;\;\CSP(\Y')\leq_{\operatorname{g}}\CSP(\Y)    
\end{align*}
can in fact be strengthened: As established in~\cite{BOP18}, in order for a gadget reduction to exist, it is enough for $\Pol(\Y')$ to satisfy all identities of a certain type (known as \emph{height-1} or \emph{minor} identities) that are satisfied in $\Pol(\Y)$. One can then endow $\Pol(\Y)$ with a particular algebraic structure (known as \emph{minion} or \emph{minor-closed class}~\cite{Pippenger02}) that encodes minor identities. This results in the stronger implication 
\begin{align*}
\Pol(\Y)\to\Pol(\Y')\;\;\;\;\Rightarrow\;\;\;\;\CSP(\Y')\leq_{\operatorname{g}}\CSP(\Y),    
\end{align*}
where ``$\to$'' denotes a minor-preserving map (or \emph{minion homomorphism}) between the minions $\Pol(\Y)$ and $\Pol(\Y')$. Furthermore, it was shown in~\cite{BOP18} that the converse of the implication above also holds.
In this sense, the polymorphism minion precisely captures CSP complexity.

\paragraph{Contributions}
Our first contribution is
to show that quantum advantage is a property of the polymorphism minion, just like CSP complexity.
To that end,
we construct a minion $\Qminion_\HH$ having the property that a structure $\Y$ has quantum advantage over $\HH$ if and only if $\Qminion_\HH\not\to\Pol(\Y)$. This \emph{quantum minion} belongs to a particular class of minions identified in~\cite{cz23soda:minions}, known as \emph{linear minions}. Its objects are tuples of orthogonal subspaces of the Hilbert space $\HH$, while the minors are defined in terms of sums of linear spaces.

Using this connection, the question ``\emph{Which structures have quantum advantage?}'' can be reformulated as ``\emph{Which minions admit a homomorphism from the quantum minion?}''. In other words, the problem of classifying the occurrence of quantum advantage is translated into the problem of locating $\Qminion_\HH$ within the preorder on the class of minions induced by the minion homomorphism relation. We prove the following three results:\\[-1em]

    \noindent$(i)$ If $\HH$ has dimension $1$ or $2$, $\Qminion_\HH$ is equivalent to the so-called \emph{dictator minion} $\Dminion$ (also known as projection minion). To show this, we make use of the Hopf fibration of the complex projective line $\C\textbf{P}^1$ into the $2$-sphere $S^2$, which has the property of mapping pairs of orthogonal vectors into antipodal points. Then, we consider an antipodal partition of $S^2$. In this way, we can select in any element of $\Qminion_\HH$ a distinguished coordinate --- namely, the coordinate whose corresponding subspace is mapped by the Hopf fibration to, say, the first part of the partition. The map assigning to each element of $\Qminion_\HH$ its distinguished coordinate is then shown to be a minion homomorphism from $\Qminion_\HH$ to $\Dminion$.\\
    $(ii)$ If $\HH$ has dimension at least $3$, $\Qminion_\HH$ is not equivalent to the dictator minion. Our proof of this fact works by showing that a homomorphism $\Qminion_\HH\to\Dminion$ would induce a Boolean measure on the subspaces of $\HH$. By virtue of Gleason's Theorem~\cite{gleason1975measures}, any measure $\mu$ on the subspaces of $\HH$ must be of the form $\mu(v)=\tr(m\cdot\pr_v)$ for some positive semidefinite linear operator $m$ (where $\pr_v$ denotes the orthogonal projector onto the space $v$). Thus, in particular, $\mu$ must have a continuous dependence on its inputs, which contradicts the fact that $\mu$ is Boolean.\\
    $(iii)$ For any $\HH$, $\Qminion_\HH$ admits a homomorphism to a linear minion $\Sminion$ introduced in~\cite{cz23soda:minions} and, independently, in~\cite{bgs_robust23stoc}, which captures the power of the \emph{standard semidefinite programming} relaxation for CSPs. The result is obtained by projecting a fixed vector of $\HH$ onto the spaces consitituing the elements of $\Qminion_\HH$. This yields a minion homomorphism from $\Qminion_\HH$ to a complex version $\Sminion_\C$ of $\Sminion$; then, we show that $\Sminion_\C$ and $\Sminion$ are in fact equivalent.\\[-1em]

The minion-theoretic results on $\Qminion_\HH$ listed above directly translate into results on the occurrence of quantum advantage. In particular, 
$(i)$ implies that quantum advantage can only happen over Hilbert spaces of dimension $3$ or higher --- an instance of the known phenomenon that strong non-locality cannot be realised by two-qubit systems, see~\cite{BrassardMT05,AbramskyBCSKM17}.
For such spaces, $(ii)$ and $(iii)$ yield new sufficient and necessary conditions for the occurrence of quantum advantage in terms of the complexity of the corresponding CSP:
$(ii)$ implies that, if $\CSP(\Y)$ is not decided by a polynomial-time algorithm, $\Y$ has quantum advantage;
$(iii)$ implies that, if $\Y$ has quantum advantage, $\CSP(\Y)$ has \emph{unbounded width}\footnote{We also describe an alternative derivation of this result, by showing that $\Qminion_\HH$ admits a homomorphism to the \emph{skeletal minion} introduced in~\cite{cz23sicomp:clap}, see Remark~\ref{rem_CLP_skeletal}. This minion corresponds to a stronger version of the \emph{singleton arc consistency} algorithm for CSPs~\cite{DB97,BD08,ChenDG13} and thus, similarly to the minion $\Sminion$ of semidefinite programming, it captures bounded-width CSPs~\cite{Kozik21:sicomp}.} --- where the width of a CSP is a central notion in the theory of CSP complexity, expressing the power of local-consistency techniques for its solution~\cite{Feder98:monotone,Barto14:jacm}. 

In the case of graphs, the combination of these results completely characterises the occurrence of quantum advantage: Using Hell--Ne\v{s}et\v{r}il's dichotomy for the complexity of graph CSPs (in Bulatov's reformulation~\cite{Bulatov05} in terms of polymorphisms), we show that, if $\dim(\HH)\geq 3$, a graph has quantum advantage if and only if it is  non-bipartite (while, if $\dim(\HH)\leq 2$, no graph has quantum advantage as stated above). This provides a complete answer to Question~\ref{question_quantum_advantage_graphs}.

Finally, we introduce a natural generalisation of quantum advantage. A structure $\Y$ having quantum advantage means that the presence of a quantum homomorphism from $\X$ to $\Y$ is not sufficient to guarantee the complete classical information $\X\to\Y$. Can it still provide some \emph{partial} classical information, in the form of a classical homomorphism $\X\to\Y'$ to a different structure $\Y'$? If the answer is negative, we say that the pair $(\Y,\Y')$ has quantum advantage.
We show that our minion framework is also able to capture this more general notion of quantum advantage. As a consequence, we connect quantum advantage for pairs of structures to the complexity of a \emph{promise} variant of CSPs formalised in~\cite{AGH17,BG21,BBKO21}.
In this way, we show that the two notions of quantum advantage do not collapse, in the sense that there exist two structures $\Y$ and $\Y'$ such that both $\Y$ and $\Y'$ have quantum advantage, but the pair $(\Y,\Y')$ does not.

\paragraph{Related work}

Understanding the power of entanglement as a resource in non-local games (like the homomorphism game described above) is a well-established research programme in quantum information and theoretical computer science. A prominent example of this line of work is the ``$\operatorname{MIP}^*=\operatorname{RE}$'' Theorem, recently proved in~\cite{JiNVWY21}, that characterises the class $\operatorname{MIP}^*$ of languages decided by a classical polynomial-time verifier interacting with multiple non-communicating computationally unbounded players sharing a finite-dimensional entangled state, as the class $\operatorname{RE}$ of recursively enumerable languages. In the same line of works, we mention~\cite{ItoV12,KempeKMTV11,Ito2008OracularizationAT,Natarajan019}.

Quantum perfect strategies in the special case of the graph $n$-colouring problem (i.e., when $\Y$ is the $n$-clique) and the corresponding notion of \emph{quantum chromatic number} were considered in~\cite{AvisHKS06,CameronMNSW07,fukawa2011quantum,MancinskaSS13,ScarpaS12,MancinskaR16,galliard2002pseudo,CleveHTW04}. 
Boolean CSPs (i.e., $\CSP(\Y)$ with $\Y$ being a Boolean relational structure) that are unsatisfiable but admit compatible assignments of self-adjoint unitary operators were characterised in~\cite{AtseriasKS19}. The relation between this notion of CSP satisfiability and quantum perfect strategies for non-local games was described in~\cite{CleveM14}, in the Boolean setting.

Unlike their non-promise variant, the complexity of promise CSPs remains largely undiscovered: Even the complexity of the promise version of graph $n$-colouring --- known as \emph{approximate graph colouring}~\cite{GJ76} --- is unknown in general.
Since the introduction of a minion-based algebraic approach to promise CSPs in~\cite{BBKO21}, the theory of minions has proved important in the investigation of this type of problems. For example, minions were used in~\cite{cz23sicomp:clap,cz23soda:minions,bgs_robust23stoc,bgwz20,dalmau2023local} to describe the power of relaxation techniques for promise CSPs, and in~\cite{KOWZ22,cz23stoc:ba,cz23soda:aip} to investigate the complexity and tractability under certain relaxation models of approximate graph colouring. 
Transferring the geometric techniques we develop in this work for describing the structure of the quantum minion
to other types of minions used in the theory of promise CSP complexity appears to be an interesting direction for future investigation.

Finally, we note that the dual version of Question~\ref{question_quantum_advantage_graphs} --- characterise the structures $\X$ for which there exists some $\Y$ such that $\X$ and $\Y$ are quantum but not classically homomorphic --- corresponds to the so-called \emph{left-hand side} CSP (LCSP), where $\X$ is a fixed parameter and $\Y$ is the input. For a single $\X$, LCSP is always trivially decidable in polynomial time. Hence, LCSP complexity questions are only studied in the setting where the left-hand side is an infinite class of structures, and their analysis is not captured by the algebraic approach~\cite{DalmauKV02,Grohe07}. Therefore, our methods are inapplicable for characterising this dual type of quantum advantage.

\section{Preliminaries}
\noindent By $\N$, $\R$, and $\C$ we denote the sets of positive integer numbers, real numbers, and complex numbers, respectively. For $\ell\in\N$, $[\ell]$ is the set $\{1,2,\dots,\ell\}$.
For $i\leq\ell\in\N$, $\be_{i;\ell}$ is the $i$-th standard unit vector of length $\ell$ (i.e., the $i$-th entry of $\be_{i;\ell}$ is $1$ and all other entries are $0$). 
Given $\ell\in\N$ and a set $S$, $S^\ell$ denotes the set of tuples of elements of $S$ of length $\ell$. 
For a complex matrix $M$, we denote the transpose and the conjugate transpose of $M$ by the symbols $M^\top$ and $M^*$, respectively.

\subsection{Finite-dimensional Hilbert spaces}
Throughout this work, we let $\HH$ denote a finite-dimensional Hilbert space (i.e., a finite-dimensional complex or real vector space equipped with an inner product $\ang{\cdot,\cdot}$).  
By $\dim(\HH)$ we denote the dimension of $\HH$, while $\bzero_\HH$ indicates the zero vector in $\HH$. To avoid trivial cases, we require $\HH\neq\{\bzero_\HH\}$.
Two subsets $v,w$ of $\HH$ are \emph{orthogonal} (in symbols, $v\perp w$) if $\ang{\bv,\bw}=0$ for each $\bv\in v$, $\bw\in w$. By $v^\perp$ we denote the \emph{orthogonal complement} of $v$; i.e., the set of all vectors $\bw\in\HH$ such that $\ang{\bv,\bw}=0$ for each $\bv\in v$. The set of all linear maps from $\HH$ to itself shall be denoted by $\End_\HH$.
Given $p,p'\in\End_\HH$, $p+p'$ denotes
their \emph{sum},
$pp'$ (or sometimes, for typographical convenience, $p\cdot p'$) denotes 
their \emph{product} or \emph{composition}, while 
$[p,p']=pp'-p'p$ is their \emph{commutator}.
The \emph{adjoint} of $p\in\End_\HH$ is the unique map $p^*\in\End_\HH$ satisfying $\ang{\bh,p^*(\bh')}=\ang{p(\bh),\bh'}$ for each $\bh,\bh'\in\HH$.
$p$ is \emph{self-adjoint} if $p^*=p$, and it is
an \emph{orthogonal projector} if it is self-adjoint and idempotent (i.e., if $p^2=p=p^*$). 
$\Proj_\HH$ denotes the set of all orthogonal projectors in $\End_\HH$, while $L_\HH$ denotes the set of all linear subspaces of $\HH$.
The function $\rg_{\bullet}$ sending a linear map $p\in\End_\HH$ to its range $\rg_p\subseteq\HH$ establishes a bijection between $\Proj_\HH$ and $L_\HH$. We denote the inverse of this bijection by $\pr_{\bullet}$ (i.e., for any $v\in L_\HH$, $\pr_v\in\Proj_\HH$ is the orthogonal projector onto $v$).
By $\id_\HH$ and $\zeroOp_{\HH}$ we denote the identity and zero maps on $\HH$, respectively.

\subsection{Relational structures}
A \emph{signature} $\sigma$ is a finite set of relation symbols $R$, each with its \emph{arity} $\ar(R)\in\N$.
A \emph{relational structure} $\X$ with signature $\sigma$ (in short, a $\sigma$-structure) consists of a set $X$ (the \emph{domain}\footnote{Unless otherwise stated, the domains of all relational structures appearing in this work shall be assumed to be finite.} of $\X$) and a relation $R^\X\subseteq X^{\ar(R)}$ for each symbol $R\in\sigma$. If $\sigma$ contains a unique symbol $R$ of arity $2$, and the relation $R^\X$ is symmetric and irreflexive (i.e., $(x,x')\in R^\X$ implies that $(x',x)\in R^\X$ and $x\neq x'$), $\X$ is called a (simple) \emph{graph}.
Two relational structures $\X,\Y$ are \emph{similar} if they have the same signature. In this case, a \emph{homomorphism} from $\X$ to $\Y$ is a map $f:X\to Y$ that preserves all relations; i.e., $f(\bx)\in R^{\Y}$ for each $\bx\in R^\X$, where $f$ is applied entrywise to the entries of $\bx$. We denote the existence of a homomorphism from $\X$ to $\Y$ by the expression $\X\to\Y$.

Given a $\sigma$-structure $\Y$, the \emph{constraint satisfaction problem} parameterised by $\Y$ (in short, $\CSP(\Y)$) is the following computational problem: Given as input a $\sigma$-structure $\X$, output $\YES$ if $\X\to\Y$, and $\NO$ if $\X\not\to\Y$.\footnote{This is the \emph{decision} version of $\CSP(\Y)$. In the \emph{search} version, the problem is to find an explicit homomorphism $f:\X\to\Y$ for any input $\X$ such that $\X\to\Y$. The two versions of $\CSP$ are equivalent up to polynomial-time reductions~\cite{BulatovJK05}.} 
For $\ell\in\N$, we denote by $\Y^\ell$ the $\ell$-th \emph{direct power} of $\Y$; i.e., $\Y^\ell$ is the $\sigma$-structure whose domain is $Y^\ell$ and whose relations are defined as follows: Given $R\in\sigma$  and any $\ell\times \ar(R)$ matrix $M$ whose rows are tuples in $R^\Y$, the columns of $M$ form a tuple in $R^{\Y^\ell}$. A \emph{polymorphism} of $\Y$ (of arity $\ell$) is a homomorphism from $\Y^\ell$ to $\Y$. By $\Pol(\Y)$ (resp. $\Pol^{(\ell)}(\Y)$) we denote the set of all polymorphisms (resp. all polymorphisms of arity $\ell$) of $\Y$.

\subsection{Quantum advantage}
\label{subsec_prelim_quantum_advantage}

Given two $\sigma$-structures $\X$ and $\Y$, 
consider a family $\{p_{x,y}\}_{x\in X,\,y\in Y}$ of orthogonal projectors in $\Proj_\HH$. For each $R\in\sigma$, $\bx\in X^{\ar(R)}$, and $\by\in Y^{\ar(R)}$, define $p_{\bx,\by}=p_{x_1,y_1}\cdot p_{x_2,y_2}\cdot\ldots\cdot p_{x_{\ar(R)},y_{\ar(R)}}$. Consider the following three conditions:
\begin{enumerate}
    \item[(Q1)] $\sum_{y\in Y}p_{x,y}=\id_\HH$ for each $x\in X$;
    \item[(Q2)] $[p_{x_i,y},p_{x_j,y'}]=\zeroOp_\HH$ for each $R\in\sigma$, $\bx\in R^\X$, $i,j\in [\ar(R)]$, and $y,y'\in Y$;
    \item[(Q3)] $p_{\bx,\by}=\zeroOp_\HH$ for each $R\in\sigma$, $\bx\in R^\X$, and $\by\in Y^{\ar(R)}\setminus R^\Y$.
\end{enumerate}
Following~\cite{abramsky2017quantum}, we say that $\X$ and $\Y$ are \emph{quantum homomorphic} over $\HH$ (and write $\X\qTo{\HH}\Y$) if there exists a family $\{p_{x,y}\}_{x\in X,\,y\in Y}$ having the three properties listed above. It was shown in~\cite{abramsky2017quantum} that $\X\qTo{\HH}\Y$ if and only if there exists a \emph{quantum perfect strategy} for the $\X\Y$-homomorphism game mentioned in the Introduction. Observe that $\X\to\Y$ always implies $\X\qTo{\HH}\Y$. 
Indeed, if $f:\X\to\Y$ is a homomorphism, letting $p_{x,y}=\id_\HH$ if $y=f(x)$ and $p_{x,y}=\zeroOp_\HH$ otherwise yields a proper quantum homomorphism.
We say that $\Y$ \emph{has quantum advantage over $\HH$} if the converse is not true; i.e., if there exists a $\sigma$-structure $\X$ such that $\X\qTo{\HH}\Y$ but $\X\not\to\Y$. In terms of the $\X\Y$-homomorphism game, this means that Alice and Bob can win the game using a quantum-assisted strategy, but they cannot using a classical strategy.\footnote{We point out that, if $\X$ and $\Y$ are graphs, the definition of quantum homomorphism from~\cite{abramsky2017quantum} is more restrictive than the one given in~\cite{MancinskaRoberson16}, in that the latter does not enforce the commutativity condition (Q2) among the projectors. Nevertheless, in the case of graphs, our results hold for both notions of quantum homomorphisms, see Remark~\ref{rem_MR_homomorphisms}.}
\subsection{Minions}
A minion $\Mminion$ is the disjoint union of non-empty sets $\Mminion^{(\ell)}$ for $\ell\in\N$, equipped with operations $(\cdot)_{/\pi}:\Mminion^{(\ell)}\to \Mminion^{(\ell')}$ (known as \emph{minor operations}) for each pair of integers $\ell,\ell'\in\N$ and each map
$\pi:[\ell]\to [\ell']$, which must satisfy the following two conditions:
\begin{itemize}
    \item $M_{/\id}=M$ 
    for each $\ell\in\N$ and $M\in\Mminion^{(\ell)}$, where $\id$ is the identity map on the set $[\ell]$;
    \item $(M_{/\pi})_{/\tilde\pi}=M_{/\tilde\pi\circ\pi}$
     for each $\ell,\ell',\ell''\in\N$, $M\in\Mminion^{(\ell)}$, $\pi:[\ell]\to[\ell']$,
    and $\tilde\pi:[\ell']\to[\ell'']$.\footnote{Equivalently, we may define a minion as a functor from the skeleton category of non-empty finite sets to the category of non-empty sets. The definition of minion as we present it here was introduced in~\cite{bgwz20}, and it is an abstraction of the \emph{minor-closed classes of functions} considered in~\cite{Pippenger02} to study a Galois connection between functions and relations. In fact, the latter objects are precisely the function minions of Example~\ref{example_function_minion}.}
\end{itemize}
Given $M\in\Mminion$, the index $\ell$ for which $M\in\Mminion^{(\ell)}$ is referred to as the \emph{arity} of $M$. A \emph{subminion} of a minion $\Mminion$ is a non-empty subset of $\Mminion$ that is closed under the minor operations of $\Mminion$.
In this work, we shall consider two concrete types of minions, described in the two examples below.

\begin{example}[Function minions]
\label{example_function_minion}
Take two sets $S$ and $T$. For any integer $\ell\in\N$, consider the set $\mathscr{F}^{(\ell)}_{S,T}$ of all functions from $S^\ell$ to $T$. Given $f\in\mathscr{F}^{(\ell)}_{S,T}$ and $\pi:[\ell]\to[\ell']$, we let $f_{/\pi}\in\mathscr{F}^{(\ell')}_{S,T}$ be the function defined as follows:
\begin{align*}
f_{/\pi}:S^{\ell'}&\to T\\
    (s_1,\dots,s_{\ell'})&\mapsto f(s_{\pi(1)},\dots,s_{\pi(\ell)}).
\end{align*}
The set $\mathscr{F}_{S,T}$ consisting in the disjoint union of all sets $\mathscr{F}^{(\ell)}_{S,T}$ for $\ell\in\N$ is easily seen to be a minion. A \emph{function minion} (over the sets $S$ and $T$) is any subminion of $\mathscr{F}_{S,T}$.
\end{example}
\begin{example}[Linear minions]
\label{example_linear_minions}
Let $\mathcal{A}$ be an abelian monoid --- i.e., a set equipped with a binary operation ``$+_\mathcal{A}$'' that is associative and commutative, and admits an identity element ``$0_\mathcal{A}$''.
Following~\cite{cz23soda:minions}, we let a \emph{linear minion} over $\mathcal{A}$ be a non-empty set $\Mminion$ of matrices having entries in $\mathcal{A}$ that is closed under the following elementary row operations:
\begin{itemize}
    \item swapping two rows;
    \item replacing two rows with their sum;
    \item inserting an extra zero row.
\end{itemize}
(Here, the sum is the entrywise application of the ``$+_\mathcal{A}$'' operation, while the zero row is a tuple of copies of ``$0_{\mathcal{A}}$''.) We let $\Mminion^{(\ell)}$ be the set of all matrices in $\Mminion$ having exactly $\ell$-many rows. Moreover, given a matrix $M\in\Mminion^{(\ell)}$ and a map $\pi:[\ell]\to[\ell']$, we define $M_{/\pi}$ as the matrix whose $i$-th row, for $i\in[\ell']$, is the sum of all rows of $M$ having index in $\pi^{-1}(i)$; i.e., 
\begin{align*}
\be_{i;\ell'}^{\top}M_{/\pi}
    =
    \sum_{j\in\pi^{-1}(i)}\be_{j;\ell}^{\top}M.
\end{align*}
Starting from $M$, one can realise $M_{/\pi}$ through a sequence of the elementary row operations listed above. Therefore, $M_{/\pi}\in\Mminion^{(\ell')}$. Furthermore, this choice for the minor operations guarantees that $\Mminion$ is a minion.
\end{example}

Let $\Mminion$ and $\Nminion$ be two minions. A \emph{minion homomorphism} $\xi:\Mminion\to\Nminion$ is a map from the underlying set of $\Mminion$ to the underlying set of $\Nminion$ that
\begin{itemize}
    \item preserves the arity --- i.e., $\xi(M)\in\Nminion^{(\ell)}$ if $M\in\Mminion^{(\ell)}$;
    \item preserves the minors --- i.e., $\xi(M_{/\pi})=\xi(M)_{/\pi}$ if $M\in\Mminion^{(\ell)}$ and  $\pi:[\ell]\to[\ell']$.\footnote{Equivalently, a minion homomorphism $\xi:\Mminion\to\Nminion$
    is a natural transformation from $\Mminion$ to $\Nminion$.}
\end{itemize}
Given a relational structure $\Y$, the set $\Pol(\Y)$ of all polymorphisms of $\Y$ is a subset of $\mathscr{F}_{Y,Y}$ (as defined in Example~\ref{example_function_minion}), it is non-empty (since it contains the identity homomorphism), and it is closed under the minor operations of $\mathscr{F}_{Y,Y}$; hence, it is a (function) minion.
Polymorphism minions capture the complexity of CSPs, in the sense of the following result.
\begin{thm}[\cite{BOP18}]
\label{thm_minion_homos_reductions}
    Let $\Y$ and $\Y'$ be two relational structures. If there exists a minion homomorphism $\Pol(\Y)\to\Pol(\Y')$, then $\CSP(\Y')$ reduces in polynomial time to $\CSP(\Y)$.\footnote{In fact, a log-space reduction from $\CSP(\Y')$ to $\CSP(\Y)$ exists.}
\end{thm}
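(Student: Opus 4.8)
The plan is to build, from a minion homomorphism $\xi:\Pol(\Y)\to\Pol(\Y')$, an explicit gadget reduction transforming an instance $\X$ of $\CSP(\Y')$ into an instance $\Z=\Z(\X)$ of $\CSP(\Y)$ such that $\X\to\Y'$ iff $\Z\to\Y$. The natural choice is the \emph{free structure} construction: the domain of $\Z$ consists of all functions $X\to Y'$ that arise as "columns" of tuples witnessing membership in the relations of $\X$ — equivalently, one works with $\Pol^{(\ast)}$ data indexed by $\X$. Concretely, first I would fix an enumeration and think of a candidate homomorphism $\X\to\Y'$ of "arity $|X|$" as an element of $Y'^{X}$, and build $\Z$ whose domain is a suitable set $D\subseteq (Y')^{X}$ — namely those maps compatible with every constraint of $\X$ — with relations inherited from $\Y'^{|X|}$. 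The key point (this is the standard free-structure/ Pultr-functor machinery, but I would reprove it in the height-1 language) is that homomorphisms $\Z\to\Y$ correspond exactly to arity-consistent families assigning to each "column type" in $\Z$ a value in $Y$ in a way that respects the relations of $\Y$, i.e.\ to polymorphism-like data of $\Y$ indexed by $\X$, and such data composed along the constraints of $\X$ produces a homomorphism $\X\to\Y$.

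The heart of the argument is the two directions of the equivalence $\X\to\Y'\Leftrightarrow \Z\to\Y$. For the forward direction, given a homomorphism $h:\X\to\Y'$, the map sending a column type in $\Z$ (a tuple indexed by $X$) to its $h$-evaluation lands in $\Y$ after we push it through $\xi$: more precisely, the trivial/dictator polymorphisms of $\Y'$ pulled back appropriately give, via the identity component of $\xi$, a consistent assignment, and arity-preservation plus minor-preservation of $\xi$ guarantee this assignment respects every relation of $\Z$. For the reverse direction, a homomorphism $g:\Z\to\Y$ restricts, for each variable $x\in X$, to the "coordinate-$x$" sub-object of $\Z$; using the minor operation that projects onto coordinate $x$ and the fact that $\xi$ commutes with minors, one extracts from $g$ a well-defined value in $Y'$ for each $x$ — this is where one reads off a classical homomorphism $\X\to\Y'$, and relation-preservation of $g$ together with minor-preservation of $\xi$ forces it to be a genuine homomorphism. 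Both directions are essentially diagram chases once the free structure is set up correctly; the reduction is plainly polynomial-time (indeed log-space) since $|D|\le |Y'|^{|X|}$ is bounded by a polynomial in $\|\X\|$ when $Y'$ is fixed — one must be slightly careful here and take $D$ to be generated only by the columns actually appearing in constraints of $\X$, not all of $(Y')^X$, to keep the size polynomial; equivalently one uses the indicator-structure / canonical-instance formulation.

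The step I expect to be the main obstacle is getting the bookkeeping of the free structure exactly right so that "homomorphisms out of $\Z$" is literally in bijection with the minion-indexed data — in particular ensuring arity-consistency is automatically enforced by the relations of $\Z$ rather than having to be imposed by hand, and checking that the size stays polynomial. This is precisely the content of the Birkhoff-type/HSP-style lemma underpinning \cite{BOP18, bgwz20}, and I would either cite it or spend a page reproving it. Once that correspondence is in place, invoking $\xi$ to transport the data from $\Pol(\Y)$-indexed families (for $g:\Z\to\Y$) to $\Pol(\Y')$-indexed families (for $h:\X\to\Y'$) and back is the routine part, using only arity- and minor-preservation. I would therefore structure the write-up as: (1) define $\Z=\Z(\X)$ via the free/canonical construction and verify $|\Z|=\mathrm{poly}(\|\X\|)$; (2) state and prove the correspondence between $\Hom(\Z,\Y)$ and certain $\Pol(\Y)$-valued assignments on $\X$; (3) conclude using $\xi$ that $\X\to\Y'$ iff $\Z\to\Y$, giving the desired polynomial-time (log-space) reduction.
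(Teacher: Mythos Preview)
The paper does not prove this theorem; it is stated with a citation to \cite{BOP18} and used as a black box throughout. There is no proof in the paper to compare your sketch against.

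Evaluating your sketch on its own merits: the high-level plan --- build a gadget instance $\mathbf{Z}(\X)$ and use $\xi$ to transport between $\Pol(\Y)$-indexed and $\Pol(\Y')$-indexed data --- is correct in spirit, and your references to free structures and the Birkhoff-type correspondence point at the right machinery. But the concrete construction you write down is garbled. Taking the domain of $\mathbf{Z}$ inside $(Y')^{X}$ with ``relations inherited from $(\Y')^{|X|}$'' produces a structure in the signature of $\Y'$, not of $\Y$, so it is not even an admissible instance of $\CSP(\Y)$; and $|Y'|^{|X|}$ is exponential in $|X|$, not polynomial --- you notice this, but ``restrict to the columns appearing in constraints of $\X$'' neither bounds the domain polynomially nor repairs the signature mismatch. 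The argument in \cite{BOP18} instead passes through \emph{pp-constructibility}: one shows that a minion homomorphism $\Pol(\Y)\to\Pol(\Y')$ is equivalent to $\Y'$ being homomorphically equivalent to a pp-power $\Y^{(d)}$ of $\Y$, where $d$ is a constant depending only on $\Y,\Y'$. The reduction then replaces each variable of $\X$ by a $d$-tuple of fresh variables and each $\Y'$-constraint by its pp-definition over $\Y$, yielding $|\mathbf{Z}|=d\cdot|X|$ --- that is where the polynomial (indeed log-space) bound actually comes from.
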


\section{The quantum minion}
\label{sec_quantum_minion}
In this section, we describe a new minion that captures quantum advantage. The geometric properties of this minion 
shall be used in the later sections to obtain information on the occurrence of quantum advantage.

Recall that $L_\HH$ denotes the set of all linear subspaces of $\HH$. Given $v,w\in L_\HH$, we let $v\bplus w$ denote their sum as linear spaces; i.e.,  $v\bplus w=\{\bv+\bw\mid\bv\in v,\bw\in w\}$. Let $\mathcal L_\HH$ denote the set $L_\HH$ equipped with the operation ``$\bplus$'', and observe that $\mathcal L_\HH$ is an abelian monoid, with the subspace $\{\bzero_\HH\}$ as the identity element. 
For $\ell\in\N$, consider the set 
\begin{align*}
    \Qminion_\HH^{(\ell)}
    =
    \left\{
    \begin{array}{llll}
         \bq=
\begin{bmatrix}
q_1\\\vdots\\q_\ell
\end{bmatrix}
 \;\;\;\mbox{such that}\;\;\;
\begin{array}{lll}
     q_i\in L_\HH \mbox{ for each } i,  \\[5pt]
    q_i\perp q_j \mbox{ for each } i\neq j,\\[5pt]
    \displaystyle\bigboxplus_{i\in [\ell]}q_i=\HH
\end{array} 
    \end{array}
    \right\}.
\end{align*}
We define the \emph{quantum minion} $\Qminion_\HH$ as the disjoint union of the sets $\Qminion_\HH^{(\ell)}$ for $\ell\in\N$.
\begin{lem}
    $\Qminion_\HH$ is a linear minion over the abelian monoid $\mathcal L_\HH$.\footnote{In fact, $\Qminion_\HH$ belongs to a particular class of linear minions known as \emph{conic minions}, introduced in~\cite{cz23soda:minions} (see also~\cite{dalmau2023local}).}
\end{lem}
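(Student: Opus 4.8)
The plan is to verify that $\Qminion_\HH$ satisfies the definition of a linear minion over $\mathcal L_\HH$ as given in Example~\ref{example_linear_minions}. Since each $\Qminion_\HH^{(\ell)}$ consists of column vectors (i.e.\ $\ell\times 1$ matrices) with entries in $L_\HH$, and $\mathcal L_\HH$ is already noted to be an abelian monoid, it suffices to check two things: that the disjoint union $\Qminion_\HH=\bigsqcup_\ell\Qminion_\HH^{(\ell)}$ is non-empty, and that it is closed under the three elementary row operations (swapping two rows, replacing two rows by their sum, inserting a zero row). Non-emptiness is immediate, e.g.\ the arity-$1$ element $\bq=[\HH]$ lies in $\Qminion_\HH^{(1)}$. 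The bulk of the argument is the closure check.

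First I would fix $\bq=(q_1,\dots,q_\ell)^\top\in\Qminion_\HH^{(\ell)}$, so the $q_i$ are pairwise orthogonal subspaces of $\HH$ with $\bigboxplus_{i\in[\ell]} q_i=\HH$. For \emph{swapping two rows}: permuting the coordinates of $\bq$ preserves pairwise orthogonality and preserves the total sum, so the result is again in $\Qminion_\HH^{(\ell)}$. For \emph{inserting a zero row}: the zero row of $\mathcal L_\HH$ is the subspace $\{\bzero_\HH\}$; adding a coordinate equal to $\{\bzero_\HH\}$ keeps all coordinates pairwise orthogonal (since $\{\bzero_\HH\}\perp v$ for every $v$) and does not change $\bigboxplus_i q_i$, so the result lies in $\Qminion_\HH^{(\ell+1)}$. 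For \emph{replacing two rows by their sum}: say we replace $q_i,q_j$ (with $i\neq j$) by the single coordinate $q_i\bplus q_j$. The key observation is that, because $q_i\perp q_j$, for any $k\notin\{i,j\}$ we have $q_k\perp q_i$ and $q_k\perp q_j$, hence $q_k\perp(q_i\bplus q_j)$: indeed $\langle\bv,\,\bw_i+\bw_j\rangle=\langle\bv,\bw_i\rangle+\langle\bv,\bw_j\rangle=0$ for $\bv\in q_k$, $\bw_i\in q_i$, $\bw_j\in q_j$. So the resulting tuple still has pairwise orthogonal coordinates, and its total sum is unchanged because $\bplus$ is associative and commutative, so it still equals $\HH$. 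Hence the result lies in $\Qminion_\HH^{(\ell-1)}$.

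Putting these together, $\Qminion_\HH$ is closed under all three elementary row operations and is non-empty, so by Example~\ref{example_linear_minions} it is a linear minion over $\mathcal L_\HH$; in particular the minor operation $\bq_{/\pi}$ is the tuple whose $i$-th coordinate is $\bigboxplus_{j\in\pi^{-1}(i)} q_j$. I expect the only mildly delicate point — and the one worth spelling out carefully — is the compatibility of orthogonality with the monoid sum, i.e.\ that $u\perp v$ and $u\perp w$ imply $u\perp(v\bplus w)$; everything else (permutation invariance, behaviour of the zero subspace, associativity/commutativity of $\bplus$) is routine. One small subtlety to flag explicitly is that, since $\pi$ need not be injective, applying a minor operation may collapse \emph{several} coordinates into one; this is handled by iterating the "replace two rows by their sum" operation together with "insert a zero row" (to handle indices $i\in[\ell']$ with empty preimage, whose coordinate becomes $\{\bzero_\HH\}=\bigboxplus_{\emptyset}$), exactly as in the general argument of Example~\ref{example_linear_minions}, so no additional work is needed beyond the three closure checks above.
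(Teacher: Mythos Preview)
Your proposal is correct and follows exactly the same approach as the paper: verify that the elements of $\Qminion_\HH$ are column vectors over $\mathcal L_\HH$ and that the set is closed under the three elementary row operations of Example~\ref{example_linear_minions}. The paper's proof simply asserts this closure without spelling out the orthogonality check you give for the ``replace two rows by their sum'' operation, so your version is a more detailed rendering of the same argument.
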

\begin{proof}
All elements of $\Qminion_\HH$ are vectors (i.e., matrices having a single column) of spaces in $L_\HH$. The statement immediately follows by observing that $\Qminion_\HH$ is closed under the three elementary row operations of Example~\ref{example_linear_minions}.
\end{proof}
Using the quantum minion, we will show that quantum advantage is a property of the polymorphism minion of relational structures. Formally, 
we will prove the next result.
\begin{thm}
\label{thm_quantum_trivial_iff_minion_homo}
    A relational structure $\Y$ has quantum advantage over $\HH$ if and only if $\Qminion_\HH\not\to\Pol(\Y)$.
\end{thm}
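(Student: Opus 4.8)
The plan is to show both directions of the equivalence by exploiting the standard dictionary between minion homomorphisms, ``primitive positive'' constructions, and free structures, adapted to the quantum setting. The key observation is that quantum homomorphisms $\X\qTo{\HH}\Y$ should be recast as minion homomorphisms into $\Pol(\Y)$ out of a minion built from $\X$ and $\HH$. Concretely, I expect the following setup: for every relational structure $\X$, the family $\{p_{x,y}\}$ witnessing $\X\qTo{\HH}\Y$, when read ``columnwise'' (fixing $x$ and letting $y$ range over $Y$), produces for each $x\in X$ a tuple of pairwise orthogonal subspaces $(\rg_{p_{x,y}})_{y\in Y}$ summing to $\HH$ — that is, an element of $\Qminion_\HH^{(|Y|)}$. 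Conditions (Q1), (Q2), (Q3) are exactly what is needed to make these assignments compatible with the relational structure of $\X$, hence to define a homomorphism from $\X$ into a ``quantum power'' of $\Y$ built over $\Qminion_\HH$.

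The cleanest way to carry this out is via a free-structure construction. Define, for a minion $\Mminion$ and a structure $\Y$, the free structure $\mathbb{F}_\Mminion(\Y)$ in the usual way (as in \cite{BBKO21}), and recall the standard fact that for any structures $\X,\Y$ one has $\X\to\mathbb{F}_\Mminion(\Y)$ if and only if there is a minion homomorphism from a certain ``indicator'' minion of $(\X,\Y)$ into $\Mminion$; moreover $\Mminion\to\Pol(\Y)$ iff $\mathbb{F}_\Mminion(\Y)\to\Y$. The core lemma I would prove is: \emph{$\X\qTo{\HH}\Y$ if and only if $\X\to\mathbb{F}_{\Qminion_\HH}(\Y)$.} Granting this lemma, the theorem follows immediately: $\Y$ has quantum advantage over $\HH$ iff there exists $\X$ with $\X\qTo{\HH}\Y$ but $\X\not\to\Y$, which by the lemma is iff $\mathbb{F}_{\Qminion_\HH}(\Y)\not\to\Y$ (taking $\X=\mathbb{F}_{\Qminion_\HH}(\Y)$ for the ``only if'', and using that $\X\to\mathbb{F}_{\Qminion_\HH}(\Y)$ and $\mathbb{F}_{\Qminion_\HH}(\Y)\to\Y$ would compose for the ``if''), which by the standard characterisation is iff $\Qminion_\HH\not\to\Pol(\Y)$.

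The steps, in order, would be: (1) spell out the free structure $\mathbb{F}_{\Qminion_\HH}(\Y)$, whose domain is $\Qminion_\HH^{(|Y|)}$ (tuples of orthogonal subspaces indexed by $Y$, summing to $\HH$) and whose relations are defined by the condition that, for $R\in\sigma$ of arity $r$, a tuple $(\bq^{(1)},\dots,\bq^{(r)})$ of elements lies in $R^{\mathbb{F}}$ iff the subspaces can be ``jointly refined'' according to $R^\Y$ — precisely, iff there is a family $\{v_{\bar y}\}_{\bar y\in Y^r}$ of pairwise orthogonal subspaces with $v_{\bar y}=\{\bzero_\HH\}$ whenever $\bar y\notin R^\Y$, and $q^{(k)}_y=\bigboxplus_{\bar y: y_k=y} v_{\bar y}$; (2) show the forward implication of the core lemma: given projectors $\{p_{x,y}\}$ satisfying (Q1)–(Q3), define $f(x)=(\rg_{p_{x,y}})_{y\in Y}\in\Qminion_\HH^{(|Y|)}$, check via (Q1) that this lands in the domain, and via (Q2)+(Q3) that it preserves each relation, using that commuting projectors have ranges that decompose along a common refinement (here (Q2) gives joint diagonalisability along each constraint tuple, and (Q3) forces the ``bad'' components $v_{\bar y}$, $\bar y\notin R^\Y$, to vanish); (3) show the reverse implication: a homomorphism $f:\X\to\mathbb{F}_{\Qminion_\HH}(\Y)$ gives, for each $x$, a tuple of orthogonal subspaces; set $p_{x,y}$ to be the projector onto the $y$-th component; (Q1) is the ``sum $=\HH$'' condition, and the relation-preservation of $f$ yields exactly the common refinement needed to verify (Q2) and (Q3); (4) invoke the standard equivalences $[\,\X\to\mathbb{F}_\Mminion(\Y)\ \forall\X \iff \mathbb{F}_\Mminion(\Y)\to\Y\,]$ and $[\,\mathbb{F}_\Mminion(\Y)\to\Y \iff \Mminion\to\Pol(\Y)\,]$ from the algebraic theory (the latter being essentially \cite{BBKO21,BOP18}) to conclude.

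The main obstacle I anticipate is step (2)–(3), specifically getting the bookkeeping right so that the commutativity condition (Q2) translates cleanly into the existence of the common orthogonal refinement $\{v_{\bar y}\}$ of the subspace tuples along a constraint. One has to be careful that (Q2) is only assumed for pairs of coordinates $i,j$ appearing in a common tuple $\bx\in R^\X$, not globally, so the refinement is constraint-local, not global — this is exactly why the free structure's relations must be defined constraint-by-constraint rather than demanding a single simultaneous diagonalisation. A secondary technical point is verifying that $p_{\bx,\by}=\pr_{v_{\by}}$ (or is $\zeroOp_\HH$ when the $v$'s don't align), so that (Q3) is equivalent to killing the off-diagonal-relation components; this uses that a product of commuting orthogonal projectors is the projector onto the intersection of their ranges. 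Once this correspondence is nailed down, everything else is a routine application of the free-structure machinery already available in the literature cited in the excerpt.
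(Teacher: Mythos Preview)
Your approach is essentially the paper's: the ``core lemma'' you propose is exactly Proposition~\ref{prop_quantum_homo_as_test}, and your step~(4) is Theorem~\ref{thm_minion_test_solvability_homo}. Your sketch of the proof of the core lemma (steps~(2)--(3)) also matches the paper's proof in the appendix.

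One genuine technical slip: in your second paragraph you write ``taking $\X=\mathbb{F}_{\Qminion_\HH}(\Y)$ for the `only if'\,''. This does not work directly, because quantum advantage is defined in terms of \emph{finite} structures~$\X$, while $\mathbb{F}_{\Qminion_\HH}(\Y)$ is infinite whenever $\dim(\HH)\geq 2$ and $|Y|\geq 2$ (the paper explicitly notes this). So the step ``$\mathbb{F}_{\Qminion_\HH}(\Y)\not\to\Y$ implies some finite $\X$ witnesses quantum advantage'' needs a compactness-type argument. This is precisely what is packaged inside Theorem~\ref{thm_minion_test_solvability_homo} (the result you cite in step~(4)), so the gap is filled by the literature you invoke --- but your explicit ``take $\X$ equal to the free structure'' shortcut is not itself valid, and you should instead appeal directly to that theorem rather than re-deriving it via the infinite $\X$.
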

The proof of Theorem~\ref{thm_quantum_trivial_iff_minion_homo} makes use of two ingredients: the \emph{free structures} from~\cite{BBKO21} and the \emph{minion tests} from~\cite{cz23soda:minions}.

Let $\Mminion$ be a minion and let $\Y$ be a $\sigma$-structure. 
On a high level, the free structure $\freeM(\Y)$ of $\Mminion$ generated by $\Y$ is obtained by simulating the relations of $\Y$ on a domain consisting of elements of
$\Mminion$. Formally, following~\cite{BBKO21}, we let $\freeM(\Y)$ be the (potentially infinite) $\sigma$-structure whose domain is the set $\Mminion^{(n)}$ (where $n$ is the domain size of $\Y$) and whose relations are described as follows. For each symbol $R\in\sigma$, %let $r$ be the arity of $R$ and 
let $m$ be the cardinality of $R^\Y$.
A tuple $(M_1,\dots,M_{\ar(R)})$ of elements
of $\Mminion^{(n)}$ belongs to $R^{\freeM(\Y)}$ if and only if there exists some
$\tilde M\in \Mminion^{(m)}$ such that $M_i=\tilde M_{/\pi_{i}}$ for each $i\in[\ar(R)]$, where
$\pi_{i}:[m]\to[n]$ is the function mapping each tuple $\by\in R^\Y$ to its $i$-th entry $y_i$.\footnote{Here we are implicitly identifying $Y$ with $[n]$ and $R^\Y$ with $[m]$, in order to avoid introducing extra notation for the bijections between those sets.}
Notice that the domain of $\freeM(\Y)$ can be infinite even when the domain of $\Y$ is finite. In particular, this is the case for $\Mminion=\Qminion_\HH$ (except in the cases where the dimension of $\HH$ or the domain size of $\Y$ equals $1$).

The next result, whose proof is deferred to Appendix~\ref{appendix_omitted_proofs}, shows that quantum homomorphisms correspond to classical homomorphisms to the free structure of the quantum minion.\footnote{The idea of expressing quantum homomorphisms as classical homomorphisms to a modified structure already appeared in~\cite{MancinskaRoberson16,abramsky2017quantum}. More precisely,
in the case of graphs, the structure $\freeQH(\Y)$ of Proposition~\ref{prop_quantum_homo_as_test} corresponds to the graph $M(\Y,d)$ from~\cite[Theorem~2.9]{MancinskaRoberson16}, where $d=\dim(\HH)$. Moreover, for arbitrary relational structures, it corresponds to the structure $\mathcal{Q}_d\Y$, where $\mathcal{Q}_d$ is the \emph{quantum monad} introduced in~\cite{abramsky2017quantum}.
In our setting, however, it is crucial to formulate Proposition~\ref{prop_quantum_homo_as_test} in terms of the free structure of a minion.
This allows us to use the notion of minion tests and, through Theorem~\ref{thm_minion_test_solvability_homo}, to express quantum advantage as a property of the polymorphism minion (Theorem~\ref{thm_quantum_trivial_iff_minion_homo}), thus linking it to the complexity of the corresponding CSP.}
\begin{prop}
\label{prop_quantum_homo_as_test}
    Let $\X,\Y$ be similar relational structures. Then $\X\qTo{\HH}\Y$ if and only if $\X\to\freeQH(\Y)$.
\end{prop}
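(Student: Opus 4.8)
The plan is to unwind both definitions and check that a quantum homomorphism family $\{p_{x,y}\}$ is exactly the same data as a classical homomorphism $h\colon\X\to\freeQH(\Y)$, using the dictionary ``projectors $\leftrightarrow$ their ranges'' supplied by the bijection $\rg_\bullet\colon\Proj_\HH\to L_\HH$. Concretely, given a family $\{p_{x,y}\}_{x\in X,y\in Y}$ satisfying (Q1)--(Q3), I would define $h(x)\in\Qminion_\HH^{(n)}$ (with $n=|Y|$) to be the column vector whose $y$-th entry is $\rg_{p_{x,y}}$. I first check this lands in $\Qminion_\HH^{(n)}$: condition (Q1), together with pairwise commutativity of the $p_{x,y}$ for fixed $x$ (which follows from (Q2) applied with $R$ a relation containing a tuple in which $x$ appears, or is automatic when we observe that a resolution of the identity into projectors summing to $\id_\HH$ must be pairwise orthogonal), forces $\sum_y p_{x,y}=\id_\HH$ with the $p_{x,y}$ mutually orthogonal projectors, hence the ranges are pairwise orthogonal subspaces whose $\bplus$-sum is $\HH$. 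The small subtlety here — that (Q1) alone forces orthogonality of a family of projectors summing to the identity — is worth spelling out: if $p,p'$ are projectors with $p+p'\le\id_\HH$ then $pp'=\zeroOp_\HH$, which one proves by a standard positivity argument ($\langle p'h,pp'h\rangle$-type estimate), and this is exactly where one would otherwise need (Q2).

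Next I would verify that $h$ preserves each relation $R$. Unravelling the definition of $R^{\freeQH(\Y)}$: given $\bx\in R^\X$ of arity $k=\ar(R)$, I need a single $\tilde M\in\Qminion_\HH^{(m)}$ (with $m=|R^\Y|$) such that $h(x_i)=\tilde M_{/\pi_i}$ for all $i\in[k]$, where $\pi_i\colon R^\Y\to Y$ sends $\by\mapsto y_i$. The natural candidate is $\tilde M$ indexed by $\by\in R^\Y$ with $\by$-th entry $\rg_{p_{\bx,\by}}$, where $p_{\bx,\by}=p_{x_1,y_1}\cdots p_{x_k,y_k}$. Because (Q2) gives that all the factors $p_{x_i,y}$ (ranging over $i\in[k]$, $y\in Y$) pairwise commute, each such product is itself an orthogonal projector, and the family $\{p_{\bx,\by}\}_{\by\in Y^k}$ is a resolution of the identity obtained by multiplying out $\prod_i(\sum_y p_{x_i,y})=\id_\HH$; condition (Q3) kills exactly the terms with $\by\notin R^\Y$, so $\{p_{\bx,\by}\}_{\by\in R^\Y}$ is a resolution of the identity into pairwise-orthogonal projectors, i.e. $\tilde M\in\Qminion_\HH^{(m)}$. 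The minor identity $\tilde M_{/\pi_i}=h(x_i)$ then amounts, entrywise at $y\in Y$, to $\bigboxplus_{\by\in R^\Y,\,y_i=y}\rg_{p_{\bx,\by}}=\rg_{p_{x_i,y}}$, which follows by summing the orthogonal decomposition $p_{x_i,y}=p_{x_i,y}\prod_{j\ne i}(\sum_{y'}p_{x_j,y'})=\sum_{\by:\,y_i=y}p_{\bx,\by}$ and passing to ranges (a sum of pairwise-orthogonal projectors has range the $\bplus$-sum of the ranges).

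For the converse, given $h\colon\X\to\freeQH(\Y)$, I set $p_{x,y}:=\pr_{(h(x))_y}$, the orthogonal projector onto the $y$-th subspace of the tuple $h(x)\in\Qminion_\HH^{(n)}$. The defining conditions of $\Qminion_\HH^{(n)}$ give pairwise orthogonality and $\bplus$-sum $\HH$ of these subspaces, which translates to $\sum_y p_{x,y}=\id_\HH$, establishing (Q1). For (Q2) and (Q3), fix $R$ and $\bx\in R^\X$; the relation-preservation of $h$ supplies $\tilde M\in\Qminion_\HH^{(m)}$ with $(h(x_i))_y=\bigboxplus_{\by\in R^\Y,\,y_i=y}(\tilde M)_{\by}$. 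Letting $r_{\by}:=\pr_{(\tilde M)_{\by}}$ (pairwise orthogonal projectors summing to $\id_\HH$), one gets $p_{x_i,y}=\sum_{\by:\,y_i=y}r_{\by}$ as a sum over an orthogonal subfamily. Two such operators $p_{x_i,y}$ and $p_{x_j,y'}$ are then sums of sub-collections of one common pairwise-orthogonal family of projectors, hence commute, giving (Q2); and a direct computation of $p_{\bx,\by}=p_{x_1,y_1}\cdots p_{x_k,y_k}$ using this shows it equals $r_{\by}$ if $\by\in R^\Y$ and $\zeroOp_\HH$ otherwise (for $\by\notin R^\Y$, the product telescopes to a product of pieces from disjoint subsets of an orthogonal family), which is (Q3). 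I expect the main obstacle to be purely bookkeeping: keeping straight the identifications $Y\cong[n]$, $R^\Y\cong[m]$ and the indexing maps $\pi_i$ from the definition of the free structure, and making the ``product of commuting resolutions of the identity is a resolution of the identity, and multiplying out distributes over the orthogonal decomposition'' lemma precise once so that it can be quoted on both directions. I would isolate that lemma (commuting orthogonal-projector resolutions: their pairwise products form an orthogonal resolution, and $\sum$ of ranges $=\bplus$ of ranges) as a short preliminary claim and then both directions become essentially a translation exercise.
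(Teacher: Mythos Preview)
Your proposal is correct and follows essentially the same route as the paper's proof: both directions are set up via the dictionary $p_{x,y}\leftrightarrow\rg_{p_{x,y}}$, the witness $\tilde M$ for $R^{\freeQH(\Y)}$ is taken to have $\by$-th entry $\rg_{p_{\bx,\by}}$, and the converse recovers the projectors as $\pr_{(h(x))_y}$ and reads off (Q2), (Q3) from the common orthogonal refinement supplied by the free-structure witness. The ``projectors summing to $\id_\HH$ are automatically pairwise orthogonal'' observation you isolate is exactly the lemma the paper invokes (cited from \cite{prugovecki1982quantum}) at the corresponding points.
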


Next, we describe the notion of minion tests from~\cite{cz23soda:minions}. Let $\Mminion$ be a minion, let $\sigma$ be a signature, and let $\Sigma$ be the class of all (finite) $\sigma$-structures. Consider the function 
\begin{align*}
    \Test{\Mminion}{}:\Sigma^2&\to\{\YES,\NO\}\\
    (\X,\Y)&\mapsto \left\{\begin{array}{lll}
         \YES&\mbox{if }\;\X\to\freeM(\Y)  \\
         \NO&\mbox{otherwise.} 
    \end{array}\right.
\end{align*}
Fix a $\sigma$-structure $\Y$. It is not hard to verify that $\Y\to\freeM(\Y)$ for every $\Mminion$. Hence, $\X\to\Y$ implies $\Test{\Mminion}{}(\X,\Y)=\YES$. We say that $\Test{\Mminion}{}$ \emph{solves} $\CSP(\Y)$ if the converse implication also holds; i.e., if, for every $\sigma$-structure $\X$, $\Test{\Mminion}{}(\X,\Y)=\YES$ implies $\X\to\Y$.
As stated next, this can be reformulated in terms of minion homomorphisms.
\begin{thm}[\cite{cz23soda:minions}]
\label{thm_minion_test_solvability_homo}
    Let $\Mminion$ be a minion and let $\Y$ be a relational structure. Then $\Test{\Mminion}{}$ solves $\CSP(\Y)$ if and only if $\Mminion\to\Pol(\Y)$.
\end{thm}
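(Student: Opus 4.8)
The plan is to prove both implications by routing through the auxiliary condition $\freeM(\Y)\to\Y$: I will show it is equivalent to ``$\Test{\Mminion}{}$ solves $\CSP(\Y)$'' and, separately, to ``$\Mminion\to\Pol(\Y)$''.

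First I would prove that $\Test{\Mminion}{}$ solves $\CSP(\Y)$ if and only if $\freeM(\Y)\to\Y$. The direction ``$\Leftarrow$'' is one line: compose $\X\to\freeM(\Y)$ with $\freeM(\Y)\to\Y$. For ``$\Rightarrow$'' the catch is that $\freeM(\Y)$ may be infinite, so it cannot be fed directly to $\Test{\Mminion}{}$ as an input; instead, every finite induced substructure $\X$ of $\freeM(\Y)$ has the inclusion as a homomorphism $\X\to\freeM(\Y)$, whence $\Test{\Mminion}{}(\X,\Y)=\YES$ and so $\X\to\Y$ by hypothesis. Since $\Y$ is finite, a standard compactness argument (of de Bruijn--Erd\H{o}s type) then upgrades ``every finite substructure maps to $\Y$'' to $\freeM(\Y)\to\Y$. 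This compactness step is the only part of the proof that is not purely formal.

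The core is the equivalence $\freeM(\Y)\to\Y$ iff $\Mminion\to\Pol(\Y)$, which I would obtain by exhibiting a bijection between homomorphisms $\freeM(\Y)\to\Y$ and minion homomorphisms $\Mminion\to\Pol(\Y)$ --- essentially the universal property of the free structure of~\cite{BBKO21}. Identify $Y$ with $[n]$ and let $\be=(1,\dots,n)\in Y^n$. From a minion homomorphism $\xi$, set $h(M)=\xi(M)(\be)$ for $M$ in the domain $\Mminion^{(n)}$ of $\freeM(\Y)$; from a homomorphism $h\colon\freeM(\Y)\to\Y$, define for $M\in\Mminion^{(\ell)}$ the function $\xi(M)\colon Y^\ell\to Y$, $\xi(M)(\ba)=h(M_{/\alpha})$, where $\alpha\colon[\ell]\to[n]$ reads off the coordinates of $\ba$. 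Then I would verify four routine facts: (i) $h$ is a homomorphism --- unfold $R^{\freeM(\Y)}$ with witness $\tilde M\in\Mminion^{(m)}$ and projections $\pi_i$, use that $\xi$ preserves minors to rewrite $h(M_i)=\xi(\tilde M)(\pi_i(1),\dots,\pi_i(m))$, and recognise the resulting tuple as $\xi(\tilde M)$ evaluated on the columns of the $m\times\ar(R)$ matrix whose rows are the tuples of $R^\Y$, which is a polymorphism instance; (ii) each $\xi(M)$ is a polymorphism --- given such a matrix with rows $\bb^{(1)},\dots,\bb^{(\ell)}\in R^\Y$, let $\beta\colon[\ell]\to[m]$ send $i\mapsto\bb^{(i)}$ and $\tilde M=M_{/\beta}$; then $\tilde M_{/\pi_j}=M_{/\alpha_j}$ by associativity of minor composition, so the corresponding tuple of minors lies in $R^{\freeM(\Y)}$ and one applies $h$; (iii) $\xi$ preserves minors, immediate from $(M_{/\pi})_{/\alpha}=M_{/\alpha\circ\pi}$ and the definition of a function minor; (iv) the two assignments are mutually inverse, using $M_{/\id}=M$ and again the associativity identity. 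Combining the two equivalences gives the statement.

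I expect the main obstacle to be notational rather than conceptual: steps (i) and (ii) hinge on matching two different-looking combinatorial encodings --- the ``$\ell\times\ar(R)$ matrix whose rows are tuples of $R^\Y$'' from the definition of $\Y^\ell$, versus the ``witness $\tilde M\in\Mminion^{(m)}$ plus coordinate-projection maps $\pi_i$'' from the definition of $\freeM(\Y)$ --- and seeing that the assignment $i\mapsto\bb^{(i)}$ (resp. its translation via the $\pi_i$) is exactly the dictionary between them. Once that dictionary is set up, the rest is a direct check, and the only genuinely analytic ingredient anywhere is the compactness step of the first part.
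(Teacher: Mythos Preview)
The paper does not prove this theorem: it is quoted verbatim from~\cite{cz23soda:minions} and used as a black box, so there is no in-paper argument to compare your proposal against. Your sketch is the standard route (reduce to $\freeM(\Y)\to\Y$ via compactness, then invoke the universal property of the free structure from~\cite{BBKO21}) and it is correct as written; the compactness step is indeed needed because $\freeM(\Y)$ is in general infinite, and the bijection you describe between homomorphisms $\freeM(\Y)\to\Y$ and minion homomorphisms $\Mminion\to\Pol(\Y)$ is exactly the content of~\cite[Lemma~4.4]{BBKO21}.
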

We can now prove Theorem~\ref{thm_quantum_trivial_iff_minion_homo}.
\begin{proof}[Proof of Theorem~\ref{thm_quantum_trivial_iff_minion_homo}]
By virtue of Proposition~\ref{prop_quantum_homo_as_test}, the existence of a quantum homomorphism over $\HH$ can be viewed as a minion test for the quantum minion $\Qminion_\HH$.
More precisely, given two $\sigma$-structures $\X$ and $\Y$, $\X\qTo{\HH}\Y$ if and only if $\Test{\Qminion_\HH}{}(\X,\Y)=\YES$.
Therefore, $\Y$ has quantum advantage over $\HH$ precisely when $\Test{\Qminion_\HH}{}$ does not solve $\CSP(\Y)$. The result then follows from Theorem~\ref{thm_minion_test_solvability_homo}.   
\end{proof}
\section{$\Qminion_\HH$ and the dictator minion}
By virtue of Theorem~\ref{thm_quantum_trivial_iff_minion_homo}, in order to investigate the occurrence of quantum advantage, one needs to understand the structure of the quantum minion $\Qminion_\HH$. ``Minion homomorphism'' is a binary, reflexive, and transitive relation; thus, it induces a preorder on the class of all minions. It shall be crucial to obtain information on where $\Qminion_\HH$ is located within this preorder. We shall see that, when $\dim(\HH)\leq 2$, $\Qminion_\HH$ is a \emph{least element} of the preorder, as it is equivalent to a specific minion --- known as the dictator minion --- that admits a homomorphism to all minions.
This implies that no structure has quantum advantage over such Hilbert spaces.
Things get more interesting when
$\dim(\HH)\geq 3$. In this case, we shall see that $\Qminion_\HH$ is essentially different from the dictator minion. 
Through known results on the complexity of CSPs and the theory of their polymorphism minions, this separation shall lead to a sufficient condition for the occurrence of quantum advantage.
We start by defining the \emph{dictator minion} as the linear minion $\Dminion$ over the abelian monoid $(\R,+)$ consisting of all standard unit vectors $\be_{i;\ell}$ for $1\leq i\leq\ell\in\N$.\footnote{Equivalently, we may define $\Dminion$ as a function minion over some finite set with at least two elements, consisting of functions that project any input tuple onto one fixed coordinate --- whence the alternative name  \emph{projection minion}, which is common in the CSP literature (see~\cite{BBKO21}). Yet another description of $\Dminion$ is the following: For $\ell\in\N$, $\Dminion^{(\ell)}=[\ell]$; for $\pi:[\ell]\to[\ell']$ and $i\in\Dminion^{(\ell)}$, $i_{/\pi}=\pi(i)$.} 
The next simple result shows that $\Dminion$ is a least element of the minion homomorphism preorder.\footnote{The minion homomorphism preorder also has a \emph{greatest element}: The linear minion (over any abelian monoid) whose only $\ell$-ary element, for each $\ell\in\N$, is the all-zero vector of length $\ell$.} 
\begin{lem}
\label{lem_projections_map_everywhere}
    $\Dminion\to\Mminion$ for every minion $\Mminion$.
\end{lem}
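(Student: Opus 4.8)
The plan is to construct, for an arbitrary minion $\Mminion$, a minion homomorphism $\xi:\Dminion\to\Mminion$ and verify that it preserves arities and minors. The natural candidate exploits the combinatorial description of $\Dminion$ given in the footnote, namely $\Dminion^{(\ell)}=[\ell]$ with $i_{/\pi}=\pi(i)$. The construction starts by fixing, once and for all, an arbitrary element $M_0\in\Mminion^{(1)}$ (which exists since every minion has non-empty sets of each arity). For $\ell\in\N$ and $i\in[\ell]=\Dminion^{(\ell)}$, let $\iota_{i;\ell}:[1]\to[\ell]$ be the map sending $1$ to $i$, and define $\xi(i)=(M_0)_{/\iota_{i;\ell}}\in\Mminion^{(\ell)}$. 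Arity preservation is then immediate by construction.

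The main step is to check that minors are preserved. Given $\pi:[\ell]\to[\ell']$ and $i\in\Dminion^{(\ell)}$, on the one hand $\xi(i_{/\pi})=\xi(\pi(i))=(M_0)_{/\iota_{\pi(i);\ell'}}$, and on the other hand $\xi(i)_{/\pi}=((M_0)_{/\iota_{i;\ell}})_{/\pi}=(M_0)_{/\pi\circ\iota_{i;\ell}}$ by the second minion axiom. So it suffices to observe that $\pi\circ\iota_{i;\ell}=\iota_{\pi(i);\ell'}$ as maps $[1]\to[\ell']$, which is clear since both send $1$ to $\pi(i)$. This establishes $\xi(i_{/\pi})=\xi(i)_{/\pi}$, completing the verification that $\xi$ is a minion homomorphism.

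I expect no real obstacle here; the only subtlety is bookkeeping with the maps from the one-element set, and being careful that the two minion axioms ($M_{/\id}=M$ and composition) are precisely what is needed. One could alternatively phrase the whole argument in the concrete linear-minion presentation of $\Dminion$ — elements are the standard unit vectors $\be_{i;\ell}$ — but the functorial/combinatorial description keeps the computation shortest, so that is the route I would take.

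\begin{proof}
Fix an arbitrary minion $\Mminion$. Since $\Mminion^{(1)}$ is non-empty, choose some $M_0\in\Mminion^{(1)}$. For $\ell\in\N$ and $i\in[\ell]$, let $\iota_{i;\ell}:[1]\to[\ell]$ denote the map sending $1$ to $i$, and define
\begin{align*}
\xi:\Dminion&\to\Mminion\\
i\in\Dminion^{(\ell)}=[\ell]&\mapsto (M_0)_{/\iota_{i;\ell}}\in\Mminion^{(\ell)}.
\end{align*}
By construction, $\xi$ preserves arities. To see that it preserves minors, take $\pi:[\ell]\to[\ell']$ and $i\in\Dminion^{(\ell)}$. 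Using the description $i_{/\pi}=\pi(i)$ of the minor operations of $\Dminion$ and the composition axiom for $\Mminion$, we obtain
\begin{align*}
\xi(i_{/\pi})=\xi(\pi(i))=(M_0)_{/\iota_{\pi(i);\ell'}}\qquad\text{and}\qquad
\xi(i)_{/\pi}=\bigl((M_0)_{/\iota_{i;\ell}}\bigr)_{/\pi}=(M_0)_{/\pi\circ\iota_{i;\ell}}.
\end{align*}
Since $\pi\circ\iota_{i;\ell}$ and $\iota_{\pi(i);\ell'}$ are both the map $[1]\to[\ell']$ sending $1$ to $\pi(i)$, they coincide, so $\xi(i_{/\pi})=\xi(i)_{/\pi}$. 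Hence $\xi$ is a minion homomorphism, which proves $\Dminion\to\Mminion$.
\end{proof}
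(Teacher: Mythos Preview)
Your proof is correct and follows the same approach as the paper: fix an arbitrary $M_0\in\Mminion^{(1)}$, send $\be_{i;\ell}$ (equivalently, $i\in[\ell]$) to $(M_0)_{/\iota_{i;\ell}}$, and observe this is a minion homomorphism. The only difference is that the paper leaves the minor-preservation check as ``straightforward'', whereas you spell it out via $\pi\circ\iota_{i;\ell}=\iota_{\pi(i);\ell'}$.
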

\begin{proof}
    Fix an element $M\in\Mminion^{(1)}$ (which must exist as minions are non-empty by definition). For $i\leq\ell\in\N$, consider the map $\pi_{i,\ell}:[1]\to[\ell]$ given by $1\mapsto i$. It is straightforward to check that the function $\be_{i;\ell}\mapsto M_{/\pi_{i,\ell}}$ yields a minion homomorphism from $\Dminion$ to $\Mminion$.
\end{proof}

We shall establish the following characterisation of when the quantum minion admits a homomorphism (and, thus, is homomorphically equivalent) to the dictator minion.
\begin{thm}
\label{thm_quantum_minion_vs_projections}
$\Qminion_\HH\to\Dminion$ if and only if $\dim(\HH)\leq 2$.
\end{thm}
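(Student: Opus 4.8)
The statement has two directions, mirroring the two bullet points $(i)$ and $(ii)$ from the introduction. For the ``only if'' direction, I would argue the contrapositive: if $\dim(\HH)\geq 3$, then $\Qminion_\HH\not\to\Dminion$. The key idea is that a minion homomorphism $\xi\colon\Qminion_\HH\to\Dminion$ lets us define, for every subspace $v\in L_\HH$, a Boolean-valued quantity as follows. Given $v$, complete it to a decomposition $\HH = v \bplus v^\perp$, which is an element $\bq$ of $\Qminion_\HH^{(2)}$; then $\xi(\bq)\in\Dminion^{(2)}=\{\be_{1;2},\be_{2;2}\}$ tells us whether the ``first'' or ``second'' coordinate is distinguished. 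Set $\mu(v)=1$ if $\xi$ picks out the coordinate of $v$, and $\mu(v)=0$ otherwise. The minor conditions force $\mu$ to behave like a $\{0,1\}$-valued finitely additive measure (a frame function / Boolean measure) on the orthogonal decompositions of $\HH$: for any orthogonal decomposition $\HH=\bigboxplus_i q_i$, exactly one $q_i$ has $\mu(q_i)=1$, because this is just the statement that $\xi(\bq)$ is a single standard unit vector and is compatible with all the minor maps that merge coordinates. In particular $\mu$ is additive on orthogonal pairs. By Gleason's Theorem \cite{gleason1975measures} (applicable precisely because $\dim(\HH)\geq 3$), $\mu$ must have the form $\mu(v)=\tr(m\cdot\pr_v)$ for a positive semidefinite operator $m$ with $\tr(m)=1$; such a function is continuous in $v$ (in the natural topology on subspaces of a fixed dimension) and takes a continuum of values on, say, the one-dimensional subspaces, contradicting the fact that it is $\{0,1\}$-valued and not identically $0$ on lines. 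Hence no such $\xi$ exists. The main obstacle here is the careful bookkeeping needed to check that $\mu$ is well-defined (independent of how we embed $v$ into a decomposition) and genuinely additive --- this requires using the minor operations $(\cdot)_{/\pi}$ for non-injective $\pi$ to reconcile refinements of decompositions, and one must be attentive to the fact that $\Dminion$'s minor operation sends $\be_{i;\ell}$ to $\be_{\pi(i);\ell'}$, so merging two coordinates $i,j$ into one sends $\mu$ of the merged space to $\mu(q_i)+\mu(q_j)$ exactly when at most one summand is $1$ --- which is what needs to be established, not assumed. I would set this up by first proving a clean lemma: any minion homomorphism $\Qminion_\HH\to\Dminion$ induces a frame function on $\HH$, and only then invoke Gleason.

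For the ``if'' direction, I need to construct, for $\dim(\HH)\leq 2$, a minion homomorphism $\Qminion_\HH\to\Dminion$; by Lemma~\ref{lem_projections_map_everywhere} the reverse homomorphism always exists, giving equivalence. The case $\dim(\HH)=1$ is trivial: $L_\HH=\{\{\bzero_\HH\},\HH\}$, and an element of $\Qminion_\HH^{(\ell)}$ is an orthogonal decomposition of a $1$-dimensional space, so exactly one coordinate equals $\HH$ and the rest are zero; sending $\bq$ to $\be_{i;\ell}$ where $q_i=\HH$ is readily checked to be a minion homomorphism. The interesting case is $\dim(\HH)=2$. Here I would use the Hopf fibration idea sketched in the introduction. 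Identify the $1$-dimensional subspaces of $\HH$ (the complex, resp. real, projective line) with the $2$-sphere $S^2$ (resp. the circle, in the real case) via a map under which orthogonal lines correspond to antipodal points. Fix an antipodal partition $S^2 = P_1 \sqcup P_2$ --- i.e., a subset $P_1$ with $P_1 \cap (-P_1)=\emptyset$ and $P_1 \cup (-P_1) = S^2$ (such a set exists by the axiom of choice). Now given $\bq\in\Qminion_\HH^{(\ell)}$, by orthogonality and the dimension-$2$ constraint at most two of the $q_i$ can be nonzero, and the nonzero ones are either a single $2$-dimensional space (all of $\HH$) or two orthogonal lines. In the first case there is a unique nonzero coordinate, pick it. In the second case the two lines map under the Hopf fibration to an antipodal pair, so exactly one of them lies in $P_1$; pick that coordinate. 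Define $\xi(\bq)=\be_{i;\ell}$ for the chosen $i$. The remaining work is to verify this respects arity (immediate) and minors: for $\pi\colon[\ell]\to[\ell']$, the coordinates of $\bq_{/\pi}$ are $(\bq_{/\pi})_{i'} = \bigboxplus_{j\in\pi^{-1}(i')} q_j$, and I need $\xi(\bq_{/\pi})=\xi(\bq)_{/\pi}=\pi$ applied to the chosen index of $\bq$. This follows because summing orthogonal subspaces of a $2$-dimensional space cannot create a new nonzero line out of zero spaces, cannot merge two of the (at most two) nonzero lines unless $\pi$ sends their indices to the same place --- in which case the merged block is all of $\HH$ and is the unique nonzero coordinate of $\bq_{/\pi}$, matching --- and otherwise keeps the distinguished line distinguished, with its index moved by $\pi$. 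The main subtlety in this direction is making the Hopf-fibration/antipodal-partition argument precise enough: I would state as a separate geometric lemma that there is a map $\varphi$ from the lines of $\HH$ to $S^2$ with $\varphi(v^\perp)=-\varphi(v)$, then everything reduces to elementary bookkeeping over the finitely many nonzero coordinates. Gleason's Theorem in the ``only if'' direction is the conceptual heart, but the routine-looking minor-compatibility checks in both directions are where errors are easiest to make, so I would write those out in full.
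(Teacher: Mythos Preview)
Your proposal is correct and follows essentially the same approach as the paper: the ``if'' direction via the Hopf fibration sending orthogonal lines to antipodal points followed by an antipodal partition (the paper's Lemma~\ref{lem_Hopft_fibration} and Proposition~\ref{prop_if_part_thm_dictator_quantum}), and the ``only if'' direction by extracting a $\{0,1\}$-valued measure from a hypothetical homomorphism and applying Gleason's Theorem (the paper's Proposition~\ref{prop_only_if_part_thm_dictator_quantum}). The only minor differences are that the paper gives an explicit antipodal partition rather than invoking choice, and derives the final contradiction by directly computing $\mu(w)=\tfrac12$ for the span of $\bv_j+\bv_k$ rather than appealing to continuity; both are cosmetic.
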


Our proofs of the two directions of Theorem~\ref{thm_quantum_minion_vs_projections} are both geometric. This is not surprising, as the quantum minion $\Qminion_\HH$ has a natural geometric interpretation. 

The ``if'' part (Proposition~\ref{prop_if_part_thm_dictator_quantum}) is established as follows. First, we show that every $2$-dimensional Hilbert space can be partitioned into two parts in a way that any pair of nonzero orthogonal vectors in the space is separated by the partition. In order to prove this fact, we consider the Hopf fibration of the complex projective line $\C\textbf{P}^1$ into the $2$-sphere\footnote{This construction results in the object that is sometimes called the \emph{Riemann sphere} or \emph{Bloch sphere}.} --- which has the property of mapping orthogonal vectors to antipodal points of the sphere. We can then partition the Hilbert space by taking the preimage under the Hopf fibration of an antipodal partition of the $2$-sphere.
\begin{lem}
\label{lem_Hopft_fibration}
    If $\dim(\HH)=2$, there exists $C\subset \HH$ such that, given any two nonzero orthogonal
    vectors $\bv,\bw\in\HH$, exactly one of $\bv$ and $\bw$ lies in $C$.
\end{lem}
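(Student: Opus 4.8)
The plan is to realise the desired set $C$ as the preimage, under a suitable map $\HH\setminus\{\bzero_\HH\}\to S^2$, of a half-sphere in an antipodal partition of $S^2$. The key is to find a map that sends a pair of nonzero orthogonal vectors to a pair of antipodal points of $S^2$; then, picking any partition $S^2=P\sqcup P^-$ with $P^-=\{-p\mid p\in P\}$ (for instance, split $S^2$ by the sign of its last coordinate, assigning the equator arbitrarily in an antipodally-consistent way, i.e. choose one point from each antipodal pair on the equator), exactly one of the two antipodal images lands in $P$, and we take $C$ to be the preimage of $P$ together with the zero vector (or not — the zero vector is irrelevant since we only quantify over nonzero vectors).

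First I would reduce to the standard model: since $\dim(\HH)=2$, fix an orthonormal basis and identify $\HH$ with $\C^2$ (the real case embeds into this). Scaling does not affect orthogonality, so I may as well work with the projectivisation $\C\mathbf{P}^1$. Second, I would write down the classical Hopf-type map $\C^2\setminus\{\bzero\}\to S^2\subseteq\R^3$ sending a vector $(z_1,z_2)$ to the point with coordinates obtained from $(z_1,z_2)$ via the Pauli-matrix expectation values; concretely, for a unit vector $\bv$, the point $\langle\bv,\sigma_x\bv\rangle,\langle\bv,\sigma_y\bv\rangle,\langle\bv,\sigma_z\bv\rangle$, extended to nonzero vectors by normalising first. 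The two facts I need about this map are: (a) it is well-defined on nonzero vectors and its value depends only on the ray through the vector; (b) if $\bv\perp\bw$ (both nonzero) then their images are antipodal. Fact (b) is the computational heart and is a short, standard calculation: an orthonormal pair $\{\bv,\bw\}$ in $\C^2$ is a rotation of the computational basis $\{\be_1,\be_2\}$, whose images are the north and south poles $(0,0,\pm 1)$, and the map intertwines the $SU(2)$-action on $\C^2$ with the $SO(3)$-action on $S^2$, so orthogonality is carried to antipodality.

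Third, I would finish: let $H\colon\HH\setminus\{\bzero_\HH\}\to S^2$ be this map, let $S^2=P\sqcup P^-$ be an antipodal partition (exists by the axiom of choice — pick one representative from each of the antipodal pairs and call the chosen ones $P$), and set $C=H^{-1}(P)$. Given nonzero orthogonal $\bv,\bw\in\HH$, by Fact (b) the points $H(\bv)$ and $H(\bw)$ are antipodal, hence exactly one of them lies in $P$, hence exactly one of $\bv,\bw$ lies in $C$. The main obstacle is Fact (b): verifying that the Hopf map turns orthogonality into antipodality. The cleanest route avoids brute-force coordinates by invoking the equivariance of the Hopf fibration under the double cover $SU(2)\to SO(3)$ — any orthonormal pair is $U\be_1,U\be_2$ for some $U\in SU(2)$, the basis pair maps to $\pm$ the north pole, and $U$ acts as a rotation on $S^2$, which preserves the antipodal relation. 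Alternatively one can just expand $\langle\bv,\sigma_j\bv\rangle$ and $\langle\bw,\sigma_j\bw\rangle$ for $\bw$ orthogonal to $\bv$ and check the three coordinates flip sign; this is routine but slightly tedious, so I would present the equivariance argument and relegate the explicit computation to a remark if needed.
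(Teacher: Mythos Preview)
Your proposal is correct and follows essentially the same route as the paper: map $\HH\setminus\{\bzero_\HH\}$ to $S^2$ via the Hopf fibration, check that orthogonal vectors go to antipodal points, and pull back one half of an antipodal partition of $S^2$. The only cosmetic differences are that the paper writes the Hopf map explicitly as $(x,y)\mapsto(x\bar{x}-y\bar{y},\,2x\bar{y})$ rather than via Pauli expectations, verifies antipodality by the one-line observation that the $2\times 2$ matrix with orthonormal columns $(x,y),(p,q)$ is unitary (so $MM^*=I_2$ forces $f(p,q)=-f(x,y)$) rather than invoking $SU(2)\to SO(3)$ equivariance, and exhibits a concrete antipodal partition rather than appealing to choice; none of this changes the substance.
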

A self-contained proof of Lemma~\ref{lem_Hopft_fibration} is given in Appendix~\ref{appendix_omitted_proofs}.
Once such a partition $\{C,\HH\setminus C\}$ is fixed, we can select in each pair of $1$-dimensional orthogonal subspaces of $\HH$ a distinguished element --- namely, the subspace generated by a vector in $C$. As we shall see next, the property of the partition guarantees that the distinguished element is preserved under taking minors, thus yielding a proper homomorphism from $\Qminion_\HH$ to $\Dminion$.

\begin{prop}
\label{prop_if_part_thm_dictator_quantum}
    If $\dim(\HH)\leq 2$, $\Qminion_\HH\to\Dminion$.
\end{prop}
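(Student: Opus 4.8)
The plan is to use Lemma~\ref{lem_Hopft_fibration} to define a minion homomorphism $\xi:\Qminion_\HH\to\Dminion$ explicitly. If $\dim(\HH)=1$, every element $\bq\in\Qminion_\HH^{(\ell)}$ has the property that exactly one coordinate $q_i$ equals $\HH$ and all others equal $\{\bzero_\HH\}$ (by orthogonality and the requirement that the spaces sum to $\HH$); mapping $\bq$ to $\be_{i;\ell}$ for that unique $i$ is trivially a minion homomorphism, so I would dispose of this case in one sentence. The substance is the case $\dim(\HH)=2$.

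For $\dim(\HH)=2$, fix a set $C\subset\HH$ as provided by Lemma~\ref{lem_Hopft_fibration}. First I would observe that any $\bq\in\Qminion_\HH^{(\ell)}$ has the following shape: since the $q_i$ are pairwise orthogonal subspaces of a $2$-dimensional space summing to $\HH$, exactly one of them is $2$-dimensional and the rest are $\{\bzero_\HH\}$, or exactly two of them are $1$-dimensional and the rest are $\{\bzero_\HH\}$. In the first case define $\xi(\bq)=\be_{i;\ell}$ where $q_i=\HH$. In the second case, say $q_i$ and $q_j$ ($i\neq j$) are the two $1$-dimensional spaces, which are orthogonal; pick nonzero $\bv\in q_i$, $\bw\in q_j$, and by the defining property of $C$ exactly one of $\bv,\bw$ lies in $C$ — and this does not depend on the choice of nonzero representatives, since for a $1$-dimensional space $q$ with nonzero $\bv,\bv'\in q$ we have $\bv'=\lambda\bv$, and $C$ was built as a preimage under the Hopf fibration, hence is a union of $1$-dimensional subspaces (minus the origin); so membership in $C$ is a property of the line. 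Define $\xi(\bq)=\be_{i;\ell}$ if $\bv\in C$, and $\xi(\bq)=\be_{j;\ell}$ if $\bw\in C$. This gives a well-defined arity-preserving map $\Qminion_\HH\to\Dminion$.

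It then remains to check that $\xi$ preserves minors, i.e. $\xi(\bq_{/\pi})=\xi(\bq)_{/\pi}$ for every $\pi:[\ell]\to[\ell']$. Recall that in the linear-minion formalism the $k$-th coordinate of $\bq_{/\pi}$ is $\bigboxplus_{j\in\pi^{-1}(k)}q_j$, and $\be_{i;\ell}{}_{/\pi}=\be_{\pi(i);\ell'}$. Write $i$ for the distinguished coordinate of $\bq$, so $\xi(\bq)=\be_{i;\ell}$ and $\xi(\bq)_{/\pi}=\be_{\pi(i);\ell'}$. I must show the distinguished coordinate of $\bq_{/\pi}$ is $\pi(i)$. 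The key point is that summing the $q_j$ over a fibre $\pi^{-1}(k)$ either produces $\{\bzero_\HH\}$ (if all $q_j$ in the fibre are zero), or a $1$-dimensional space equal to the unique nonzero $q_j$ in the fibre (this happens when the fibre contains exactly one of the one-or-two nonzero spaces, and that one is $1$-dimensional), or all of $\HH$ (when the fibre contains the $2$-dimensional space, or contains both $1$-dimensional spaces, whose sum is $\HH$ by orthogonality). In every case, the distinguished coordinate of $\bq_{/\pi}$ is exactly the fibre $k=\pi(i)$ containing the distinguished coordinate $i$ of $\bq$: in the "$\HH$ appears" subcase directly, and in the "two $1$-dimensional spaces" subcase because $C$-membership of a representative is unchanged when $\pi$ keeps the two lines in separate fibres, and when $\pi$ merges them the merged fibre becomes $2$-dimensional and is the unique such, again $=\pi(i)$ since $i$ lay in it. Assembling these cases shows $\xi(\bq_{/\pi})=\be_{\pi(i);\ell'}=\xi(\bq)_{/\pi}$.

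The main obstacle is bookkeeping rather than depth: one has to enumerate how the "profile" of $\bq$ (one $2$-dim space, or two $1$-dim spaces) transforms under an arbitrary $\pi$ — in particular checking that merging the two orthogonal lines yields precisely $\HH$ and that the distinguished-coordinate rule is stable in each subcase — and verify the representative-independence of $C$-membership. None of this requires the Hopf-fibration construction beyond the single property extracted in Lemma~\ref{lem_Hopft_fibration}; that lemma is the only geometric input, and the rest is a finite case check, which is why I would present the case analysis compactly rather than in full detail.
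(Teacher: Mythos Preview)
Your proposal is correct and follows essentially the same approach as the paper: use the set $C$ from Lemma~\ref{lem_Hopft_fibration} to select a distinguished coordinate, then verify minor-preservation by a short case analysis on whether $\pi$ merges the (at most two) nonzero coordinates. The only minor wrinkle is that your representative-independence argument appeals to the \emph{construction} of $C$ (as a Hopf preimage) rather than just the statement of Lemma~\ref{lem_Hopft_fibration}; the paper sidesteps this by fixing once and for all a nonzero representative $g(v)$ for each line $v$, though you could equally argue directly from the lemma's statement alone (if $\bv,\bv'\in q_i$ and $\bw\in q_j$ are nonzero with $q_i\perp q_j$, then $\bv\in C\iff\bw\notin C\iff\bv'\in C$).
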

\begin{proof}
Given $\bq\in\Qminion_\HH$ of some arity $\ell$, an \emph{essential coordinate} of $\bq$ is an index $i\in[\ell]$ for which $q_i\neq\{\bzero_\HH\}$. We denote the set of essential coordinates of $\bq$ by $\operatorname{sec(\bq)}$.
Note that the cardinality of $\sec(\bq)$ is at most $\dim(\HH)\leq 2$. 
For any $1$-dimensional subspace $v\in L_\HH$, fix a nonzero vector $g(v)\in v$.
Let also $C\subset \HH$ be the set constructed in Lemma~\ref{lem_Hopft_fibration}.

We define $\xi:\Qminion_\HH\to\Dminion$ as follows. Given $\bq\in\Qminion_\HH$ of arity $\ell$, if $\bq$ has a unique essential coordinate $i$, we let $\xi(\bq)=\be_{i;\ell}$. Otherwise, $\bq$ has two essential coordinates $i\neq j$ --- in which case, it must be that $\dim(\HH)=2$ and $\dim(q_i)=\dim(q_j)=1$.
If $g(q_i)\in C$, we let $\xi(\bq)=\be_{i;\ell}$; otherwise, we let $\xi(\bq)=\be_{j;\ell}$. Since $\ang{g(q_i), g(q_j)}=0$, $\xi$ is well defined by virtue of the property of $C$.
We claim that $\xi$ yields a minion homomorphism from $\Qminion_\HH$ to $\Dminion$.

The facts that $\xi(\bq)\in\Dminion$ for each $\bq\in\Qminion_\HH$ and that $\xi$ preserves the arity directly follow from the definition. 
To show that $\xi$ preserves the minors, take $\pi:[\ell]\to[\ell']$ and $\bq\in\Qminion_\HH^{(\ell)}$. Observe that 
\begin{align*}
    \sec(\bq_{/\pi})=\{\pi(i):i\in\sec(\bq)\}.
\end{align*}
If $\sec(\bq)=\{i\}$, it follows that 
\begin{align*}
   \xi(\bq_{/\pi})=\be_{\pi(i);\ell'}=(\be_{i;\ell})_{/\pi}=\xi(\bq)_{/\pi}.
\end{align*}
If $\sec(\bq)=\{i,j\}$ with $i\neq j$, let $k\in\{i,j\}$ be such that $\xi(\bq)=\be_{k;\ell}$. There are two cases. If $\pi(i)=\pi(j)$, we have that $\sec(\bq_{/\pi})=\{\pi(k)\}$, and it follows that
$\xi(\bq_{/\pi})=\be_{\pi(k);\ell'}$.
If $\pi(i)\neq\pi(j)$, $\bq_{/\pi}$ is the vector in $\Qminion_\HH^{(\ell')}$ whose $\pi(i)$-th entry is $q_i$, whose $\pi(j)$-th entry is $q_j$, and all of whose other entries are $\{\bzero_\HH\}$. Hence, also in this case it holds that $\xi(\bq_{/\pi})=\be_{\pi(k);\ell'}$.
This means that, in either case, we have
\begin{align*}
    \xi(\bq_{/\pi})=\be_{\pi(k);\ell'}=(\be_{k;\ell})_{/\pi}=\xi(\bq)_{/\pi}.
\end{align*}
Therefore, the identity $\xi(\bq_{/\pi})=\xi(\bq)_{/\pi}$ is satisfied in all cases, and it follows that $\xi$ is a minion homomorphism.
\end{proof}

As a direct consequence, we obtain the following.
\begin{cor}
\label{cor_no_quantum_advantage_in_low_dimension}
    No relational structure has quantum advantage over $\HH$ if $\dim(\HH)\leq 2$. 
\end{cor}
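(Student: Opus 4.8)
The plan is to derive this corollary as an immediate consequence of the two results established just before it, namely Theorem~\ref{thm_quantum_trivial_iff_minion_homo} (quantum advantage over $\HH$ for a structure $\Y$ is equivalent to $\Qminion_\HH\not\to\Pol(\Y)$) and Proposition~\ref{prop_if_part_thm_dictator_quantum} (if $\dim(\HH)\leq 2$ then $\Qminion_\HH\to\Dminion$). The only additional ingredient needed is the fact that the dictator minion maps into every minion, which is Lemma~\ref{lem_projections_map_everywhere}, together with the transitivity of the minion homomorphism relation.

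So the argument runs as follows. Suppose $\dim(\HH)\leq 2$ and let $\Y$ be an arbitrary relational structure. By Proposition~\ref{prop_if_part_thm_dictator_quantum} there is a minion homomorphism $\Qminion_\HH\to\Dminion$. By Lemma~\ref{lem_projections_map_everywhere} applied to the minion $\Pol(\Y)$, there is a minion homomorphism $\Dminion\to\Pol(\Y)$. Composing these two minion homomorphisms (the composition of two minion homomorphisms is again a minion homomorphism, since each preserves arities and commutes with minor operations) yields a minion homomorphism $\Qminion_\HH\to\Pol(\Y)$. By Theorem~\ref{thm_quantum_trivial_iff_minion_homo}, this means $\Y$ does \emph{not} have quantum advantage over $\HH$. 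Since $\Y$ was arbitrary, no relational structure has quantum advantage over $\HH$.

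There is no real obstacle here — the work has all been done in the preceding lemmas and theorems, and the corollary is a one-line deduction. The only thing to be slightly careful about is invoking the composability of minion homomorphisms, but this is routine from the definition (arity preservation composes, and $\xi_2(\xi_1(M_{/\pi})) = \xi_2(\xi_1(M)_{/\pi}) = \xi_2(\xi_1(M))_{/\pi}$). I would simply write the two-step composition explicitly and cite the relevant earlier results.

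\begin{proof}
Let $\Y$ be an arbitrary relational structure and suppose $\dim(\HH)\leq 2$. By Proposition~\ref{prop_if_part_thm_dictator_quantum}, there exists a minion homomorphism $\Qminion_\HH\to\Dminion$. By Lemma~\ref{lem_projections_map_everywhere} applied to the polymorphism minion $\Pol(\Y)$, there exists a minion homomorphism $\Dminion\to\Pol(\Y)$. The composition of these two maps is again a minion homomorphism (it preserves arities and commutes with minor operations, as both of its factors do), so $\Qminion_\HH\to\Pol(\Y)$. By Theorem~\ref{thm_quantum_trivial_iff_minion_homo}, $\Y$ does not have quantum advantage over $\HH$. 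As $\Y$ was arbitrary, the claim follows.
\end{proof}
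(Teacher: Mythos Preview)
Your proof is correct and follows exactly the same approach as the paper: compose the minion homomorphism $\Qminion_\HH\to\Dminion$ from Proposition~\ref{prop_if_part_thm_dictator_quantum} with $\Dminion\to\Pol(\Y)$ from Lemma~\ref{lem_projections_map_everywhere}, then apply Theorem~\ref{thm_quantum_trivial_iff_minion_homo}. The only difference is that you spell out the routine composability of minion homomorphisms, which the paper leaves implicit.
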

\begin{proof}
    If $\dim(\HH)\leq 2$,  Lemma~\ref{lem_projections_map_everywhere} and Proposition~\ref{prop_if_part_thm_dictator_quantum} yield
    \begin{align*}
        \Qminion_\HH\to\Dminion\to\Pol(\Y)
    \end{align*}
    for any relational structure $\Y$, and the conclusion immediately follows from Theorem~\ref{thm_quantum_trivial_iff_minion_homo}.
\end{proof}
Corollary~\ref{cor_no_quantum_advantage_in_low_dimension} is an instance of a more general phenomenon in quantum information, observed in~\cite{BrassardMT05} in the context of pseudo-telepathy games (see also~\cite{AbramskyBCSKM17}): In order for an $n$-qubit system to exhibit strong non-locality with any finite number of local measurements, $n$ must be at least $3$.
We now turn to the ``only if'' part of Theorem~\ref{thm_quantum_minion_vs_projections} (Proposition~\ref{prop_only_if_part_thm_dictator_quantum}). Our proof builds on a deep result by Gleason on measures on the subspaces of Hilbert spaces ---  which we state next, after introducing the necessary terminology.

The \emph{trace} $\tr(p)$ of a linear map $p\in\End_\HH$ is the trace of a matrix representation of $p$ in any basis of $\HH$.
We say that $p$ is \emph{positive semidefinite} (and write $p\succeq \zeroOp_\HH$) if $p$ is self-adjoint and $\ang{p(\bh),\bh}\geq 0$ for each $\bh\in\HH$.
A \emph{measure on the subspaces of $\HH$} is a function $\mu:L_\HH\to\R_{\geq 0}$ such that, for each collection $\{v_1,\dots,v_\ell\}$ of mutually orthogonal subspaces of $\HH$, it holds that
\begin{align}
\label{eqn_12_06_13012024}
  \mu\left(\bigboxplus_{i\in [\ell]}v_i\right)=\sum_{i\in [\ell]}\mu(v_i).  
\end{align}

\begin{thm}[\cite{gleason1975measures}]
\label{thm_gleason}
    Let $\mu$ be a measure on the subspaces of a finite-dimensional (real or complex) Hilbert space $\HH$ of dimension at least three. Then there exists a positive semidefinite linear map $m\in\End_\HH$ such that 
    \begin{align}
    \label{eqn_1301_1620}
        \mu(v)=\tr(m\cdot\pr_v)
    \end{align} 
    for each $v\in L_\HH$.\footnote{We point out that Gleason's Theorem holds in the more general setting of separable, not necessarily finite-dimensional Hilbert spaces $\HH$. In the general formulation, $\mu$ needs to be defined on all closed subspaces of $\HH$, and it needs to satisfy the condition~\eqref{eqn_12_06_13012024} for each countable collection of mutually orthogonal subspaces whose linear span is closed.}
\end{thm}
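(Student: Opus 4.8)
The plan is to follow Gleason's original strategy, which proceeds through \emph{frame functions}. First I would restrict $\mu$ to the lines of $\HH$, defining $f(\bh)=\mu(\Span\{\bh\})$ for every unit vector $\bh\in\HH$. Applying \eqref{eqn_12_06_13012024} to the decomposition of $\HH$ into the lines spanned by an orthonormal basis $\be_1,\dots,\be_n$ shows that $\sum_{i}f(\be_i)=\mu(\HH)=:W$ for \emph{every} orthonormal basis; that is, $f$ is a \emph{frame function of weight $W$}, and since $\mu\geq 0$ it is bounded, $0\leq f\leq W$. Conversely, decomposing an arbitrary $v\in L_\HH$ into the lines of one of its orthonormal bases shows that $\mu$ is recovered from $f$ by additivity, and that the desired conclusion $\mu(v)=\tr(m\cdot\pr_v)$ for all $v\in L_\HH$ is equivalent to $f(\bh)=\ang{m\bh,\bh}$ for all unit $\bh$. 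Thus the theorem reduces to the statement that \emph{every nonnegative frame function on the unit sphere of $\HH$ is the restriction of a positive semidefinite quadratic form}.

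The heart of the matter is the three-dimensional real case: a nonnegative frame function $f$ on the unit sphere $S^2\subseteq\R^3$ has the form $\bh\mapsto\ang{A\bh,\bh}$ for some self-adjoint $A\succeq 0$. I would establish this in two steps. \textbf{(a)~Continuity.} On every great circle, parametrised by an angle $\theta$, the frame property gives $f(\theta)+f(\theta+\tfrac{\pi}{2})=W$; combining this identity with nonnegativity and a pivoting argument comparing $f$ on neighbouring great circles, one shows that $f$ is continuous on $S^2$. (Alternatively, one can replace (a)--(b) by the elementary Cooke--Keane--Moran argument, which derives the regularity of $f$ on $S^2$ directly from the great-circle identity by real-variable manipulations, bypassing spherical harmonics.) \textbf{(b)~Harmonic expansion.} Expand the continuous function $f$ in spherical harmonics, $f=\sum_{k\geq 0}f_k$. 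Averaging the identity $\sum_i f(R\be_i)=W$ over rotations $R\in SO(3)$ shows that each component $f_k$ is itself a frame function, of weight $0$ for $k\geq 1$; the antipodal symmetry $f(\bh)=f(-\bh)$ forces $f_k=0$ for odd $k$; and the relevant spherical-harmonic identities show that a homogeneous frame function of even degree $k\geq 4$ and weight $0$ must vanish. Hence $f=f_0+f_2$, i.e.\ $f$ is the restriction of a quadratic form $\bh\mapsto\ang{A\bh,\bh}$, and $f\geq 0$ forces $A\succeq 0$.

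With this base case in hand I would bootstrap in two stages. For a \emph{real} Hilbert space $\HH$ of dimension $n\geq 3$: every $3$-dimensional subspace $V\subseteq\HH$ carries the restriction of $f$, which is a nonnegative frame function on a $2$-sphere, hence equal to $\bh\mapsto\ang{A_V\bh,\bh}$ for a unique positive semidefinite $A_V$ on $V$; uniqueness of the representing form makes the family $\{A_V\}$ compatible on overlaps, so it glues to a single self-adjoint $m\in\End_\HH$ with $m\succeq\zeroOp_\HH$ and $f(\bh)=\ang{m\bh,\bh}$. For a \emph{complex} Hilbert space $\HH$ (of complex dimension $n$, so its realification has real dimension $2n\geq 6$): the frame function $f$ is phase-invariant, $f(e^{\mathrm{i}\theta}\bh)=f(\bh)$. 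Regarding $\HH$ as a real Hilbert space with inner product $\operatorname{Re}\ang{\cdot,\cdot}$, a complex orthonormal basis $\bu_1,\dots,\bu_n$ yields the real orthonormal basis $\bu_1,\mathrm{i}\bu_1,\dots,\bu_n,\mathrm{i}\bu_n$, along which $f$ sums to $2W$; using phase-invariance together with an approximation argument to push the frame condition to all real orthonormal bases, one sees that $f$ is the restriction of a nonnegative frame function on the real unit sphere. The real result just proved then gives $f(\bh)=\ang{A\bh,\bh}$ for a real-symmetric $A\succeq 0$, and phase-invariance plus polarisation force $A$ to commute with multiplication by $\mathrm{i}$, so $A$ is the realification of a positive semidefinite Hermitian $m\in\End_\HH$ with $\ang{A\bh,\bh}=\ang{m\bh,\bh}$. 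Translating back through the frame-function correspondence of the first paragraph yields $\mu(v)=\tr(m\cdot\pr_v)$ for all $v\in L_\HH$.

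The main obstacle is step (a)--(b): proving that nonnegative frame functions on $S^2$ are quadratic forms is the genuine analytic content of the theorem, whereas the reduction to frame functions, the gluing over $3$-dimensional real subspaces, and the complex-to-real passage are essentially organisational (the last requiring some care with the complex structure). This is also where the hypothesis $\dim(\HH)\geq 3$ is indispensable: on a $2$-dimensional space the great-circle constraint alone does not determine $f$ --- precisely the slack exploited in Lemma~\ref{lem_Hopft_fibration}.
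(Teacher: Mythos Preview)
The paper does not prove Theorem~\ref{thm_gleason}; it is quoted from~\cite{gleason1975measures} as a black box and then applied in the proof of Proposition~\ref{prop_only_if_part_thm_dictator_quantum}. There is therefore no ``paper's own proof'' to compare against.

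That said, what you have sketched is essentially Gleason's original argument (with the Cooke--Keane--Moran variant noted as an alternative for the $S^2$ step), and the outline is sound: the reduction to frame functions, the hard analytic core on $S^2$, the gluing over $3$-dimensional subspaces, and the complex-to-real passage via realification and phase-invariance are all standard and correctly identified. Your diagnosis that the $\dim(\HH)\geq 3$ hypothesis is exactly what makes the $S^2$ step work --- and that its failure in dimension $2$ is what Lemma~\ref{lem_Hopft_fibration} exploits --- is also on point. One caveat: the passage from the complex frame condition to the real frame condition on the realification is not quite as automatic as you suggest (not every real orthonormal basis of the realification arises from a complex one), so the ``approximation argument'' you allude to does real work there; but this is a known wrinkle in the standard proof, not a gap in your plan.
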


Proposition~\ref{prop_only_if_part_thm_dictator_quantum} is established by showing that a homomorphism from the quantum minion to the dictator minion would yield a well-defined, Boolean measure $\mu$ on the subspaces of $\HH$. However, if the dimension of $\HH$ is at least $3$, 
Theorem~\ref{thm_gleason} implies that $\mu$ must have a continuous dependence on its inputs and, in particular, it must assume intermediate values between $0$ and $1$, thus yielding a contradiction.

\begin{prop}
\label{prop_only_if_part_thm_dictator_quantum}
    If $\dim(\HH)\geq 3$, $\Qminion_\HH\not\to\Dminion$.
\end{prop}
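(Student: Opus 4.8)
The plan is to derive a contradiction from the existence of a minion homomorphism $\xi:\Qminion_\HH\to\Dminion$ when $\dim(\HH)\geq 3$, by using $\xi$ to manufacture a $\{0,1\}$-valued measure on $L_\HH$ and then invoking Gleason's Theorem (Theorem~\ref{thm_gleason}) to contradict its discreteness. First I would recall that elements of $\Qminion_\HH^{(\ell)}$ are tuples of mutually orthogonal subspaces summing to $\HH$, and that $\xi$ sends such a tuple $\bq$ to a standard unit vector $\be_{i(\bq);\ell}$; the coordinate $i(\bq)$ is the ``chosen'' index. The key observation is that, for a single subspace $v\in L_\HH$ with $v\neq\{\bzero_\HH\}$ and $v\neq\HH$, the pair $(v,v^\perp)$ is an element of $\Qminion_\HH^{(2)}$, and $\xi$ of it is either $\be_{1;2}$ or $\be_{2;2}$. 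I would define $\mu(v)=1$ if $\xi\begin{psmallmatrix}v\\v^\perp\end{psmallmatrix}=\be_{1;2}$ and $\mu(v)=0$ otherwise, with the conventions $\mu(\{\bzero_\HH\})=0$ and $\mu(\HH)=1$ (the latter forced since $(\HH)\in\Qminion_\HH^{(1)}$ and $\xi$ must send it to $\be_{1;1}$, and more importantly so that additivity holds on the trivial partition).

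The main work is verifying that $\mu$ is a genuine measure, i.e., that it is finitely additive over orthogonal families. Given mutually orthogonal $v_1,\dots,v_\ell$ with $\bigboxplus_i v_i = w$, I would first reduce to the case $w=\HH$: if $w\neq\HH$, append $w^\perp$ as an extra coordinate, so that $(v_1,\dots,v_\ell,w^\perp)\in\Qminion_\HH^{(\ell+1)}$. The heart of the argument is a minor-preservation computation. Consider the element $\bq=(v_1,\dots,v_\ell)\in\Qminion_\HH^{(\ell)}$ (assuming now $\bigboxplus v_i=\HH$, i.e.\ $\ell$ includes any complement coordinate), and for any index $k\in[\ell]$ take the collapse map $\pi_k:[\ell]\to[2]$ sending $k\mapsto 1$ and every other index to $2$. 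Then $\bq_{/\pi_k}=\begin{psmallmatrix}v_k\\ \bigboxplus_{i\neq k}v_i\end{psmallmatrix}=\begin{psmallmatrix}v_k\\ v_k^\perp\end{psmallmatrix}$, so $\xi(\bq_{/\pi_k})=\xi(\bq)_{/\pi_k}$ reads: $\mu(v_k)=1$ if and only if $\pi_k$ maps the chosen index $i(\bq)$ to $1$, i.e.\ if and only if $i(\bq)=k$. Since $i(\bq)$ is a single well-defined index in $[\ell]$, exactly one $k$ satisfies $\mu(v_k)=1$ and the rest give $\mu(v_k)=0$; hence $\sum_i\mu(v_i)=1=\mu(\HH)$, which is additivity. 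One also needs to check consistency of the two-subspace definition under reordering (that $\mu(v)$ doesn't depend on extraneous choices) — this follows because $\xi$ commutes with the transposition map $[2]\to[2]$, so $\xi\begin{psmallmatrix}v\\v^\perp\end{psmallmatrix}$ and $\xi\begin{psmallmatrix}v^\perp\\v\end{psmallmatrix}$ are swaps of each other, forcing $\mu(v)+\mu(v^\perp)=1$, consistent with the additivity just derived for $\ell=2$. (A small subtlety: when some $v_i=\{\bzero_\HH\}$, one should verify $\mu(\{\bzero_\HH\})=0$ is forced; inserting a zero coordinate and applying minor-preservation with the map deleting it shows the chosen index never lands on a zero coordinate, so this is fine.)

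Having established that $\mu$ is a measure on the subspaces of $\HH$, and since $\dim(\HH)\geq 3$, Gleason's Theorem produces a positive semidefinite $m\in\End_\HH$ with $\mu(v)=\tr(m\cdot\pr_v)$ for all $v\in L_\HH$. Restricting to the $1$-dimensional subspaces, $v=\Span(\bh)$ for a unit vector $\bh$, we get $\mu(v)=\ang{m(\bh),\bh}$, which is a continuous function of $\bh$ on the (connected) unit sphere of $\HH$. But $\mu$ takes only the values $0$ and $1$, so by connectedness it is constant on the sphere — either identically $0$ or identically $1$. The first contradicts $\mu(\HH)=1$ combined with additivity over an orthonormal basis ($\mu(\HH)=\sum_{i}\mu(\Span(\be_i))=0$); the second contradicts additivity as well, since for $\dim(\HH)\geq 2$ an orthonormal basis gives $\mu(\HH)=\sum_i 1 = \dim(\HH)\geq 2 \neq 1$. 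Either way we reach a contradiction, so no minion homomorphism $\Qminion_\HH\to\Dminion$ exists. The step I expect to be the main obstacle is not any single calculation but getting the bookkeeping of the minor-preservation argument exactly right — in particular handling zero coordinates, the reduction from an arbitrary orthogonal family to one summing to $\HH$, and confirming that the ``chosen coordinate'' behaves as a consistent selection function across all these collapse maps; once that is clean, the Gleason/connectedness finish is short.
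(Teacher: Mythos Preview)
Your proposal is correct and follows essentially the same strategy as the paper's proof: construct a $\{0,1\}$-valued measure on $L_\HH$ from the hypothetical minion homomorphism via minor-preservation, then invoke Gleason's Theorem to reach a contradiction. The only notable difference is in the endgame --- the paper explicitly diagonalises the density operator $m$ and exhibits a subspace $w$ with $\mu(w)=\tfrac{1}{2}$, whereas you use connectedness of the unit sphere to force $\bh\mapsto\ang{m(\bh),\bh}$ to be constant on lines; your finish is arguably slicker, but the underlying idea is identical.
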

\begin{proof}
    Suppose, for the sake of contradiction, that there exists a homomorphism $\xi:\Qminion_\HH\to\Dminion$. We associate with any linear subspace $v\in L_\HH$ the vector
    \begin{align*}
        \widehat{v}
        =
        \begin{bmatrix}
        v\\v^\perp
    \end{bmatrix}\in\Qminion_\HH^{(2)}.  
    \end{align*}
Recall that $\Dminion^{(2)}=\{\be_{1;2},\be_{2;2}\}$,
and consider the function $\mu: L_\HH\to\R_{\geq 0}$ defined by 
    \begin{align*}
        \mu(v)
        =
        \left\{
        \begin{array}{llll}
             1&\mbox{if}&\xi(\widehat{v})=\be_{1;2}  \\
             0&\mbox{if}&\xi(\widehat{v})=\be_{2;2}
        \end{array}
        \right.
    \end{align*}
for each $v\in L_\HH$.
We claim that $\mu$ is a measure on the subspaces of $\HH$.
Take a collection $\{v_1,\dots,v_\ell\}$ of mutually orthogonal subspaces of $\HH$, and let $w=\bigboxplus_{i\in [\ell]}v_i$. 

Suppose first that $\mu(w)=1$, which means that $\xi(\widehat{w})=\be_{1;2}$. Take the vector
\begin{align*}
    \bq=
    \begin{bmatrix}
        v_1&
        v_2&
        \dots&
        v_\ell&
        w^\perp
    \end{bmatrix}^{\top}
    \in\Qminion_\HH^{(\ell+1)},
    \end{align*}
and consider the map $\pi:[\ell+1]\to [2]$ such that $\pi(i)=1$ if $i\in [\ell]$, and $\pi(\ell+1)=2$. Note that $\bq_{/\pi}=\widehat{w}$; using that $\xi$ is a homomorphism, we get
\begin{align}
\label{eqn_12182023}
    \be_{1;2}
    =
    \xi(\widehat{w})
    =
    \xi(\bq_{/\pi})
    =
    \xi(\bq)_{/\pi},
\end{align}
whence it follows that $\xi(\bq)=\be_{j;\ell+1}$ for some $j\in [\ell]$. Consider now, for $k\in [\ell]$, the map $\pi_k:[\ell+1]\to[2]$ such that $\pi_k(k)=1$ and $\pi_k(i)=2$ for each $i\in [\ell+1]\setminus\{k\}$.
Observe that $\bq_{/\pi_k}=\widehat{v_k}$. Therefore,
\begin{align}
\label{eqn_1221_2023}
    \xi(\widehat{v_k})
    =
    \xi(\bq_{/\pi_k})
    =
    \xi(\bq)_{/\pi_k}
    =
    (\be_{j;\ell+1})_{/\pi_k}
    =
    \be_{\pi_k(j);2}.
\end{align}
It follows from the above that $\mu(v_k)=1$ precisely when $\pi_k(j)=1$. In other words, $\mu(v_j)=1$ and $\mu(v_k)=0$ for each $k\neq j$.
Therefore,
\begin{align*}
    \sum_{i\in [\ell]}\mu(v_i)
    =
    1
    =
    \mu(w)
    =
    \mu\left(\bigboxplus_{i\in [\ell]}v_i\right),
\end{align*}
as needed. 

Suppose now that $\mu(w)=0$. In this case, the equation~\eqref{eqn_12182023} turns into
\begin{align*}
    \be_{2;2}
    =
    \xi(\widehat{w})
    =
    \xi(\bq_{/\pi})
    =
    \xi(\bq)_{/\pi},
\end{align*}
which means that $\xi(\bq)=\be_{\ell+1;\ell+1}$.
Then, for each $k\in [\ell]$, the equation~\eqref{eqn_1221_2023} becomes
\begin{align*}
    \xi(\widehat{v_k})
    =
    \xi(\bq_{/\pi_k})
    =
    \xi(\bq)_{/\pi_k}
    =
    (\be_{\ell+1;\ell+1})_{/\pi_k}
    =
    \be_{\pi_k(\ell+1);2}
    =
    \be_{2;2},
\end{align*}
thus showing that $\mu(v_k)=0$ for each $k\in [\ell]$. It follows that 
\begin{align*}
    \sum_{i\in [\ell]}\mu(v_i)
    =
    0
    =
    \mu(w)
    =
    \mu\left(\bigboxplus_{i\in [\ell]}v_i\right).
\end{align*}

Hence, $\mu$ is a measure on the subspaces of $\HH$, as claimed. We can then invoke Theorem~\ref{thm_gleason} and find a linear map $m\in\End_\HH$ such that $m\succeq \zeroOp_\HH$ and $\mu(v)=\tr(m\cdot \pr_v)$ for each $v\in L_\HH$. 
In particular, 
\begin{align}
\label{eqn_1301_1700}
    \mu(\HH)
    =\tr(m\cdot\pr_\HH)
    =\tr(m\cdot\id_\HH)
    =\tr(m).
\end{align}
We now claim that $\mu(\HH)=1$. Indeed, if $\mu(\HH)=0$, it would follow from~\eqref{eqn_1301_1700} that 
$\tr(m)=0$; since $m$ is positive semidefinite, this would mean that $m=\zeroOp_\HH$; then,~\eqref{eqn_1301_1620} would force $\mu(v)=0$ for each $v\in L_\HH$. On the other hand, letting $\tau:[2]\to[2]$ be the map $1\mapsto 2$, $2\mapsto 1$, we would find
\begin{align*}
    \xi(\widehat{\{\bzero_\HH\}})
    =
    \xi(\widehat{\HH}_{/\tau})
    =
    \xi(\widehat{\HH})_{/\tau}
    =
    ({\be_{2;2}})_{/\tau}
    =
    \be_{1;2}
\end{align*}
and, thus, $\mu(\{\bzero_\HH\})=1$, a contradiction.

Next, let $d=\dim(\HH)$, let $\{\bv_1,\dots,\bv_d\}$ be an orthonormal basis of eigenvectors for $m$, and
let $\lambda_1,\dots,\lambda_d$ be the corresponding (real, nonnegative) eigenvalues of $m$. It follows from~\eqref{eqn_1301_1700} that 
\begin{align*}
    1=\mu(\HH)=\tr(m)=\sum_{i\in [d]}\lambda_i.
\end{align*}
Therefore, letting $v_i$ be the linear span of $\bv_i$, we have
\begin{align*}
    \mu(v_i)
    =
    \tr(m\cdot\pr_{v_i})
    =
    \lambda_i
\end{align*}
for each $i\in[d]$.
On the other hand, the range of $\mu$ is included in $\{0,1\}$ by construction. We deduce that the spectrum of $m$ consists of the values $1$ (with multiplicity $1$) and $0$ (with multiplicity $d-1$). In other words, $m=\pr_{v_j}$ for some $j\in [d]$. Choose an index ${k}\neq j$, and let $w$ be the linear span of the vector $\bv_j+\bv_{k}$. Using that $m$ is self-adjoint, we obtain
\begin{align*}
    \mu(w)
    &=
    \tr(m\cdot\pr_w)
    =
    \sum_{i\in [d]}
    \ang{m\cdot\pr_w(\bv_i),\bv_i}\\
    &=
    \sum_{i\in [d]}
    \ang{\pr_w(\bv_i),m(\bv_i)}
    =
    \sum_{i\in [d]}
    \ang{\pr_w(\bv_i),\pr_{v_j}(\bv_i)}\\
    &=
    \ang{\pr_w(\bv_j),\bv_j}.
\end{align*}
Since an orthonormal basis for $w$ is given by the single vector $\bw=\frac{1}{\sqrt{2}}(\bv_j+\bv_{k})$, we have that 
\begin{align*}
    \pr_w(\bv)
    =
    \ang{\bv,\bw}\bw
\end{align*}
for each $\bv\in\HH$. Observe that $\ang{\bw,\bv_j}=\frac{1}{\sqrt{2}}$. As a consequence,
\begin{align*}
    \mu(w)
    =
    \left\langle\ang{\bv_j,\bw}\bw,\bv_j\right\rangle
    =
    \frac{1}{2},
\end{align*}
a contradiction.
\end{proof}
Proposition~\ref{prop_only_if_part_thm_dictator_quantum} yields a separation of the quantum minion from the dictator minion. Together with Theorem~\ref{thm_quantum_trivial_iff_minion_homo}, it implies that $\Y$ has quantum advantage over any Hilbert space of dimension at least $3$ whenever $\Pol(\Y)\to\Dminion$.
In turn, the algebraic theory of CSPs
allows formulating the latter condition in terms of the complexity of $\CSP(\Y)$.

Since homomorphisms between polymorphism minions provide polynomial-time reductions between CSPs (Theorem~\ref{thm_minion_homos_reductions}) and since the dictator minion is homomorphic to any minion (Lemma~\ref{lem_projections_map_everywhere}), if $\Pol(\Y)\to\Dminion$ it must hold that $\CSP(\Y)$ is $\NP$-complete. It follows from the CSP Dichotomy Theorem proved in~\cite{Zhuk20:jacm,Bulatov17:focs} (in its reformulation in terms of polymorphism minions from~\cite{BOP18}) that the converse implication is also true.
\begin{thm}
[\cite{BOP18,Zhuk20:jacm,Bulatov17:focs}]
\label{thm_NP_completeness_projection_minion}
    Let $\Y$ be a  relational structure. Then $\CSP(\Y)$ is $\NP$-complete if $\Pol(\Y)\to\Dminion$, and it is tractable in polynomial time otherwise.
\end{thm}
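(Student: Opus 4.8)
I would prove the two implications separately, since they are of very different depth.

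The ``only if'' direction --- if $\Pol(\Y)\to\Dminion$ then $\CSP(\Y)$ is $\NP$-complete --- is immediate from the machinery already assembled. Membership in $\NP$ is clear, as a homomorphism $\X\to\Y$ is a polynomial-size certificate that can be verified in polynomial time. For $\NP$-hardness, fix once and for all a structure $\Y_0$ with $\CSP(\Y_0)$ $\NP$-hard --- for concreteness $\Y_0=\K_3$, so that $\CSP(\Y_0)$ is graph $3$-colouring, which is $\NP$-hard by Karp~\cite{Karp72}. I would then compose the assumed minion homomorphism $\Pol(\Y)\to\Dminion$ with the minion homomorphism $\Dminion\to\Pol(\Y_0)$ provided by Lemma~\ref{lem_projections_map_everywhere}; since minion homomorphisms compose, this yields $\Pol(\Y)\to\Pol(\Y_0)$, and Theorem~\ref{thm_minion_homos_reductions} then furnishes a polynomial-time reduction from $\CSP(\Y_0)$ to $\CSP(\Y)$. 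Hence $\CSP(\Y)$ is $\NP$-complete.

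For the ``if'' direction --- if $\Pol(\Y)\not\to\Dminion$ then $\CSP(\Y)$ is solvable in polynomial time --- there is no elementary route, and this is exactly where the resolution of the Feder--Vardi Dichotomy Conjecture is used. The plan is: (1)~reduce to a core $\Y'$ of $\Y$; this preserves the problem ($\CSP(\Y)=\CSP(\Y')$) and the hypothesis, because $\Y$ and $\Y'$ are homomorphically equivalent and hence $\Pol(\Y)$ and $\Pol(\Y')$ are homomorphically equivalent minions (the retraction and inclusion homomorphisms between $\Y$ and $\Y'$ induce minor-preserving maps in both directions), so $\Pol(\Y')\not\to\Dminion$ as well; (2)~pass to the idempotent setting by adjoining all constants to $\Y'$, obtaining $\Y''$ with $\CSP(\Y'')$ polynomial-time equivalent to $\CSP(\Y')$ and, using that $\Y'$ is a core, with $\Pol(\Y'')$ homomorphically equivalent to $\Pol(\Y')$, hence still with $\Pol(\Y'')\not\to\Dminion$; (3)~invoke the algebraic dictionary underlying~\cite{BOP18}, together with the weak near-unanimity / cyclic / Siggers term theorems, to rephrase $\Pol(\Y'')\not\to\Dminion$ as the statement that the polymorphism clone of $\Y''$ is \emph{Taylor}; (4)~apply the CSP Dichotomy Theorem of Bulatov~\cite{Bulatov17:focs} and Zhuk~\cite{Zhuk20:jacm}, which asserts precisely that $\CSP(\Y'')$, and therefore $\CSP(\Y)$, is solvable in polynomial time in this case.

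The main obstacle is entirely located in this second implication, and it is one I do not intend to resolve directly: steps (1)--(3) are routine bookkeeping from the algebraic theory of CSPs (and are already implicit in~\cite{BOP18}), while step (4) is the deep Bulatov--Zhuk theorem, which I would use as a black box. The only genuinely new content on our side is the short composition-and-reduction argument of the first paragraph.
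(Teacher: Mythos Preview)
Your proposal is correct and matches the paper's approach: the paper does not give a formal proof of this cited theorem, but the discussion preceding it sketches exactly your two-part argument --- compose $\Pol(\Y)\to\Dminion\to\Pol(\Y_0)$ and apply Theorem~\ref{thm_minion_homos_reductions} for $\NP$-hardness, and invoke the Bulatov--Zhuk Dichotomy Theorem (in the minion formulation of~\cite{BOP18}) as a black box for tractability. Your steps (1)--(3) spell out the standard core/constants/Taylor bookkeeping that the paper leaves implicit in the citation to~\cite{BOP18}, but the route is the same.
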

As a consequence, we obtain the following sufficient condition for the occurrence of quantum advantage.
\begin{cor}
\label{cor_quantum_trivial_implies_tractable}
Let $\Y$ be a relational structure such that
$\CSP(\Y)$ is not tractable in polynomial time. If $\dim(\HH)\geq 3$, $\Y$ has quantum advantage over $\HH$.
\end{cor}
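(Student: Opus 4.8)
\textbf{Proof proposal for Corollary~\ref{cor_quantum_trivial_implies_tractable}.}
The plan is to simply chain together the three results the excerpt has just assembled: Theorem~\ref{thm_NP_completeness_projection_minion} (the CSP dichotomy in its minion reformulation), Proposition~\ref{prop_only_if_part_thm_dictator_quantum} (the separation $\Qminion_\HH\not\to\Dminion$ in dimension $\geq 3$), and Theorem~\ref{thm_quantum_trivial_iff_minion_homo} (quantum advantage $\Leftrightarrow$ $\Qminion_\HH\not\to\Pol(\Y)$). The only genuine content is verifying that the hypotheses line up correctly, since each of the hard steps has already been done.

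First I would invoke Theorem~\ref{thm_NP_completeness_projection_minion}: since $\CSP(\Y)$ is, by hypothesis, not tractable in polynomial time, the dichotomy forces $\CSP(\Y)$ to be $\NP$-complete and, more precisely, forces $\Pol(\Y)\to\Dminion$. (Contrapositively: if no minion homomorphism $\Pol(\Y)\to\Dminion$ existed, then $\CSP(\Y)$ would be polynomial-time tractable, contradicting the assumption.) Next I would argue that $\Qminion_\HH\not\to\Pol(\Y)$. Indeed, suppose for contradiction that there were a minion homomorphism $\eta:\Qminion_\HH\to\Pol(\Y)$. Composing it with the homomorphism $\Pol(\Y)\to\Dminion$ obtained in the previous step, and using that the composition of two minion homomorphisms is again a minion homomorphism (minion homomorphism is a transitive relation, as noted at the start of Section~4), we would obtain a minion homomorphism $\Qminion_\HH\to\Dminion$. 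But this is impossible by Proposition~\ref{prop_only_if_part_thm_dictator_quantum}, since $\dim(\HH)\geq 3$. Hence $\Qminion_\HH\not\to\Pol(\Y)$.

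Finally I would apply Theorem~\ref{thm_quantum_trivial_iff_minion_homo}, which states precisely that $\Y$ has quantum advantage over $\HH$ if and only if $\Qminion_\HH\not\to\Pol(\Y)$. Having just established the right-hand side, the conclusion that $\Y$ has quantum advantage over $\HH$ follows immediately.

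There is no real obstacle here: the corollary is a bookkeeping consequence of machinery already in place, and the ``hard parts'' (the geometric separation via Gleason's theorem, and the translation between quantum homomorphisms and minion homomorphisms via free structures and minion tests) are entirely contained in the cited statements. The only point requiring a moment's care is the direction of the contrapositive in the use of Theorem~\ref{thm_NP_completeness_projection_minion} — one must read off that non-tractability yields the existence of the homomorphism $\Pol(\Y)\to\Dminion$, not merely $\NP$-completeness as an abstract complexity statement — but this is immediate from the way that theorem is phrased.
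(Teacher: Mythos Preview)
Your proposal is correct and matches the paper's own proof essentially step for step: invoke Theorem~\ref{thm_NP_completeness_projection_minion} to get $\Pol(\Y)\to\Dminion$, argue by contradiction (via composition) that $\Qminion_\HH\not\to\Pol(\Y)$ using Proposition~\ref{prop_only_if_part_thm_dictator_quantum}, and conclude with Theorem~\ref{thm_quantum_trivial_iff_minion_homo}. There is nothing to add.
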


\begin{proof}
If $\CSP(\Y)$ is not in $\operatorname{P}$, it follows from Theorem~\ref{thm_NP_completeness_projection_minion} that $\Pol(\Y)\to\Dminion$. If $\dim(\HH)\geq 3$, we must then have $\Qminion_\HH\not\to\Pol(\Y)$, as otherwise composing the homomorphisms would yield $\Qminion_\HH\to\Dminion$, contradicting Proposition~\ref{prop_only_if_part_thm_dictator_quantum}. Hence, the conclusion follows from Theorem~\ref{thm_quantum_trivial_iff_minion_homo}.
\end{proof}
Equivalently, Corollary~\ref{cor_quantum_trivial_implies_tractable} may be phrased as follows: If $\CSP(\Y)$ is not in $\operatorname{P}$ and $\dim(\HH)\geq 3$,
there must exist some $\X$ such that $\X\qTo{\HH}\Y$ but $\X\not\to\Y$.

\section{$\Qminion_\HH$ and the bounded-width minion}
As we have seen, no structure has quantum advantage over Hilbert spaces of dimension $1$ or $2$. If the dimension is at least $3$, Corollary~\ref{cor_quantum_trivial_implies_tractable} provides a sufficient condition for the occurrence of quantum advantage, in terms of the complexity of $\CSP(\Y)$. In this section, we establish a necessary condition, also formulated in terms of a complexity notion of $\CSP(\Y)$ --- namely, its \emph{width}. 

The concept of width is central in the theory of constraint satisfaction. 
On a high level, the width of a CSP expresses the power of local-consistency techniques for its solvability. More formally, given two similar relational structures $\X$ and $\Y$ and some fixed $k\in\N$,
the $k$-consistency algorithm provides a heuristic to check if $\X\to\Y$, by testing for the existence of a non-empty family 
of partial homomorphisms from substructures of $\X$ of size at most $k$ to $\Y$, that is closed under restriction and under extension up to size $k$. If such a family exists, $\X$ and $\Y$ are said to be $k$-consistent.\footnote{We refer to~\cite{Barto14:jacm} for the formal definition of the $k$-consistency algorithm.} 
As long as $k$ is a constant, there exists a polynomial-time procedure (in the size of $\X$) to check whether $\X$ and $\Y$ are $k$-consistent. If $\X\to\Y$, a family of partial homomorphisms as described above always exists.
We say that $\CSP(\Y)$ has width $k$ if the $k$-consistency algorithm always detects unsatisfiable instances; i.e.,
if $\X$ and $\Y$ are $k$-consistent precisely when $\X\to\Y$. If that is the case, this method yields a polynomial-time algorithm for the solution of $\CSP(\Y)$. CSPs that are solvable in some constant width $k\in\N$ are known as \emph{bounded-width} CSPs, and constitute an important fragment of all tractable CSPs. 

The goal of this section is to prove that the width of a CSP parameterised by a structure having quantum advantage over some finite-dimensional Hilbert space must be unbounded (Corollary~\ref{cor_bounded_width_no_quantum_advantage}). 
Our proof is in some sense specular to the argument in the previous section. Indeed, this time we need to show that the quantum minion $\Qminion_\HH$ is located ``left enough'' in the minion homomorphism preorder, in that it always admits a homomorphism to one specific minion --- recently introduced in~\cite{cz23soda:minions} and, independently, in~\cite{bgs_robust23stoc} --- that is known to capture bounded width. Then, the result will follow from Theorem~\ref{thm_quantum_trivial_iff_minion_homo}. We shall conclude the section by outlining an alternative argument to obtain the same result, that makes use of a different minion introduced in~\cite{cz23sicomp:clap}. 

We start by defining the minion for bounded width.
\begin{defn}[\cite{cz23soda:minions,bgs_robust23stoc}]
$\Sminion$ is the linear minion over the abelian monoid $(\R,+)$ whose elements are all real matrices $M$ of finite but arbitrary size, such that $MM^{\top}$ is a diagonal matrix of trace $1$.      
\end{defn}
The fact that the linear minion $\Sminion$ captures bounded width follows from its connection to 
a CSP algorithm known as the \emph{standard semidefinite programming} relaxation (SDP). Essentially, SDP relaxes a given CSP by considering a semidefinite program whose variables are real vectors having constrained inner products. We now present the formal description of $\SDP$, following~\cite{Raghavendra08:everycsp}. 
Let $\X$ and $\Y$ be two $\sigma$-structures.
We consider a real vector $\bu_{x,y}$ for each $x\in X$, $y\in Y$, and a real number $v_{R,\bx,\by}$ for each $R\in\sigma$, $\bx\in R^\X$, and $\by\in R^\Y$. We also fix a real vector $\bu_0$ of norm $1$.  Consider the system
\begin{align*}
\begin{array}{lllll}
     \displaystyle\sum_{\by\in R^\Y}v_{R,\bx,\by}=1& (R\in\sigma,\bx\in R^\X)\\[5pt]
    \displaystyle v_{R,\bx,\by}\geq 0 & (R\in\sigma,\bx\in R^\X, \by\in R^\Y)\\[5pt]
\displaystyle \ang{\bu_{x_i,y},\bu_0}=\sum_{\substack{\by\in R^\Y\\y_i=y}}v_{R,\bx,\by}&
\left(
\begin{array}{llll}
     R\in\sigma,\bx\in R^\X,  \\
     i\in[\ar(R)],y\in Y 
\end{array}
\right)
\\[5pt]
\displaystyle \ang{\bu_{x_i,y},\bu_{x_j,y'}}=\sum_{\substack{\by\in R^\Y\\y_i=y,\,y_j=y'}}v_{R,\bx,\by}&
\left(
\begin{array}{llll}
     R\in\sigma,\bx\in R^\X,  \\
     i,j\in[\ar(R)],y,y'\in Y 
\end{array}
\right).
\end{array}
\end{align*}
We say that SDP \emph{solves} $\CSP(\Y)$ if $\X\to\Y$ whenever the system above admits a feasible solution such that all vectors $\bu_{x,y}$ belong to a real vector space of some finite, large-enough dimension.

\begin{thm}[\cite{cz23soda:minions,bgs_robust23stoc}]
\label{thm_SDP_minion}
Given a relational structure $\Y$, $\SDP$ solves $\CSP(\Y)$ if and only if $\Sminion\to\Pol(\Y)$.
\end{thm}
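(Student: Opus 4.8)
The plan is to deduce Theorem~\ref{thm_SDP_minion} from the general equivalence between minion tests and minion homomorphisms (Theorem~\ref{thm_minion_test_solvability_homo}), by identifying the $\SDP$ relaxation with the minion test for $\Sminion$. Concretely, I would show that for every pair of $\sigma$-structures $\X,\Y$, the $\SDP$ system associated with $(\X,\Y)$ has a feasible solution with all vectors in a real inner-product space of some finite dimension if and only if $\X\to\freeS(\Y)$ --- that is, if and only if $\Test{\Sminion}{}(\X,\Y)=\YES$. Granting this, ``$\SDP$ solves $\CSP(\Y)$'' is by definition the same statement as ``$\Test{\Sminion}{}$ solves $\CSP(\Y)$'', and Theorem~\ref{thm_minion_test_solvability_homo} immediately yields the claimed characterisation $\Sminion\to\Pol(\Y)$.

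For the equivalence above, the key observation is that an element of $\Sminion^{(\ell)}$ --- a real matrix $M$ with $\ell$ rows such that $MM^{\top}$ is diagonal of trace $1$ --- is the same datum as a family of $\ell$ mutually orthogonal real vectors (the rows of $M$) whose squared norms sum to $1$; the minor $M_{/\pi}$ replaces the rows by the group sums prescribed by $\pi$, and orthogonality turns a sum of squared norms over a group into the squared norm of the corresponding sum of vectors. This is precisely the bookkeeping of the $\SDP$ system. Given a homomorphism $h\colon\X\to\freeS(\Y)$, I would let $\bu_{x,y}$ be the row indexed by $y$ of $h(x)\in\Sminion^{(n)}$ (with $n=|Y|$) and, for each constraint $R\in\sigma$, $\bx\in R^\X$, let $v_{R,\bx,\by}$ be the squared norm of the row indexed by $\by$ of the witness matrix $\tilde M\in\Sminion^{(m)}$ (with $m=|R^\Y|$) satisfying $h(x_i)=\tilde M_{/\pi_i}$; the four families of $\SDP$ equations then fall out of $\tr(\tilde M\tilde M^{\top})=1$, the diagonality of $\tilde M\tilde M^{\top}$, and the minor identities, once a suitable unit vector $\bu_0$ reading off the marginals is adjoined. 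Conversely, from a feasible $\SDP$ solution I would first extract, using the third and fourth families of equations, that for each variable $x$ occurring in a constraint the vectors $\{\bu_{x,y}\}_{y\in Y}$ are mutually orthogonal with $\sum_y\|\bu_{x,y}\|^2=1$ --- hence assemble into an element of $\Sminion^{(n)}$ --- and that each scalar family $v_{R,\bx,\cdot}$ is a probability distribution on $R^\Y$ with the right marginals; I would then realise this distribution by mutually orthogonal vectors forming a matrix $\tilde M\in\Sminion^{(m)}$ whose $\pi_i$-minors recover the $h(x_i)$, handling variables in no constraint separately (assigning them any fixed element of $\Sminion^{(n)}$, which is likewise unconstrained on the free-structure side).

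I expect the main obstacle to be the \emph{global} coherence of these two constructions, rather than the verification of any single $\SDP$ equation. On the free-structure side, the matrix $h(x)$ is shared by every constraint in which $x$ occurs, so the per-constraint witnesses $\tilde M$ cannot be chosen independently; dually, a feasible $\SDP$ solution fixes inner products only among vectors attached to variables that co-occur in a constraint, and turning these local data into a single consistent family of matrices --- together with one global unit vector $\bu_0$ simultaneously anchoring the marginals of every variable --- requires working in a sufficiently large ambient space, e.g.\ a direct sum indexed by the constraints, with normalisations chosen so that all the relevant matrices have trace exactly $1$ and the minor identities survive the padding of column spaces. This is precisely the technical core of the arguments in~\cite{cz23soda:minions,bgs_robust23stoc}, whose treatment I would follow for the details.
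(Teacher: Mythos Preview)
Your proposal is correct and matches the paper's approach exactly: the paper does not actually prove Theorem~\ref{thm_SDP_minion} (it is cited from~\cite{cz23soda:minions,bgs_robust23stoc}), but it does remark immediately afterwards that ``the theorem above corresponds to the fact that $\SDP$ has precisely the same power as $\Test{\Sminion}{}$'', which is precisely your reduction to Theorem~\ref{thm_minion_test_solvability_homo}. Your outline of the two directions of the equivalence $\SDP(\X,\Y)$ feasible $\Leftrightarrow$ $\X\to\freeS(\Y)$, and your identification of the global-coherence step as the only nontrivial point, accurately reflect the treatment in the cited references.
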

In fact, the theorem above corresponds to the fact that $\SDP$ has precisely the same power as $\Test{\Sminion}{}$ (i.e., the minion test associated with the minion $\Sminion$; cf.~\cite[\S~6.3.2]{bgs_robust23stoc}). The following result characterises the class of CSPs solved by SDP in terms of their width.
\begin{thm}[\cite{Barto16:sicomp}]
\label{thm_barto_kozik_SDP_bounded_width}
Given a relational structure $\Y$,
$\SDP$ solves $\CSP(\Y)$ if and only if $\CSP(\Y)$ has bounded width.
\end{thm}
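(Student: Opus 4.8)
The plan is to prove the two implications separately, combining the algebraic characterisation of bounded width with structural features of the $\SDP$ relaxation.

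\textbf{Bounded width $\Rightarrow$ $\SDP$ solves $\CSP(\Y)$.} Assume $\CSP(\Y)$ has bounded width, so that by the Bounded Width Theorem of Barto and Kozik, $\Pol(\Y)$ contains weak near-unanimity operations of all arities $n\geq 3$ and, equivalently, every instance $\X$ carrying a nonempty \emph{weak Prague instance} over $\Y$ already satisfies $\X\to\Y$. It therefore suffices to show that a feasible solution $\{\bu_{x,y}\},\{v_{R,\bx,\by}\}$ of the $\SDP$ system for $(\X,\Y)$ induces such a weak Prague instance. I would first pass to an arc-consistent ``minimal'' version of $\X$ (which $\SDP$-feasibility survives), and then read off, for each variable $x$, the set of values $y$ with $\ang{\bu_{x,y},\bu_0}>0$, and for each constraint on scope $\bx$ the set of pairs $(y,y')$ with $\ang{\bu_{x_i,y},\bu_{x_j,y'}}>0$. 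The linear equalities of the $\SDP$ guarantee that these sets are nonempty and mutually consistent under projections, i.e.\ that we obtain a genuine \emph{strategy}; the step where positive semidefiniteness --- as opposed to mere linear-programming feasibility --- is indispensable is the verification of the ``cancellation'' closure properties defining a weak Prague instance: since all recorded quantities are inner products of vectors in a common space, a pattern (a walk in the instance) that is later undone by the reverse pattern cannot move a vector to something orthogonal to itself, so the corresponding value or pair is never discarded. Once the weak Prague instance is in place, the Bounded Width Theorem yields $\X\to\Y$, hence $\SDP$ solves $\CSP(\Y)$.

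\textbf{$\SDP$ solves $\CSP(\Y)$ $\Rightarrow$ bounded width.} I argue by contraposition. If $\CSP(\Y)$ does not have bounded width, the algebraic theory of CSPs (Larose--Z{\'a}dori; Barto--Kozik) furnishes a minion homomorphism $\Pol(\Y)\to\Pol(\cC)$ for a concrete structure $\cC$ whose CSP is not solved by $\SDP$: when the polymorphism algebra admits the affine type one may take $\cC$ to encode systems of three-variable linear equations over a nontrivial finite abelian group $G$, and when $\CSP(\Y)$ is $\NP$-hard one may simply invoke Lemma~\ref{lem_projections_map_everywhere} together with Theorem~\ref{thm_NP_completeness_projection_minion}, since then $\Pol(\Y)\to\Dminion\to\Pol(\cC)$. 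In the affine case $\CSP(\cC)$ has unsatisfiable instances on which the $\SDP$ system is feasible --- take all $\bu_{x,y}=\tfrac{1}{|G|}\bu_0$ and all $v_{R,\bx,\by}$ uniform, which satisfies every $\SDP$ equality because those probe only pairwise marginals, and the uniform fractional assignment is pairwise consistent --- so $\Sminion\not\to\Pol(\cC)$ by Theorem~\ref{thm_SDP_minion}; in the $\NP$-hard case $\Sminion\not\to\Dminion$ already holds and hence $\Sminion\not\to\Pol(\cC)$. Composing with $\Pol(\Y)\to\Pol(\cC)$ gives $\Sminion\not\to\Pol(\Y)$, so by Theorem~\ref{thm_SDP_minion} $\SDP$ does not solve $\CSP(\Y)$.

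\textbf{Main obstacle.} The delicate part is the first implication, specifically the passage from the \emph{geometry} of a feasible $\SDP$ solution (the orthogonality and inner-product relations among the vectors $\bu_{x,y}$) to the \emph{combinatorial} weak Prague / local-consistency conditions demanded by the Bounded Width Theorem. This is precisely the zero-error specialisation of the $\SDP$-rounding analysis of Barto and Kozik, and it is where one genuinely exploits positive semidefiniteness rather than only linear-programming feasibility; the remaining steps are bookkeeping or direct appeals to the cited structural and algebraic results.
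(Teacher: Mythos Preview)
This theorem is not proved in the paper at all; it is quoted from the literature with a citation to \cite{Barto16:sicomp}, and the paper only uses it as a black box to derive Corollary~\ref{cor_bounded_width_no_quantum_advantage}. So there is no in-paper argument to compare your sketch against. That said, your outline is broadly in the spirit of how the result is established in the cited work --- the forward direction via Prague strategies extracted from an $\SDP$ solution, the converse via an affine obstruction --- but there is a concrete error in your converse argument that is worth flagging.

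Your proposed $\SDP$ witness for the affine case, $\bu_{x,y}=\tfrac{1}{|G|}\bu_0$ for all $x,y$, does \emph{not} satisfy the $\SDP$ system. Instantiate the fourth block of constraints with $i=j$: for $y\neq y'$ one must have
\[
\ang{\bu_{x,y},\bu_{x,y'}}=\sum_{\substack{\by\in R^\Y\\ y_i=y,\;y_i=y'}}v_{R,\bx,\by}=0,
\]
so the vectors attached to a single variable are forced to be pairwise orthogonal, whereas your choice gives $\ang{\bu_{x,y},\bu_{x,y'}}=|G|^{-2}\neq 0$. The slogan ``$\SDP$ probes only pairwise marginals'' is true but beside the point: the obstacle is the orthogonality at a single variable, which is exactly the PSD content that separates $\SDP$ from the basic LP. A correct fooling solution over $\Z_2$ takes, for example, $\bu_0=\be_0$ and $\bu_{x,0}=\tfrac12(\be_0+\be_x)$, $\bu_{x,1}=\tfrac12(\be_0-\be_x)$ with $\{\be_0\}\cup\{\be_x:x\in X\}$ orthonormal; then $\ang{\bu_{x,a},\bu_{x,b}}=\tfrac12\delta_{a,b}$ and $\ang{\bu_{x,a},\bu_{x',b}}=\tfrac14$ for $x\neq x'$, matching the uniform $v$'s for both parities of a $3$-variable equation simultaneously, so an unsatisfiable $3\textsc{Lin}$ instance is $\SDP$-feasible. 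Without such a valid construction your contraposition does not go through as written.
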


As a consequence of the two results stated above, Corollary~\ref{cor_bounded_width_no_quantum_advantage} would follow if we show that the quantum minion admits a homomorphism to $\Sminion$.
To that end, it shall be useful to introduce a complex version of $\Sminion$.
\begin{defn}
$\Sminion_\C$ is the linear minion over the abelian monoid $(\C,+)$ whose elements are all complex matrices $M$ of finite but arbitrary size, such that $MM^*$ is a diagonal matrix of trace $1$.      
\end{defn}
It turns out that $\Sminion$ and $\Sminion_\C$ are in fact the same minion, up to homomorphic equivalence.
\begin{prop}
\label{prop_real_complex_SDP}
    $\Sminion$ and $\Sminion_\C$ are homomorphically equivalent.
\end{prop}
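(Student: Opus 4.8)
The plan is to exhibit minion homomorphisms in both directions between $\Sminion$ and $\Sminion_\C$, using the standard correspondence between an $n$-dimensional complex inner-product space and a $2n$-dimensional real inner-product space. Recall that a linear minion homomorphism between conic/linear minions can be produced simply by a map on the underlying matrices that respects the defining constraints and commutes with summing rows and inserting zero rows; this is because the minor operations are built out of exactly those row operations. So the entire task reduces to constructing two constraint-respecting, row-operation-respecting maps.

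First I would build $\xi:\Sminion_\C\to\Sminion$. Given a complex matrix $M\in\Sminion_\C$ with $MM^*$ diagonal of trace $1$, replace each complex entry $a+bi$ by the $1\times 2$ real block $\begin{bmatrix}a & b\end{bmatrix}$ (equivalently, apply the standard realification $\C\hookrightarrow\R^{2\times 1}$ row-wise, concatenating horizontally). This sends an $\ell\times k$ complex matrix to an $\ell\times 2k$ real matrix $\widehat M$. The key computation is that $\widehat M\widehat M^\top = \operatorname{Re}(MM^*)$ entrywise on the real part, so it is diagonal with the same diagonal entries as $MM^*$, hence of trace $1$; thus $\widehat M\in\Sminion$. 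Since the realification is $\R$-linear and additive in each row and sends a zero row to a zero row, it commutes with the three elementary row operations, so $\xi$ is a minion homomorphism. (I should double-check the exact bookkeeping: writing $M=A+iB$ with $A,B$ real, the block matrix $\begin{bmatrix}A & B\end{bmatrix}$ has $\begin{bmatrix}A & B\end{bmatrix}\begin{bmatrix}A^\top\\ B^\top\end{bmatrix} = AA^\top+BB^\top = \operatorname{Re}(MM^*)$, using $MM^* = (AA^\top+BB^\top) + i(BA^\top - AB^\top)$ and the fact that $MM^*$ diagonal forces its imaginary part to vanish, so indeed $\operatorname{Re}(MM^*)=MM^*$ as a diagonal matrix.)

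Next I would build $\zeta:\Sminion\to\Sminion_\C$. This direction is easier: a real matrix is already a complex matrix, and for a real matrix $M$ the complex product $MM^*$ coincides with the real product $MM^\top$; so if $MM^\top$ is diagonal of trace $1$, then $M\in\Sminion_\C$ as well. The inclusion map clearly respects arity and all row operations, so it is a minion homomorphism. Composing, we get $\Sminion\to\Sminion_\C\to\Sminion$ and $\Sminion_\C\to\Sminion\to\Sminion_\C$, which establishes homomorphic equivalence; it is not needed, but the compositions are in fact close to identities up to padding with zero columns.

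The only mild obstacle is verifying the $MM^\top$/$MM^*$ bookkeeping under realification — in particular confirming that the off-diagonal imaginary parts genuinely vanish and that the trace is preserved rather than doubled — and making sure that "finite but arbitrary size" is respected (the realification doubles the number of columns but keeps everything finite, which is fine). No deep ideas are required; the proposition is essentially a change-of-scalars observation, and the linear-minion formalism of Example~\ref{example_linear_minions} does the rest by guaranteeing that constraint-preserving row-operation-preserving maps are automatically minion homomorphisms.
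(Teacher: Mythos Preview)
Your proposal is correct and follows essentially the same approach as the paper: the inclusion map gives $\Sminion\to\Sminion_\C$, and the realification $M=A+iB\mapsto\begin{bmatrix}A&B\end{bmatrix}$ gives $\Sminion_\C\to\Sminion$, with the key computation $\begin{bmatrix}A&B\end{bmatrix}\begin{bmatrix}A&B\end{bmatrix}^\top=AA^\top+BB^\top=\operatorname{Re}(MM^*)$ verifying membership in $\Sminion$. The only cosmetic difference is that the paper checks minor preservation directly via the identity $\vartheta(M_{/\pi})=\vartheta(M)_{/\pi}$, whereas you appeal to the fact that the map respects the elementary row operations of Example~\ref{example_linear_minions}; both are immediate.
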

\begin{proof}
    The inclusion map yields a homomorphism $\Sminion\to\Sminion_\C$. To build a homomorphism in the opposite direction, take a matrix $M\in\Sminion_\C^{(\ell)}$ for some $\ell\in\N$, and
    let $M_R$ and $M_I$ denote the real and imaginary parts of $M$, respectively; i.e., $M_R$ and $M_I$ are real matrices of the same size as $M$, such that $M=M_R+ \mathrm{i}M_I$. We claim that the map $\vartheta:M\mapsto\begin{bmatrix}
        M_R&M_I
    \end{bmatrix}$ yields a minion homomorphism from $\Sminion_\C$ to $\Sminion$. Observe that
    \begin{align*}
        \vartheta(M)\vartheta(M)^\top
        =
        M_RM_R^{\top}+M_IM_I^{\top}.
    \end{align*}
    Furthermore,
    \begin{align*}
        MM^*
        &=
        (M_R+ \mathrm{i}M_I)(M_R+ \mathrm{i}M_I)^*
        =
        (M_R+ \mathrm{i}M_I)(M_R^{\top}- \mathrm{i}M_I^{\top})\\
        &=(M_RM_R^{\top}+M_IM_I^{\top})+\mathrm{i}(M_IM_R^{\top}-M_R{M_I}^\top).
    \end{align*}
    Since $MM^*$ is diagonal and has trace $1$, it follows that also its real part $M_RM_R^{\top}+M_IM_I^{\top}$ is diagonal and has trace $1$. Hence, we conclude that $\vartheta(M)\in\Sminion^{(\ell)}$. The fact that $\vartheta$ preserves the arity is clear. To show that it preserves the minors, we simply observe that, for $\pi:[\ell]\to[\ell']$, 
    \begin{align*}
        \vartheta(M_{/\pi})
        &=
        \begin{bmatrix}
           (M_{/\pi})_R&(M_{/\pi})_I 
        \end{bmatrix}
        =
        \begin{bmatrix}
           (M_R)_{/\pi}&(M_I)_{/\pi} 
        \end{bmatrix}\\
        &=
        \begin{bmatrix}
        M_R&M_I
        \end{bmatrix}_{/\pi}
        =
        \vartheta(M)_{/\pi},
    \end{align*}
    as required.
\end{proof}
We shall need the following basic properties of orthogonal projectors.
Recall that, for $p\in\End_\HH$,
$\rg_p$ denotes the range of $p$. 
\begin{lem}[\cite{hogben2013handbook}]
\label{lem_basic_projectors}
    Take $p,p'\in\Proj_\HH$.
    \begin{itemize}
        \item $pp'\in\Proj_\HH$ if and only if $[p,p']=\zeroOp_\HH$.
        \item $p+p'\in\Proj_\HH$ if and only if $pp'=p'p=\zeroOp_\HH$ if and only if $\rg_p\perp\rg_{p'}$. In this case, $\rg_{p+p'}=\rg_p\bplus\rg_{p'}$.
    \end{itemize}
\end{lem}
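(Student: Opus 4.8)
The plan is to verify each equivalence by direct algebraic manipulation of the defining relations $p^2=p=p^*$ and $p'^2=p'=p'^*$, together with the two elementary facts about an orthogonal projector $p$ that I will use repeatedly: $\ker p=(\rg_p)^\perp$, and $p$ acts as the identity on $\rg_p$ (so $p(\bv)=\bv$ for $\bv\in\rg_p$).

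\textbf{First bullet.} For the forward direction, if $pp'\in\Proj_\HH$ then it is self-adjoint, so $pp'=(pp')^*=p'^*p^*=p'p$, giving $[p,p']=\zeroOp_\HH$. For the converse, assuming $pp'=p'p$, I compute $(pp')^*=p'^*p^*=p'p=pp'$, so $pp'$ is self-adjoint; and $(pp')^2=p(p'p)p'=p(pp')p'=p^2p'^2=pp'$, so $pp'$ is idempotent, hence an orthogonal projector.

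\textbf{Second bullet.} I prove the chain of equivalences and then the range identity. If $pp'=p'p=\zeroOp_\HH$ then $(p+p')^*=p+p'$ and $(p+p')^2=p^2+pp'+p'p+p'^2=p+p'$, so $p+p'\in\Proj_\HH$. Conversely, if $p+p'\in\Proj_\HH$, expanding $(p+p')^2=p+p'$ yields $pp'+p'p=\zeroOp_\HH$; the one nonroutine move is to left-multiply and right-multiply this identity by $p$ (using $p^2=p$), obtaining $pp'+pp'p=\zeroOp_\HH$ and $pp'p+p'p=\zeroOp_\HH$, whose difference gives $pp'=p'p$, and combined with $pp'+p'p=\zeroOp_\HH$ forces $pp'=p'p=\zeroOp_\HH$. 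Next, $pp'=\zeroOp_\HH$ is equivalent to $p'p=\zeroOp_\HH$ by taking adjoints, and $pp'=\zeroOp_\HH$ says $\rg_{p'}\subseteq\ker p=(\rg_p)^\perp$, i.e.\ $\rg_p\perp\rg_{p'}$; this reasoning is reversible since $\rg_p\perp\rg_{p'}$ means $\rg_{p'}\subseteq\ker p$, whence $pp'=\zeroOp_\HH$. Finally, for the range identity, one inclusion is immediate since $(p+p')(\bh)=p(\bh)+p'(\bh)\in\rg_p\bplus\rg_{p'}$; for the reverse, any $\bv\in\rg_p$ satisfies $p(\bv)=\bv$ and, as $\bv\perp\rg_{p'}$, also $p'(\bv)=\bzero_\HH$, so $(p+p')(\bv)=\bv$ and $\bv\in\rg_{p+p'}$; symmetrically $\rg_{p'}\subseteq\rg_{p+p'}$, so $\rg_p\bplus\rg_{p'}\subseteq\rg_{p+p'}$, giving equality.

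Since every step is a short computation, I do not anticipate a genuine obstacle; the only point requiring a small trick rather than brute expansion is deriving $pp'=p'p=\zeroOp_\HH$ from $(p+p')^2=p+p'$, which is handled by the left/right multiplication by $p$ described above. Alternatively, one could simply cite \cite{hogben2013handbook} for this standard material, but the self-contained argument above is short enough to include.
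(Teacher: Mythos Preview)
Your proof is correct. Note, however, that the paper does not actually prove this lemma: it is stated with a citation to \cite{hogben2013handbook} and used as a black box. So there is no ``paper's own proof'' to compare against; your self-contained argument simply supplies what the paper outsources to the reference, and every step (including the left/right multiplication trick to extract $pp'=p'p=\zeroOp_\HH$ from $pp'+p'p=\zeroOp_\HH$) is standard and sound.
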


We now show that the quantum minion (over any $\HH$) homomorphically maps to $\Sminion$. Essentially, the proof works by fixing an arbitrary vector in $\HH$ and projecting it onto the spaces constituting the entries of the elements of $\Qminion_\HH$. This yields a homomorphism $\Qminion_\HH\to\Sminion_\C$ which, by Proposition~\ref{prop_real_complex_SDP}, is enough to establish the result. 

\begin{thm}
\label{thm_minion_homo_quantum_minion_skeletal_minion}
    $\Qminion_\HH\to\Sminion$.
\end{thm}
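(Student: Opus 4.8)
The plan is to construct an explicit map $\zeta:\Qminion_\HH\to\Sminion_\C$ and then compose with the homomorphism $\Sminion_\C\to\Sminion$ from Proposition~\ref{prop_real_complex_SDP}. First I would fix, once and for all, a unit vector $\bu\in\HH$ (so $\ang{\bu,\bu}=1$). Given $\bq=(q_1,\dots,q_\ell)^\top\in\Qminion_\HH^{(\ell)}$, the subspaces $q_i$ are pairwise orthogonal and sum to $\HH$, so by Lemma~\ref{lem_basic_projectors} the projectors $\pr_{q_i}$ are pairwise orthogonal (i.e.\ $\pr_{q_i}\pr_{q_j}=\zeroOp_\HH$ for $i\neq j$) and $\sum_{i}\pr_{q_i}=\id_\HH$. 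Now set $\bw_i=\pr_{q_i}(\bu)\in\HH$ and let $\zeta(\bq)$ be the $\ell\times\dim(\HH)$ complex matrix whose $i$-th row is the coordinate vector of $\bw_i$ in some fixed orthonormal basis of $\HH$.

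The key computation is that $\zeta(\bq)\in\Sminion_\C^{(\ell)}$. The $(i,j)$ entry of $\zeta(\bq)\zeta(\bq)^*$ is $\ang{\bw_i,\bw_j}=\ang{\pr_{q_i}(\bu),\pr_{q_j}(\bu)}=\ang{\bu,\pr_{q_i}\pr_{q_j}(\bu)}$ using self-adjointness and idempotence of the projectors. For $i\neq j$ this is $\ang{\bu,\zeroOp_\HH(\bu)}=0$, so $\zeta(\bq)\zeta(\bq)^*$ is diagonal; and its trace is $\sum_i\ang{\bw_i,\bw_i}=\sum_i\ang{\bu,\pr_{q_i}(\bu)}=\ang{\bu,(\sum_i\pr_{q_i})(\bu)}=\ang{\bu,\id_\HH(\bu)}=\ang{\bu,\bu}=1$. (One should note a small subtlety with the inner-product convention: if $\ang{\cdot,\cdot}$ is conjugate-linear in the first argument, the entry $\ang{\bw_i,\bw_j}$ matches $\overline{MM^*}$ rather than $MM^*$, but since the resulting matrix is diagonal real-entried this makes no difference; I would just fix the convention so the computation reads cleanly, or equivalently store the conjugated coordinates in the rows.) Arity preservation is immediate.

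For minor preservation, take $\pi:[\ell]\to[\ell']$ and $\bq\in\Qminion_\HH^{(\ell)}$. The $r$-th entry of $\bq_{/\pi}$ is $\bigboxplus_{i\in\pi^{-1}(r)}q_i$, and since the $q_i$ in this union are mutually orthogonal, the last clause of Lemma~\ref{lem_basic_projectors} gives $\pr_{(\bq_{/\pi})_r}=\sum_{i\in\pi^{-1}(r)}\pr_{q_i}$, hence its projection of $\bu$ is $\sum_{i\in\pi^{-1}(r)}\bw_i$. That is exactly the $r$-th row of $\zeta(\bq)_{/\pi}$ by definition of the minor operation on linear minions (sum of rows indexed by $\pi^{-1}(r)$). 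Therefore $\zeta(\bq_{/\pi})=\zeta(\bq)_{/\pi}$, so $\zeta$ is a minion homomorphism. Composing $\zeta:\Qminion_\HH\to\Sminion_\C$ with the homomorphism $\Sminion_\C\to\Sminion$ from Proposition~\ref{prop_real_complex_SDP} yields $\Qminion_\HH\to\Sminion$, as claimed. I do not anticipate a serious obstacle here — the content is entirely the translation of orthogonal-subspace data into orthogonal-projector data via Lemma~\ref{lem_basic_projectors} — the only point requiring care is getting the inner-product/conjugation bookkeeping straight so that the diagonal-trace-one property is verified on the correct side of the product.
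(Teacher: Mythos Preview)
Your proposal is correct and follows essentially the same approach as the paper: fix a unit vector, send $\bq$ to the matrix whose rows are the coordinate vectors of the projections of that vector onto the $q_i$, verify this lands in $\Sminion_\C$ via orthogonality and the decomposition $\sum_i\pr_{q_i}=\id_\HH$, check minor preservation via Lemma~\ref{lem_basic_projectors}, and finish with Proposition~\ref{prop_real_complex_SDP}. The paper handles the conjugation bookkeeping you flagged by defining the map via $M^*\be_{i;\ell}=r(\pr_{q_i}(\bw))$ rather than specifying the rows of $M$ directly, but as you note this is immaterial since the resulting diagonal matrix has real entries.
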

\begin{proof}
Fix an orthonormal basis $B$ of $\HH$, and let $d=\dim(\HH)$. Given any $\bv\in\HH$, we shall let $r(\bv)$ be the column vector of length $d$ containing the coordinates of $\bv$ in the basis $B$. Fix also an arbitrary vector $\bw\in\HH$ of norm $1$. For any $\bq\in\Qminion_\HH$ of some arity $\ell$, consider the complex $\ell\times d$ matrix $M$ defined by
\begin{align*}
    M^*\be_{i;\ell}
    =
    r(\pr_{q_i}(\bw))
\end{align*}
for each $i\in[\ell]$. We claim that the assignment $\xi:\bq\mapsto M$ yields a homomorphism from $\Qminion_\HH$ to $\Sminion_\C$.

First of all, we show that $M\in\Sminion_\C$. For $i,j\in[\ell]$, the $(i,j)$-th entry of $MM^*$ is
\begin{align*}
    \be_{i;\ell}^*MM^*\be_{j;\ell}
    =
    (r(\pr_{q_i}(\bw)))^*r(\pr_{q_j}(\bw))
    =
    \ang{\pr_{q_j}(\bw),\pr_{q_i}(\bw)}.
\end{align*}
Since $q_i\perp q_j$ whenever $i\neq j$, we deduce that $MM^*$ is a diagonal matrix. Therefore, we can write its trace as
\begin{align*}
    \tr(MM^*)
    &=
    \sum_{i,j\in[\ell]}\be_{i;\ell}^*MM^*\be_{j;\ell}
    =
    \sum_{i,j\in[\ell]}\ang{\pr_{q_j}(\bw),\pr_{q_i}(\bw)}\\
    &=
    \bigg{\|}\sum_{i\in[\ell]}\pr_{q_i}(\bw)\bigg{\|}^2.
\end{align*}
Furthermore, by Lemma~\ref{lem_basic_projectors}, 
\begin{align*}
    \sum_{i\in[\ell]}\pr_{q_i}(\bw)
    =
    \pr_{\bigboxplus_{i\in[\ell]}q_i}(\bw)
    =
    \pr_{\HH}(\bw)
    =\bw.
\end{align*}
It follows that $\tr(MM^*)=\|\bw\|^2=1$, as required to conclude that $M\in\Sminion_\C$.

We are left to show that $\xi$ is a minion homomorphism.
The fact that it preserves the arity directly follows from its definition.
To prove that it preserves the minors, take a map $\pi:[\ell]\to[\ell']$. We need to show that $\xi(\bq_{/\pi})=\xi(\bq)_{/\pi}$.
Pick an index $j\in [\ell']$, and observe that
\begin{align*}
    (\xi(\bq_{/\pi}))^*\be_{j;\ell'}
    &=
    r(\pr_{(\bq_{/\pi})_j}(\bw))
    =
    r(\pr_{\bigboxplus_{i\in\pi^{-1}(j)}q_i}(\bw))\\
    &=
    r\left(\sum_{i\in\pi^{-1}(j)}\pr_{q_i}(\bw)\right)
    =
    \sum_{i\in\pi^{-1}(j)}r(\pr_{q_i}(\bw))\\
    &=
     \sum_{i\in\pi^{-1}(j)}(\xi(\bq))^*\be_{i;\ell}
     =
     (\xi(\bq)_{/\pi})^*\be_{j;\ell'},
\end{align*}
thus yielding $\xi(\bq_{/\pi})=\xi(\bq)_{/\pi}$, as required.

It follows that $\xi$ yields a homomorphism from $\Qminion_\HH$ to $\Sminion_\C$, as claimed. To conclude that $\Qminion_\HH\to\Sminion$, we apply Proposition~\ref{prop_real_complex_SDP}.
\end{proof}

\begin{cor}
\label{cor_bounded_width_no_quantum_advantage}
Let $\Y$ be a relational structure such that
$\CSP(\Y)$ has bounded width.\footnote{It is known that a CSP is solvable in bounded width if and only if it is solvable in \emph{sublinear} width~\cite{AtseriasO19}.
Hence, in the statement of Corollary~\ref{cor_bounded_width_no_quantum_advantage}, the word ``bounded'' can be replaced by the word ``sublinear''.} Then $\Y$ does not have quantum advantage over any $\HH$.
\end{cor}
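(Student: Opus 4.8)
The plan is to chain together the results already established in this section. Suppose $\CSP(\Y)$ has bounded width. By Theorem~\ref{thm_barto_kozik_SDP_bounded_width}, this is equivalent to $\SDP$ solving $\CSP(\Y)$, and by Theorem~\ref{thm_SDP_minion} this is in turn equivalent to the existence of a minion homomorphism $\Sminion\to\Pol(\Y)$. On the other hand, Theorem~\ref{thm_minion_homo_quantum_minion_skeletal_minion} gives a minion homomorphism $\Qminion_\HH\to\Sminion$ for every finite-dimensional Hilbert space $\HH$. Since the composition of two minion homomorphisms is again a minion homomorphism (minion homomorphisms are closed under composition, so ``$\to$'' is transitive), we obtain
\begin{align*}
\Qminion_\HH\to\Sminion\to\Pol(\Y),
\end{align*}
hence $\Qminion_\HH\to\Pol(\Y)$.

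Now I would invoke Theorem~\ref{thm_quantum_trivial_iff_minion_homo}, which states that $\Y$ has quantum advantage over $\HH$ if and only if $\Qminion_\HH\not\to\Pol(\Y)$. Since we have just exhibited a homomorphism $\Qminion_\HH\to\Pol(\Y)$, the structure $\Y$ does not have quantum advantage over $\HH$. As $\HH$ was an arbitrary finite-dimensional Hilbert space, $\Y$ does not have quantum advantage over any $\HH$, which is the claim.

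There is essentially no obstacle here: the corollary is a one-line consequence of three theorems proved (or cited) earlier in the section, together with the transitivity of the minion homomorphism relation already noted at the start of Section~\ref{sec_quantum_minion} (``Minion homomorphism is a binary, reflexive, and transitive relation''). The only point requiring a token of care is to make sure the direction of each equivalence is applied correctly — in particular that Theorem~\ref{thm_quantum_trivial_iff_minion_homo} is used in the contrapositive direction (existence of a homomorphism rules out quantum advantage) rather than the direct one. I would keep the proof to two or three sentences, perhaps remarking in passing (cf.\ Remark~\ref{rem_CLP_skeletal}) that the same conclusion also follows via the skeletal minion of~\cite{cz23sicomp:clap}, since that minion likewise captures bounded width and receives a homomorphism from $\Qminion_\HH$.
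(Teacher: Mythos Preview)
Your argument is correct and matches the paper's proof essentially line for line: bounded width $\Rightarrow$ $\SDP$ solves $\CSP(\Y)$ $\Rightarrow$ $\Sminion\to\Pol(\Y)$, then compose with $\Qminion_\HH\to\Sminion$ and apply Theorem~\ref{thm_quantum_trivial_iff_minion_homo}. The only nitpick is a reference: the transitivity remark you cite is at the start of the section on $\Qminion_\HH$ and the dictator minion, not Section~\ref{sec_quantum_minion}.
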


\begin{proof}
Suppose that $\CSP(\Y)$ has bounded width. Combining Theorems
~\ref{thm_SDP_minion} and~\ref{thm_barto_kozik_SDP_bounded_width}, we deduce that $\Sminion\to\Pol(\Y)$. Composing the latter homomorphism with the one from Theorem~\ref{thm_minion_homo_quantum_minion_skeletal_minion}, we find that $\Qminion_\HH\to\Pol(\Y)$ for any $\HH$. Then, the conclusion follows from Theorem~\ref{thm_quantum_trivial_iff_minion_homo}. 
\end{proof}

\begin{rem}
\label{rem_CLP_skeletal}
We now discuss an alternative proof of the result stated as Corollary~\ref{cor_bounded_width_no_quantum_advantage}, that makes use of a different minion and a different algorithm (both introduced in~\cite{cz23sicomp:clap}) instead of the minion $\Sminion$ and the algorithm SDP. 

The algorithm is the \emph{constraint linear programming} relaxation (CLP). Informally, when applied to a pair $\X$, $\Y$ of $\sigma$-structures,
CLP works by running multiple times the LP relaxation of the CSP, each time adding a different extra constraint that corresponds to fixing an assignment $\bx\mapsto\by$, where $\bx\in R^\X$, $\by\in R^\Y$, and $R\in\sigma$.
If the feasible region of the LP is empty, the assignment $\bx\mapsto\by$ is marked as unfeasible and removed from the space of solutions. The procedure continues until a fixed point is reached; then, the output is $\YES$ if and only if for each $\bx$ there is at least one feasible assignment.
CLP is at least as strong as the \emph{singleton arc consistency} relaxation introduced in~\cite{DB97} (see also~\cite{BD08,ChenDG13}); thus, in particular, it solves all bounded-width CSPs~\cite{Kozik21:sicomp}.  

We now describe the minion $\Cminion$ capturing CLP.
Let $M$ be a real matrix having $\ell$-many rows. We say that $M$ is \emph{skeletal} if, for any $i\in [\ell]$, whenever the $i$-th row of $M$ is not identically zero there exists a column of $M$ equal to $\be_{i;\ell}$.
We say that $M$ is \emph{stochastic} if it is entrywise nonnegative and each of its columns sums up to $1$. 
$\Cminion$ is defined as the linear minion over the abelian monoid $(\R,+)$ whose elements are all real skeletal stochastic matrices having finitely many rows.
It was shown in~\cite{cz23sicomp:clap} that $\Cminion$ captures the power of CLP, in the sense that CLP solves $\CSP(\Y)$ if and only if $\Cminion\to\Pol(\Y)$.\footnote{The minion $\Cminion$ described here differs from the CLP minion in~\cite{cz23sicomp:clap} (which we shall denote by $\tilde\Cminion$) in two technical aspects, that are both inessential. 
The first is that the matrices in $\tilde\Cminion$ are required to have rational (as opposed to real) elements. The reason why this difference is inessential is that running CLP on a CSP instance amounts to solving multiple LPs whose constraints are linear inequalities having rational coefficients. The columns of the matrices in the minion correspond to the solutions of the LPs. For such LPs,
a rational solution exists if and only if a real solution exists. Hence, considering rational- or real-valued matrices is equivalent in order to capture the power of CLP.
The second difference is that each matrix in $\tilde\Cminion$ is required to have countably many columns, that are eventually all equal. Nevertheless, using a compactness argument akin to the one in~\cite[\S~4.2]{cz23sicomp:clap}, it follows that, for any relational structure $\Y$, $\Cminion\to\Pol(\Y)$ if and only if $\tilde\Cminion\to\Pol(\Y)$.}

As a consequence, an alternative way of proving Corollary~\ref{cor_bounded_width_no_quantum_advantage} is by showing that $\Qminion_\HH\to\Cminion$ for each $\HH$. Let $S$ be the set of vectors in $\HH$ of norm $1$. Given $\bq\in\Qminion_\HH$ of some arity $\ell$, consider the matrix $M$ whose columns are indexed by the elements of $S$, defined as follows: For $\bh\in S$, the $\bh$-th column of $M$ is the vector
\begin{align*}
    \begin{bmatrix}
        \|\pr_{q_1}(\bh)\|^2&
        \|\pr_{q_2}(\bh)\|^2&
        \hdots&
        \|\pr_{q_\ell}(\bh)\|^2
    \end{bmatrix}^\top.
\end{align*}
It is straightforward to check that $M$ is stochastic.
The fact that $M$ is skeletal follows by observing that, if $\pr_{q_i}(\bh)\neq\bzero_\HH$ for some $i\in[\ell]$ and some $\bh\in S$, it must be that $q_i\neq\{\bzero_\HH\}$. Given a vector $\bh'\in q_i\cap S$, we find that the $\bh'$-th column of $M$ is $\be_{i;\ell}$, as needed. Therefore, $M\in\Cminion$. Moreover, the map $\bq\mapsto M$ is easily seen to preserve the arity and the minors, thus yielding $\Qminion_\HH\to\Cminion$.

\end{rem}

\section{Quantum advantage for graphs}
\label{sec_quantum_advantage_graphs}

In the case of graphs, the sufficient and necessary conditions for quantum advantage established in the previous two sections turn out to collapse. As a consequence, we obtain the following complete characterisation of quantum advantage for graphs, thus answering Question~\ref{question_quantum_advantage_graphs}.\footnote{We point out that the ``only if'' part of Theorem~\ref{thm_quantum_trivial_iff_bipartite} can also be derived from the results in~\cite{MancinskaRoberson16}, see Remark~\ref{rem_MR_homomorphisms}.}
\begin{thm}
\label{thm_quantum_trivial_iff_bipartite}
    If $\dim(\HH)\geq 3$, a graph has quantum advantage over $\HH$ if and only if it is  non-bipartite.
\end{thm}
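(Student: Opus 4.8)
The plan is to prove the two directions separately, matching the structure of the rest of the paper, and in each case reduce the graph statement to the general minion-theoretic results already established.

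For the ``if'' direction (non-bipartite graphs have quantum advantage), I would argue as follows. Let $\Y$ be a non-bipartite graph. By Bulatov's reformulation of the Hell--Ne\v{s}et\v{r}il dichotomy in terms of polymorphisms~\cite{Bulatov05}, a graph $\Y$ has $\CSP(\Y)$ solvable in polynomial time exactly when $\Y$ is bipartite (or has a loop, or is edgeless --- but loops and edgelessness are excluded by the definition of simple graph, or handled trivially); equivalently, $\Pol(\Y)\to\Dminion$ precisely when $\Y$ is non-bipartite, via Theorem~\ref{thm_NP_completeness_projection_minion}. Since $\dim(\HH)\geq 3$, Proposition~\ref{prop_only_if_part_thm_dictator_quantum} gives $\Qminion_\HH\not\to\Dminion$. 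If we had $\Qminion_\HH\to\Pol(\Y)$, composing with $\Pol(\Y)\to\Dminion$ would yield $\Qminion_\HH\to\Dminion$, a contradiction; hence $\Qminion_\HH\not\to\Pol(\Y)$, and Theorem~\ref{thm_quantum_trivial_iff_minion_homo} says $\Y$ has quantum advantage over $\HH$. This is essentially Corollary~\ref{cor_quantum_trivial_implies_tractable} specialised to graphs, once the complexity dichotomy for graphs is invoked.

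For the ``only if'' direction (bipartite graphs do not have quantum advantage), I would use Corollary~\ref{cor_bounded_width_no_quantum_advantage}: it suffices to show that $\CSP(\Y)$ has bounded width when $\Y$ is bipartite. A bipartite graph $\Y$ admits a homomorphism to $\K_2$ and $\K_2\to\Y$ iff $\Y$ has an edge; if $\Y$ has no edge then $\CSP(\Y)$ is trivial, and if $\Y$ has an edge then $\CSP(\Y)$ and $\CSP(\K_2)$ are homomorphically equivalent problems. It is classical that $\CSP(\K_2)$ --- i.e., $2$-colourability, equivalently detecting odd cycles --- has width $2$ (e.g., via the majority polymorphism of $\K_2$, which places it in bounded width by~\cite{Feder98:monotone,Barto14:jacm}). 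Hence $\CSP(\Y)$ has bounded width, so by Corollary~\ref{cor_bounded_width_no_quantum_advantage}, $\Y$ has no quantum advantage over any $\HH$; in particular none over $\HH$ with $\dim(\HH)\geq 3$. Combining the two directions gives the theorem.

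The main obstacle is purely bookkeeping: making precise the claim ``$\Pol(\Y)\to\Dminion$ iff $\Y$ is non-bipartite'' for simple graphs, since one must rule out the degenerate cases (graphs with isolated vertices, graphs with no edges) and invoke exactly the right form of the Hell--Ne\v{s}et\v{r}il / Bulatov statement --- namely that a graph core is either $\K_1$, $\K_2$, or non-bipartite, and the reduction to cores preserves both $\CSP$ complexity and the relevant minion homomorphisms. None of the quantum or geometric content reappears here; all the real work has been done in the preceding sections, and this theorem is the harvest.
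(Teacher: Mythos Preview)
Your proposal is correct and follows essentially the same route as the paper's proof: the bipartite case is handled by reducing to $\K_2$, invoking bounded width, and applying Corollary~\ref{cor_bounded_width_no_quantum_advantage}; the non-bipartite case uses Bulatov's $\Pol(\Y)\to\Dminion$ characterisation (Theorem~\ref{thm_hell_nesetril_bulatov}) together with $\Qminion_\HH\not\to\Dminion$ for $\dim(\HH)\geq 3$ and Theorem~\ref{thm_quantum_trivial_iff_minion_homo}. The only cosmetic remark is that you route the non-bipartite direction through Theorem~\ref{thm_NP_completeness_projection_minion}, whereas the paper invokes Theorem~\ref{thm_hell_nesetril_bulatov} directly to keep the argument unconditional --- but since you already cite~\cite{Bulatov05} for the polymorphism statement, this is harmless.
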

(Note that no graph has quantum advantage over $\HH$ if $\dim(\HH)\leq 2$, see Corollary~\ref{cor_no_quantum_advantage_in_low_dimension}.)
We now show how the machinery developed in the previous sections proves Theorem~\ref{thm_quantum_trivial_iff_bipartite}.
The following, classic result by Hell and Ne\v{s}et\v{r}il classifies the complexity of graph CSPs.
\begin{thm}
[\cite{HellN90}]
\label{thm_hell_nesetril}
    Let $\Y$ be a graph. Then $\CSP(\Y)$ is in $\operatorname{P}$ if $\Y$ is bipartite, and it is $\NP$-complete otherwise.
\end{thm}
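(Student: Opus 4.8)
The plan is to prove the two halves of the dichotomy in Theorem~\ref{thm_hell_nesetril} separately: the tractable half is routine, while the $\NP$-completeness half is where all the difficulty sits. \emph{The tractable half.} Suppose $\Y$ is bipartite. If $\Y$ has no edges, then $\X\to\Y$ holds exactly when $\X$ has no edges, which is decidable in linear time. Otherwise, fixing an edge $\{a,b\}$ of $\Y$ yields a homomorphism $\K_2\to\Y$, while bipartiteness of $\Y$ yields a homomorphism $\Y\to\K_2$; composing these, $\X\to\Y$ if and only if $\X\to\K_2$, i.e.\ if and only if $\X$ is bipartite, i.e.\ if and only if $\X$ has no odd cycle. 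This last property is decidable in polynomial time via a breadth-first $2$-colouring, so $\CSP(\Y)\in\operatorname{P}$. Membership of $\CSP(\Y)$ in $\NP$ is clear in every case: guess a map $X\to Y$ and check in polynomial time that it preserves the edge relation.

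\emph{Reduction to cores.} For the hardness half, I would first reduce to the case where $\Y$ is a \emph{core}, meaning that every endomorphism of $\Y$ is an automorphism. Indeed, writing $\Y'$ for the core of $\Y$, the homomorphisms $\Y\to\Y'$ and $\Y'\hookrightarrow\Y$ show that $\X\to\Y$ if and only if $\X\to\Y'$, so $\CSP(\Y)$ and $\CSP(\Y')$ are literally the same problem; and $\Y'$ is still non-bipartite, since a homomorphism $\Y\to\Y'$ sends an odd cycle of $\Y$ to an odd closed walk in $\Y'$, which necessarily contains an odd cycle.

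\emph{The hardness half.} Let $\Y$ now be a non-bipartite core. The crucial structural input --- established combinatorially in~\cite{HellN90} and reproved by an algebraic argument in~\cite{Bulatov05} --- is that such a $\Y$ is \emph{projective}: there is a minion homomorphism $\Pol(\Y)\to\Dminion$. Granting this, the conclusion follows in two equivalent ways. One can simply invoke Theorem~\ref{thm_NP_completeness_projection_minion}. Alternatively, and more self-containedly: by Lemma~\ref{lem_projections_map_everywhere} we have $\Dminion\to\Pol(\K_3)$, so composing gives $\Pol(\Y)\to\Pol(\K_3)$; Theorem~\ref{thm_minion_homos_reductions} then produces a polynomial-time reduction from $\CSP(\K_3)$ to $\CSP(\Y)$, and since $\CSP(\K_3)$ --- graph $3$-colouring --- is $\NP$-hard by Karp~\cite{Karp72}, so is $\CSP(\Y)$. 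Combined with $\NP$-membership, $\CSP(\Y)$ is $\NP$-complete.

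\emph{Main obstacle.} The entire weight of the proof rests on the projectivity claim $\Pol(\Y)\to\Dminion$ for an \emph{arbitrary} non-bipartite core $\Y$. The naive idea of exhibiting a triangle inside $\Y$ and reducing from $\K_3$-colouring directly fails, because the odd girth of $\Y$ can be arbitrarily large and its automorphism group may be trivial. Overcoming this is exactly the technical core of Hell--Ne\v{s}et\v{r}il's theorem: the original proof builds auxiliary ``indicator'' graphs whose $\Y$-colouring problem simulates $\K_k$-colouring for a suitable $k\ge 3$, while Bulatov's reproof instead argues algebraically that a non-bipartite core admits no weak near-unanimity (equivalently, no Taylor) polymorphism, which by the algebraic criterion is equivalent to $\Pol(\Y)\to\Dminion$. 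Either route --- controlling graphs of large odd girth with little symmetry --- is where the real work lies; once it is in place, the reduction packaging above is purely formal.
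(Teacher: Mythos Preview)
The paper does not prove Theorem~\ref{thm_hell_nesetril}: it is stated as a classical external result from~\cite{HellN90} and used as a black box (alongside its algebraic reformulation, Theorem~\ref{thm_hell_nesetril_bulatov}, also cited without proof). There is therefore no ``paper's own proof'' to compare against.

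Your outline is correct as far as it goes, and you are honest about what it does and does not establish. The tractable half and the reduction to cores are complete and correct. For the hardness half you correctly identify that the entire content is the claim $\Pol(\Y)\to\Dminion$ for non-bipartite cores --- which is precisely Theorem~\ref{thm_hell_nesetril_bulatov} --- and you rightly flag that you are not proving this but deferring to~\cite{HellN90,Bulatov05}. The packaging via Theorem~\ref{thm_minion_homos_reductions} (or Theorem~\ref{thm_NP_completeness_projection_minion}) is sound. One minor point: invoking Theorem~\ref{thm_NP_completeness_projection_minion} as stated in the paper is slightly circular, since that formulation already bundles in the full CSP Dichotomy Theorem; your alternative route through $\Pol(\Y)\to\Dminion\to\Pol(\K_3)$ and Theorem~\ref{thm_minion_homos_reductions} is the cleaner and more self-contained choice.
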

This is enough for proving Theorem~\ref{thm_quantum_trivial_iff_bipartite} under the assumption that $\operatorname{P}\neq\NP$, via Corollaries~\ref{cor_quantum_trivial_implies_tractable} and~\ref{cor_bounded_width_no_quantum_advantage}. To prove Theorem~\ref{thm_quantum_trivial_iff_bipartite} unconditionally, we make use of the following, revisited version of Hell--Ne\v{s}et\v{r}il's Theorem by Bulatov, that we formulate in terms of polymorphism minions.
\begin{thm}[\cite{Bulatov05}]
\label{thm_hell_nesetril_bulatov}
    Let $\Y$ be a graph. Then $\Pol(\Y)\to\Dminion$ if and only if $\Y$ is non-bipartite.
\end{thm}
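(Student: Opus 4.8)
The plan is to prove the two implications separately, concentrating all the real work in the direction ``$\Y$ non-bipartite $\Rightarrow\Pol(\Y)\to\Dminion$''. The engine behind the other direction is a single observation: $\Dminion$ contains no \emph{symmetric} element of arity at least $2$, since $(\be_{i;\ell})_{/\sigma}=\be_{\sigma(i);\ell}$ can equal $\be_{i;\ell}$ for every permutation $\sigma$ of $[\ell]$ only when $\ell=1$. As a minion homomorphism maps symmetric elements to symmetric elements of the same arity, to conclude $\Pol(\Y)\not\to\Dminion$ it suffices to produce a symmetric polymorphism of $\Y$ of arity at least $2$.

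\textbf{The direction $\Pol(\Y)\to\Dminion\Rightarrow\Y$ non-bipartite.} I would prove the contrapositive. If $\Y$ has no edges, every operation on its domain is a polymorphism, so $\min$ with respect to any fixed linear order on the domain is a symmetric binary polymorphism and we are done. If $\Y$ is bipartite and has an edge, then $\Y$ is homomorphically equivalent to $\K_2$: a $2$-colouring gives $\Y\to\K_2$, and sending an edge of $\K_2$ to an edge of $\Y$ gives $\K_2\to\Y$. Via the standard construction sending a polymorphism $f$ to $h\circ f\circ(g,\dots,g)$ along homomorphisms $g,h$ witnessing the equivalence, this makes $\Pol(\Y)$ and $\Pol(\K_2)$ homomorphically equivalent minions, so it is enough to exhibit a symmetric polymorphism of $\K_2$. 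The Boolean majority $\mathrm{maj}\colon\{0,1\}^3\to\{0,1\}$ serves: it is symmetric, and since the edges of $\K_2^3$ are exactly the pairs $(\bx,\overline{\bx})$ of complementary triples and $\mathrm{maj}(\overline{\bx})=\overline{\mathrm{maj}(\bx)}$, it is a polymorphism of $\K_2$.

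\textbf{The direction $\Y$ non-bipartite $\Rightarrow\Pol(\Y)\to\Dminion$.} First pass to the core: $\Y$ is homomorphically equivalent to its core $\Y^*$, which is again non-bipartite (a $2$-colouring of $\Y^*$ would pull back to $\Y$), hence has an edge; and a core with an edge has no isolated vertices. By the same functoriality as above it suffices to show $\Pol(\Y^*)\to\Dminion$. The key input is the structural fact that a non-bipartite core graph is \emph{projective}: every polymorphism $f$ of $\Y^*$ is essentially unary, of the form $f(\ba)=h(a_i)$ for a single coordinate $i$ and an automorphism $h$ of $\Y^*$ (here one uses that $\Y^*$ has at least two vertices, so that $i$ is uniquely determined). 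Granting this, the map $h\circ\operatorname{pr}_i\mapsto\be_{i;\ell}$, which discards $h$, preserves arity and sends the $\sigma$-minor $h\circ\operatorname{pr}_{\sigma(i)}$ of $h\circ\operatorname{pr}_i$ to $\be_{\sigma(i);\ell'}=(\be_{i;\ell})_{/\sigma}$, hence is a minion homomorphism $\Pol(\Y^*)\to\Dminion$. To establish projectivity I would show that $\Y^*$ carries no Taylor polymorphism --- this is where Barto--Kozik's loop lemma enters, a graph with an edge having algebraic length one precisely when it is non-bipartite, so that a Taylor polymorphism on the smooth, irreflexive structure $\Y^*$ would force a loop --- and then invoke the algebraic fact (see~\cite{BBKO21,BOP18}) that a minion admits a homomorphism to $\Dminion$ exactly when it has no Taylor element. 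Alternatively, projectivity of non-bipartite core graphs can be proved by the direct graph-theoretic argument of~\cite{Bulatov05}, which predates and avoids the loop lemma.

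I expect the projectivity statement (equivalently, the absence of a Taylor polymorphism) to be the main obstacle; the rest is routine bookkeeping with minor operations. I note that a much shorter argument for the non-bipartite direction is available if one assumes $\operatorname{P}\neq\NP$: by Hell--Ne\v{s}et\v{r}il (Theorem~\ref{thm_hell_nesetril}) $\CSP(\Y)$ is $\NP$-complete, hence not polynomial-time solvable, hence $\Pol(\Y)\to\Dminion$ by Theorem~\ref{thm_NP_completeness_projection_minion}; but the whole point of the present statement is to be unconditional, which is exactly why a genuinely combinatorial argument of the kind above is required.
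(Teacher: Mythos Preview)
The paper does not supply a proof of this theorem; it is quoted from~\cite{Bulatov05} and invoked as a black box in the derivation of Theorem~\ref{thm_quantum_trivial_iff_bipartite}. There is therefore no in-paper argument to compare your proposal against.

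That said, your outline is essentially correct, with one expository slip worth flagging. In the non-bipartite direction you state the claim that the core $\Y^*$ is \emph{projective} (every polymorphism is of the form $h\circ\operatorname{pr}_i$) and then say you would establish it via the loop lemma together with the equivalence ``no Taylor element $\Leftrightarrow$ minion homomorphism to $\Dminion$''. But that route does not prove projectivity: it yields $\Pol(\Y^*)\to\Dminion$ directly, bypassing the projectivity claim altogether. Projectivity is a genuinely stronger statement --- for finite core structures in general, ``no Taylor polymorphism'' does \emph{not} force every polymorphism to be essentially unary --- and it is what Bulatov's direct combinatorial argument in~\cite{Bulatov05} actually delivers; granted that, your map $h\circ\operatorname{pr}_i\mapsto\be_{i;\ell}$ is a valid minion homomorphism. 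So you really have two independent correct proofs (Bulatov's projectivity argument, and the loop-lemma route), not one in which the latter justifies the former. A further small caveat: the equivalence ``$\Mminion\to\Dminion$ iff $\Mminion$ has no Taylor element'' that you invoke is a theorem about polymorphism minions of finite cores (which suffices here), not about abstract minions.
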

\begin{proof}[Proof of Theorem~\ref{thm_quantum_trivial_iff_bipartite}]
Take a graph $\Y$, and suppose that $\Y$ is bipartite. If $\Y$ contains no edges, the result is trivial. Otherwise, $\Y$ is homomorphically equivalent to the graph $\K_2$ consisting of a single, undirected edge. It is well known that $\CSP(\K_2)$ has bounded width~\cite{Feder98:monotone}; hence, the same holds for $\CSP(\Y)$. By Corollary~\ref{cor_bounded_width_no_quantum_advantage}, we conclude that $\Y$ does not have quantum advantage over any $\HH$.

Suppose now that $\Y$ is non-bipartite. Since we are assuming that $\dim(\HH)\geq 3$, it follows from Theorems~\ref{thm_quantum_minion_vs_projections} and~\ref{thm_hell_nesetril_bulatov} that $\Qminion_\HH\not\to\Pol(\Y)$. Applying Theorem~\ref{thm_quantum_trivial_iff_minion_homo}, we deduce that $\Y$ has quantum advantage over $\HH$.
\end{proof}
\begin{rem}
\label{rem_MR_homomorphisms}
    The notion of quantum perfect strategy and quantum homomorphism from~\cite{abramsky2017quantum}  that we adopt in this paper --- which is applicable to arbitrary relational structures --- slightly differs from the one introduced in~\cite{MancinskaRoberson16} for the case of graphs. We now show that Theorem~\ref{thm_quantum_trivial_iff_bipartite} also holds if we make use of that alternative definition (which we shall indicate by the initials ``MR'').

    For two graphs $\X$ and $\Y$, an MR quantum homomorphism from $\X$ to $\Y$ over $\HH$ is a family of orthogonal projectors $\{p_{x,y}\}_{x\in X,y\in Y}$ in $\Proj_\HH$ satisfying the conditions (Q1) and (Q2) from Section~\ref{subsec_prelim_quantum_advantage}. In other words, commutativity between the projectors corresponding to adjacent vertices of $\X$ is not enforced in the MR definition (cf.~\cite[Corollary~2.2]{MancinskaRoberson16}).
    It easily follows from the definitions that $\X\qTo{\HH}\Y$ implies $\X\qToMR{\HH}\Y$. Therefore, if a graph $\Y$ has quantum advantage over $\HH$, it also has MR quantum advantage over $\HH$. Consequently, the fact that non-bipartite graphs have MR quantum advantage when $\dim(\HH)\geq 3$
    immediately follows from Theorem~\ref{thm_quantum_trivial_iff_bipartite}. The converse direction of the statement of the theorem in the MR setting (namely, that bipartite graphs have no MR quantum advantage) was proved in~\cite{MancinskaRoberson16}.
\end{rem}

\section{Quantum advantage and promise CSPs}
\label{sec_promise_CSPs}
Let $\Y$ be a relational structure and, as usual, let $\HH$ be a finite-dimensional Hilbert space. If $\Y$ does not have quantum advantage over $\HH$, any structure $\X$ such that $\X\qTo{\HH}\Y$  admits a homomorphism to $\Y$. Suppose now that $\Y$ does have quantum advantage over $\HH$. Can we still get some partial information of classical type from the fact that a quantum homomorphism $\X\qTo{\HH}\Y$ exists? 
A natural way of expressing this partial information --- as opposed to the complete information ``$\X\to\Y$'' --- is by asking that a quantum homomorphism $\X\qTo{\HH}\Y$  should at least guarantee the existence of a classical homomorphism $\X\to\Y'$ for \emph{some} structure $\Y'$. Notice that, if this is the case, we must have $\Y\to\Y'$, since $\Y\qTo{\HH}\Y$. We now give a formal description of this more general version of quantum advantage.

\begin{defn}
    Let $\Y$ and $\Y'$ be two $\sigma$-structures such that $\Y\to\Y'$. We say that the pair $(\Y,\Y')$ \emph{has quantum advantage over $\HH$} if there exists a $\sigma$-structure $\X$ such that $\X\qTo{\HH}\Y$ but $\X\not\to\Y'$.
\end{defn}
In other words, $(\Y,\Y')$ does \emph{not} have quantum advantage over $\HH$ if and only if a quantum homomorphism  $\X\qTo{\HH}\Y$ guarantees at least a classical homomorphism $\X\to\Y'$. 
Since the composition of a quantum and a classical homomorphism is a quantum homomorphism, we easily see that, if $(\Y,\Y')$ has quantum advantage over $\HH$, then the same must hold for both $\Y$ and $\Y'$ individually. We will prove that the converse implication is not true in general.
\begin{thm}
\label{thm_separation_quantum_advantage_one_two}
If $\dim(\HH)\geq 3$,
there exists a pair of relational structures $\Y\to\Y'$ such that $\Y$ and $\Y'$, but not $(\Y,\Y')$, have quantum advantage over $\HH$. 
\end{thm}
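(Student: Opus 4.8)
The plan is to first prove the promise-CSP analogue of Theorem~\ref{thm_quantum_trivial_iff_minion_homo}: for similar structures with $\Y\to\Y'$, the pair $(\Y,\Y')$ has quantum advantage over $\HH$ if and only if $\Qminion_\HH\not\to\Pol(\Y,\Y')$. By Proposition~\ref{prop_quantum_homo_as_test}, the statement that $(\Y,\Y')$ fails to have quantum advantage means that $\X\to\freeQH(\Y)$ implies $\X\to\Y'$ for every finite $\sigma$-structure $\X$. Since $\Y'$ is finite, a standard compactness argument applied to the finite substructures of $\freeQH(\Y)$ (each of which maps into $\freeQH(\Y)$, hence into $\Y'$) upgrades this to $\freeQH(\Y)\to\Y'$; the converse implication is immediate by composition. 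Finally, the free-structure correspondence for promise CSPs of~\cite{BBKO21} says that, when $\Y\to\Y'$, one has $\freeM(\Y)\to\Y'$ if and only if $\Mminion\to\Pol(\Y,\Y')$. Chaining these equivalences with $\Mminion=\Qminion_\HH$ yields the characterisation, exactly as Theorem~\ref{thm_quantum_trivial_iff_minion_homo} is obtained from Proposition~\ref{prop_quantum_homo_as_test} and Theorem~\ref{thm_minion_test_solvability_homo}.

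With this tool, Theorem~\ref{thm_separation_quantum_advantage_one_two} reduces to exhibiting a pair $\Y\to\Y'$ with $\Qminion_\HH\not\to\Pol(\Y)$ and $\Qminion_\HH\not\to\Pol(\Y')$ but $\Qminion_\HH\to\Pol(\Y,\Y')$. For the first two conditions it is enough, by Corollary~\ref{cor_quantum_trivial_implies_tractable} (and since $\dim(\HH)\geq 3$), that $\CSP(\Y)$ and $\CSP(\Y')$ both be NP-hard. For the third, observe that a homomorphism $g\colon\Y\to\Y'$ induces minion homomorphisms $\Pol(\Y)\to\Pol(\Y,\Y')$ (via $f\mapsto g\circ f$) and $\Pol(\Y')\to\Pol(\Y,\Y')$ (via $f\mapsto f\circ g^{\ell}$ on arity-$\ell$ polymorphisms), so $\Pol(\Y,\Y')$ lies above both of them in the minion preorder; the task is to find a pair for which this extra slack suffices to admit $\Qminion_\HH$. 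This happens as soon as $\PCSP(\Y,\Y')$ is solved by one of the relaxations whose minion is refined by the quantum minion: $\Qminion_\HH\to\Sminion$ (Theorem~\ref{thm_minion_homo_quantum_minion_skeletal_minion}, the $\SDP$ relaxation), $\Qminion_\HH\to\Cminion$ (Remark~\ref{rem_CLP_skeletal}, the CLP relaxation), and — by a short computation using that orthogonality makes dimensions additive — the dimension map $\bq\mapsto\bigl(\tfrac{1}{\dim(\HH)}\dim q_i\bigr)_i$ is a minion homomorphism from $\Qminion_\HH$ to the minion capturing the basic linear programming relaxation $\BLP$.

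It then remains to name a concrete promise CSP, drawn from the theory of~\cite{AGH17,BG21,BBKO21}, that is tractable via such a relaxation while both of its ``$\CSP$ endpoints'' are NP-hard. The canonical candidate is $\PCSP(\inn{1}{3},\NAE)$: positive $1$-in-$3$-SAT and positive not-all-equal-$3$-SAT are each NP-complete, whereas the promise problem is tractable~\cite{BG21}. The subtlety is that the algorithm known to solve $\PCSP(\inn{1}{3},\NAE)$ is $\BLP+\AIP$ rather than $\BLP$, $\SDP$ or CLP individually, so this route requires additionally establishing that $\Qminion_\HH$ refines the $\BLP+\AIP$ minion. I expect this point — or, equivalently, locating a hard-endpoint promise CSP solved outright by $\SDP$ or CLP — to be the main obstacle: one must build a minion homomorphism from the \emph{conic} quantum minion into a minion carrying genuinely \emph{affine} (sign-changing) data, which the naive projection and dimension constructions above do not supply. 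Two plausible ways around it are to glue the dimension-based $\BLP$ homomorphism to an auxiliary affine coordinate, or to first use a parity-of-dimension homomorphism $\Qminion_\HH\to(\text{affine minion over }\mathbb{F}_2)$, which is available when $\dim(\HH)$ is odd, and then dispatch the even-dimensional case by a separate argument.
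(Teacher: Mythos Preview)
Your framework is exactly the paper's: prove the promise analogue of Theorem~\ref{thm_quantum_trivial_iff_minion_homo} (this is Theorem~\ref{thm_quantum_trivial_iff_minion_homo_PCSP} in the paper, obtained via Theorem~\ref{thm_minion_test_solvability_homo_PCSPs}), then pick the pair $(\inn{1}{3},\NAE)$, whose individual CSPs are $\NP$-complete with polymorphism minions mapping to $\Dminion$. So far, so good.

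The gap is your belief that ``the algorithm known to solve $\PCSP(\inn{1}{3},\NAE)$ is $\BLP{+}\AIP$ rather than $\BLP$, $\SDP$ or CLP individually''. This is the only place your argument stalls, and it is a factual misconception rather than a mathematical obstacle. It is shown in~\cite{bgs_robust23stoc} that $\SDP$ solves $\PCSP(\inn{1}{3},\NAE)$, hence $\Sminion\to\Pol(\inn{1}{3},\NAE)$ by Theorem~\ref{thm_SDP_minion_PCSP}; and it is shown in~\cite{BG21} that CLP solves it, hence $\Cminion\to\Pol(\inn{1}{3},\NAE)$ (cf.\ Remark~\ref{rem_1in3_NAE_CLP}). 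Either of these, composed with $\Qminion_\HH\to\Sminion$ (Theorem~\ref{thm_minion_homo_quantum_minion_skeletal_minion}) or $\Qminion_\HH\to\Cminion$ (Remark~\ref{rem_CLP_skeletal}), finishes the proof immediately. Your entire final paragraph --- building a homomorphism from $\Qminion_\HH$ into the $\BLP{+}\AIP$ minion, parity-of-dimension tricks, case splits on $\dim(\HH)$ --- is unnecessary. (Incidentally, a homomorphism $\Qminion_\HH\to\BAminion$ genuinely does \emph{not} exist in general, so that route would have failed; your instinct that conic data cannot feed the affine part was right.)
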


It turns out that the framework we developed in the previous sections can be smoothly readapted to capture this generalised version of quantum advantage. As a consequence, just like the occurrence of quantum advantage for a single structure is linked to CSP complexity, we shall see that the occurrence of quantum advantage for a pair of structures is linked to the complexity of a \emph{promise} version of CSPs that we now describe.

For two $\sigma$-structures $\Y$ and $\Y'$ such that $\Y\to\Y'$, the \emph{promise $\operatorname{CSP}$} parameterised by $\Y$ and $\Y'$ (in symbols, $\PCSP(\Y,\Y')$) is the following computational problem: Given as input a $\sigma$-structure $\X$, output $\YES$ if $\X\to\Y$, and $\NO$ if $\X\not\to\Y'$.
The fact that $\Y\to\Y'$ ensures that the answer sets are disjoint. Notice that $\PCSP(\Y,\Y)$ is precisely $\CSP(\Y)$, so PCSPs are a generalisation of CSPs. 

This framework was formally introduced in~\cite{AGH17} and~\cite{BG21} to study approximability of perfectly satisfiable CSPs, but particular examples of PCSPs have been studied for a long time. For instance, given two natural numbers $n\leq n'$, $\PCSP(\K_n,\K_{n'})$ is the famous \emph{approximate graph colouring} problem, whose complexity is not known for general $n$ and $n'$ --- while the complexity of its non-promise version, graph $n$-colouring, was already classified in~\cite{Karp72}. 

Unlike for CSPs, the complexity landscape of PCSPs is widely unknown. In particular, a dichotomy for PCSP complexity is not known to hold. Nevertheless, part of the machinery from the algebraic approach to CSPs does extend to the promise setting. Following~\cite{AGH17}, we let a polymorphism of arity $\ell$ for the pair $(\Y,\Y')$ be a homomorphism from $\Y^{\ell}$ to $\Y'$. The set $\Pol(\Y,\Y')$ of all polymorphisms of $(\Y,\Y')$ is a subset of $\mathscr{F}_{Y,Y'}$ (from Example~\ref{example_function_minion}) and it is closed under minors. Hence, just like $\Pol(\Y)$, $\Pol(\Y,\Y')$ is a (function) minion.

The next result generalises Theorem~\ref{thm_quantum_trivial_iff_minion_homo}, by showing that the quantum minion also captures quantum advantage for pairs of structures.
\begin{thm}
\label{thm_quantum_trivial_iff_minion_homo_PCSP}
Let $\Y\to\Y'$ be relational structures. Then
$(\Y,\Y')$ has quantum advantage over $\HH$ if and only if $\Qminion_\HH\not\to\Pol(\Y,\Y')$.
\end{thm}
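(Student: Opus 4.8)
The plan is to retrace the proof of Theorem~\ref{thm_quantum_trivial_iff_minion_homo} and generalise each ingredient from the single-structure setting to the promise setting. The three components of that proof were: (a) Proposition~\ref{prop_quantum_homo_as_test}, expressing $\X\qTo{\HH}\Y$ as $\X\to\freeQH(\Y)$; (b) the minion-test machinery, which reformulates "$\Test{\Mminion}{}$ solves $\CSP(\Y)$" as "$\Mminion\to\Pol(\Y)$" (Theorem~\ref{thm_minion_test_solvability_homo}); and (c) the observation that quantum advantage is exactly the failure of $\Test{\Qminion_\HH}{}$ to solve $\CSP(\Y)$. I would isolate the promise analogue of (b), since that is the only piece that needs genuinely new input, and then splice everything together.

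First I would recall (from~\cite{cz23soda:minions}, or reprove in two lines) the promise version of the free-structure/minion-test correspondence: for a minion $\Mminion$ and structures $\Y\to\Y'$, one has $\freeM(\Y)\to\Y'$ if and only if $\Mminion\to\Pol(\Y,\Y')$. Indeed, a minion homomorphism $\Mminion\to\Pol(\Y,\Y')$ is the same data as a homomorphism $\freeM(\Y)\to\Y'$, by the universal property of the free structure: the relations of $\freeM(\Y)$ are built precisely from the minor relations that a minion homomorphism into $\Pol(\Y,\Y')\subseteq\mathscr{F}_{Y,Y'}$ must respect. This is the exact promise analogue of Theorem~\ref{thm_minion_test_solvability_homo}, whose statement "$\X\to\freeM(\Y)\Rightarrow\X\to\Y$ for all $\X$" is equivalent, by taking $\X=\freeM(\Y)$ for the forward direction and composing with $\Y\to\freeM(\Y)$ for the backward direction, to "$\freeM(\Y)\to\Y$". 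Replacing the target $\Y$ by $\Y'$ throughout gives the promise statement.

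Next I would assemble the argument. By Proposition~\ref{prop_quantum_homo_as_test}, $\X\qTo{\HH}\Y$ is equivalent to $\X\to\freeQH(\Y)$. Hence $(\Y,\Y')$ has quantum advantage over $\HH$ — i.e.\ there is some $\X$ with $\X\qTo{\HH}\Y$ but $\X\not\to\Y'$ — if and only if there is some $\X$ with $\X\to\freeQH(\Y)$ but $\X\not\to\Y'$, which (taking $\X=\freeQH(\Y)$ in the nontrivial direction, and using transitivity of $\to$ in the other) holds if and only if $\freeQH(\Y)\not\to\Y'$. By the promise free-structure correspondence just recalled, this is equivalent to $\Qminion_\HH\not\to\Pol(\Y,\Y')$, which is the claim. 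Note the hypothesis $\Y\to\Y'$ is used only to make $\Pol(\Y,\Y')$ and the notion "$(\Y,\Y')$ has quantum advantage" well defined; it plays no further role.

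The main obstacle is essentially bookkeeping rather than mathematics: one must make sure the promise version of the minion-test theorem is actually available in the form needed (it is, as the relevant lemma in~\cite{cz23soda:minions} and the free-structure formalism of~\cite{BBKO21} are stated for arbitrary targets), and one must be careful that $\freeQH(\Y)$ can be an infinite structure, so that "$\X\to\freeQH(\Y)$" quantified over all finite $\X$ is equivalent to "$\freeQH(\Y)\to\Y'$" only via the observation that $\freeQH(\Y)\to\Y'$ is witnessed by a single map and, conversely, $\Y\to\freeQH(\Y)$ always holds. I would flag this compactness-free equivalence explicitly, since it is the one spot where the promise generalisation is not purely formal.
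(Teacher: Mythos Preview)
Your approach is the paper's own: rewrite $\X\qTo{\HH}\Y$ as $\X\to\freeQH(\Y)$ via Proposition~\ref{prop_quantum_homo_as_test}, then apply the promise version of the minion-test correspondence (which the paper states as Theorem~\ref{thm_minion_test_solvability_homo_PCSPs}, cited from~\cite{cz23soda:minions}), observing that quantum advantage for $(\Y,\Y')$ is exactly the failure of $\Test{\Qminion_\HH}{}$ to solve $\PCSP(\Y,\Y')$. The paper's proof is literally those two sentences.

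One point needs correction, though it does not break the argument. You route through the intermediate statement ``$\freeQH(\Y)\not\to\Y'$'' and label the passage between ``there exists a \emph{finite} $\X$ with $\X\to\freeQH(\Y)$ and $\X\not\to\Y'$'' and ``$\freeQH(\Y)\not\to\Y'$'' as compactness-free. It is not: since $\freeQH(\Y)$ is generally infinite you cannot take $\X=\freeQH(\Y)$ (as you yourself note), and extracting a finite witness from $\freeQH(\Y)\not\to\Y'$ is precisely a De~Bruijn--Erd\H{o}s-type compactness step. The paper sidesteps this entirely by keeping the quantification over finite $\X$ throughout and citing Theorem~\ref{thm_minion_test_solvability_homo_PCSPs} as a black box, so that whatever compactness is needed lives inside the proof of that theorem in~\cite{cz23soda:minions}. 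Your route still works; just either invoke compactness explicitly at that step, or drop the intermediate ``$\freeQH(\Y)\to\Y'$'' reformulation and cite the theorem directly as the paper does.
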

Recall the notion of minion test used in Section~\ref{sec_quantum_minion}. Given a minion $\Mminion$, we say that $\Test{\Mminion}{}$ solves $\PCSP(\Y,\Y')$ if, for every structure $\X$ similar to $\Y$ and $\Y'$, $\Test{\Mminion}{}(\X,\Y)=\YES$ implies $\X\to\Y'$. Theorem~\ref{thm_minion_test_solvability_homo} extends to PCSPs (in fact, the following is its original formulation in~\cite{cz23soda:minions}).
\begin{thm}[\cite{cz23soda:minions}]
\label{thm_minion_test_solvability_homo_PCSPs}
    Let $\Mminion$ be a minion and let $\Y\to\Y'$ be relational structures. Then $\Test{\Mminion}{}$ solves $\PCSP(\Y,\Y')$ if and only if $\Mminion\to\Pol(\Y,\Y')$.
\end{thm}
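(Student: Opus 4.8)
The plan is to prove the stated equivalence by threading it through the intermediate condition $\freeM(\Y)\to\Y'$. Concretely, I would show that the following three conditions are equivalent: (a)~$\Test{\Mminion}{}$ solves $\PCSP(\Y,\Y')$; (b)~$\freeM(\Y)\to\Y'$; (c)~$\Mminion\to\Pol(\Y,\Y')$. Throughout I would retain the identifications of $Y$ with $[n]$, where $n=|Y|$, and of $R^\Y$ with $[m]$, where $m=|R^\Y|$, used in the definition of $\freeM(\Y)$, and write $\pi_i\colon[m]\to[n]$ for the map sending a tuple in $R^\Y$ to its $i$-th entry.

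The equivalence of (a) and (b) is the easy part. For (b)$\Rightarrow$(a): if $\freeM(\Y)\to\Y'$ and $\X$ is a (finite) input with $\Test{\Mminion}{}(\X,\Y)=\YES$, i.e.\ $\X\to\freeM(\Y)$, then composing the two homomorphisms yields $\X\to\Y'$. For (a)$\Rightarrow$(b): every finite substructure $\Z$ of $\freeM(\Y)$ embeds into $\freeM(\Y)$, so $\Test{\Mminion}{}(\Z,\Y)=\YES$ and hence, by~(a), $\Z\to\Y'$; since $\Y'$ is finite I would then invoke a compactness argument (equivalently, the compactness theorem of propositional logic, or Tychonoff applied to $(Y')^{\freeM(\Y)}$) to glue these into a single homomorphism $\freeM(\Y)\to\Y'$. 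This compactness step is the one place where the possible infiniteness of $\freeM(\Y)$ --- which genuinely occurs for $\Mminion=\Qminion_\HH$ --- must be handled with care; the rest of this part is purely formal.

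For the equivalence of (b) and (c) I would use the standard free-structure correspondence. From a minion homomorphism $\xi\colon\Mminion\to\Pol(\Y,\Y')$ I would build $g\colon\freeM(\Y)\to\Y'$ by setting $g(M)=\xi(M)(1,2,\dots,n)$ for $M\in\Mminion^{(n)}$, i.e.\ the polymorphism $\xi(M)\colon Y^n\to Y'$ evaluated at the ``identity tuple''. Checking that $g$ preserves a relation $R$ of arity $k$ amounts to taking $(M_1,\dots,M_k)\in R^{\freeM(\Y)}$ with witness $\tilde M\in\Mminion^{(m)}$, rewriting $g(M_i)=\xi(\tilde M)(\pi_i(1),\dots,\pi_i(m))$ using that $\xi$ preserves minors, noting that $(\pi_i(1),\dots,\pi_i(m))$ is the $i$-th column of the $m\times k$ matrix whose rows enumerate $R^\Y$, and concluding --- since those columns form a tuple in $R^{\Y^m}$ by definition of the direct power --- by applying the polymorphism $\xi(\tilde M)\colon\Y^m\to\Y'$. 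Conversely, from $g\colon\freeM(\Y)\to\Y'$ I would define, for $M\in\Mminion^{(\ell)}$, a map $\xi(M)\colon Y^\ell\to Y'$ by $\xi(M)(\ba)=g(M_{/\alpha})$, where $\alpha\colon[\ell]\to[n]$ encodes $\ba$ (so that $M_{/\alpha}\in\Mminion^{(n)}$ lies in the domain of $\freeM(\Y)$); arity is preserved by construction, minor-preservation reduces to the composition axiom $(M_{/\pi})_{/\beta}=M_{/\beta\circ\pi}$, and the fact that $\xi(M)$ is a polymorphism of $(\Y,\Y')$ again unwinds the definition of $R^{\freeM(\Y)}$, this time with witness $\tilde M=M_{/\rho}$, where $\rho\colon[\ell]\to[m]$ records which member of $R^\Y$ each row of a given tuple in $R^{\Y^\ell}$ equals.

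Chaining (a)$\Leftrightarrow$(b)$\Leftrightarrow$(c) gives the theorem. The main obstacle, as noted, is the compactness step in (a)$\Rightarrow$(b); everything else is bookkeeping with the two minion axioms and the identifications $Y\cong[n]$, $R^\Y\cong[m]$. Since the statement is quoted from~\cite{cz23soda:minions}, this reproduces (the relevant special case of) the argument given there, which in turn extends the free-structure machinery of~\cite{BBKO21}.
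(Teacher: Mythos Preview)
The paper does not supply its own proof of this statement: Theorem~\ref{thm_minion_test_solvability_homo_PCSPs} is quoted from~\cite{cz23soda:minions} and used as a black box. Your proposal is therefore not being compared against an in-paper argument but against the cited one, and what you have written is precisely the standard proof from~\cite{cz23soda:minions} (building on the free-structure machinery of~\cite{BBKO21}): factor through the intermediate condition $\freeM(\Y)\to\Y'$, handle (a)$\Rightarrow$(b) by compactness over finite substructures, and unwind the bijection between homomorphisms $\freeM(\Y)\to\Y'$ and minion homomorphisms $\Mminion\to\Pol(\Y,\Y')$ via $g(M)=\xi(M)(1,\dots,n)$ and $\xi(M)(\ba)=g(M_{/\alpha})$.

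All the bookkeeping checks out; in particular, your identification $\pi_i\circ\rho=\alpha_i$ in the (b)$\Rightarrow$(c) direction is exactly what is needed, and the compactness step is genuinely required here because the paper restricts $\Test{\Mminion}{}$ to finite inputs while $\freeM(\Y)$ is typically infinite. Nothing to add.
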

The proof of Theorem~\ref{thm_quantum_trivial_iff_minion_homo_PCSP} is then entirely analogous to that of Theorem~\ref{thm_quantum_trivial_iff_minion_homo}.
\begin{proof}[Proof of Theorem~\ref{thm_quantum_trivial_iff_minion_homo_PCSP}]
The pair $(\Y,\Y')$ has quantum advantage over $\HH$ if and only if there exists some $\X$ such that $\Test{\Qminion_\HH}{}(\X,\Y)=\YES$ but $\X\not\to\Y'$; i.e., if and only if $\Test{\Qminion_\HH}{}$ does not solve $\PCSP(\Y,\Y')$. Hence, the conclusion follows from Theorem~\ref{thm_minion_test_solvability_homo_PCSPs}.
\end{proof}

The following statement is then a direct consequence of Theorems~\ref{thm_quantum_minion_vs_projections},~\ref{thm_minion_homo_quantum_minion_skeletal_minion}, and~\ref{thm_quantum_trivial_iff_minion_homo_PCSP}.
\begin{cor}
\label{cor_PCSP_quantum_advantage_summary}
    Let $\Y\to\Y'$ be relational structures. 
    \begin{enumerate}
        \item[$(1)$] If $\dim(\HH)\leq 2$, $(\Y,\Y')$ has no quantum advantage over $\HH$.
        \item[$(2)$] If $\dim(\HH)\geq 3$ and $\Pol(\Y,\Y')\to\Dminion$, $(\Y,\Y')$ has quantum advantage over $\HH$.
        \item[$(3)$] If $\Sminion\to\Pol(\Y,\Y')$, $(\Y,\Y')$ has no quantum advantage over any $\HH$.
    \end{enumerate}
\end{cor}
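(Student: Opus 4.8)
The plan is to derive all three parts directly from the characterisation in Theorem~\ref{thm_quantum_trivial_iff_minion_homo_PCSP}, which states that $(\Y,\Y')$ has quantum advantage over $\HH$ exactly when $\Qminion_\HH\not\to\Pol(\Y,\Y')$. The key observation is that $\Pol(\Y,\Y')$ is itself a minion (as noted immediately before Theorem~\ref{thm_quantum_trivial_iff_minion_homo_PCSP}), so every general minion-theoretic result proved earlier applies to it, and the minion homomorphism relation is transitive. Thus each item reduces to chaining together known homomorphisms into or out of $\Qminion_\HH$ and $\Pol(\Y,\Y')$.

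For part $(1)$ I would argue as follows. If $\dim(\HH)\leq 2$, then Theorem~\ref{thm_quantum_minion_vs_projections} gives $\Qminion_\HH\to\Dminion$, and Lemma~\ref{lem_projections_map_everywhere} gives $\Dminion\to\Pol(\Y,\Y')$. Composing these yields $\Qminion_\HH\to\Pol(\Y,\Y')$, so by Theorem~\ref{thm_quantum_trivial_iff_minion_homo_PCSP} the pair $(\Y,\Y')$ has no quantum advantage over $\HH$. For part $(3)$, Theorem~\ref{thm_minion_homo_quantum_minion_skeletal_minion} gives $\Qminion_\HH\to\Sminion$ for \emph{every} finite-dimensional $\HH$; composing with the hypothesis $\Sminion\to\Pol(\Y,\Y')$ again yields $\Qminion_\HH\to\Pol(\Y,\Y')$ for every $\HH$, whence $(\Y,\Y')$ has no quantum advantage over any $\HH$, again by Theorem~\ref{thm_quantum_trivial_iff_minion_homo_PCSP}.

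For part $(2)$ I would use the contrapositive direction. Assume $\dim(\HH)\geq 3$ and $\Pol(\Y,\Y')\to\Dminion$, and suppose for contradiction that $(\Y,\Y')$ has no quantum advantage over $\HH$. Then Theorem~\ref{thm_quantum_trivial_iff_minion_homo_PCSP} gives $\Qminion_\HH\to\Pol(\Y,\Y')$; composing this with the assumed homomorphism $\Pol(\Y,\Y')\to\Dminion$ produces $\Qminion_\HH\to\Dminion$, which contradicts Theorem~\ref{thm_quantum_minion_vs_projections} since $\dim(\HH)\geq 3$. Hence $(\Y,\Y')$ has quantum advantage over $\HH$.

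There is essentially no obstacle here: the corollary is a routine composition argument, and the only point deserving a word of care is that Theorems~\ref{thm_quantum_minion_vs_projections} and~\ref{thm_minion_homo_quantum_minion_skeletal_minion} are genuinely statements about the single minion $\Qminion_\HH$ (not about polymorphism minions), so that they may be freely composed on the other side with an arbitrary minion homomorphism involving $\Pol(\Y,\Y')$. This is indeed the case, and beyond reflexivity and transitivity of the minion homomorphism preorder no new ideas are required.
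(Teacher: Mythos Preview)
Your argument is correct and matches the paper's approach exactly: the paper simply states that the corollary is a direct consequence of Theorems~\ref{thm_quantum_minion_vs_projections},~\ref{thm_minion_homo_quantum_minion_skeletal_minion}, and~\ref{thm_quantum_trivial_iff_minion_homo_PCSP}, and you have spelled out precisely the intended compositions (together with Lemma~\ref{lem_projections_map_everywhere} for part~$(1)$, which the paper leaves implicit).
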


How to link the result above to PCSP complexity?
It was shown in~\cite{BBKO21} that Theorem~\ref{thm_minion_homos_reductions} keeps holding in the PCSP setting. I.e., a minion homomorphism $\Pol(\Y,\Y')\to\Pol(\tilde\Y,\tilde\Y')$ induces a polynomial-time (log-space) reduction from $\PCSP(\tilde\Y,\tilde\Y')$ to $\PCSP(\Y,\Y')$. Nevertheless, some care needs to be taken when reformulating the conditions in parts ($2$) and ($3$) of Corollary~\ref{cor_PCSP_quantum_advantage_summary} in terms of PCSP-complexity statements. Regarding part ($2$), in sharp contrast to the CSP setting,
it is not true that the polymorphism minion of
any $\NP$-hard PCSP admits a homomorphism to $\Dminion$, as detailed in~\cite{BBKO21}. Therefore, unless $\operatorname{P}=\NP$, we cannot extend Corollary~\ref{cor_quantum_trivial_implies_tractable} to the PCSP setting. 
\begin{example}
Given a graph $\X$, let its quantum chromatic number be the quantity $\chi_{\operatorname{q}}(\X)=\min\{n\in\N:\X\qTo{\HH}\K_n\}$. Then, for $n\leq n'\in\N$, quantum advantage for the pair $(\K_n,\K_{n'})$ corresponds to a separation between quantum and classical chromatic numbers. Indeed, $(\K_n,\K_{n'})$ has quantum advantage over $\HH$ precisely when there exists a graph having quantum chromatic number $\leq n$ and classical chromatic number $>n'$. It was shown in~\cite{BrakensiekG16} (see also~\cite{BBKO21}) that, given integers $3\leq n\leq n'$, $\Pol(\K_n,\K_{n'})\to\Dminion$ if and only if $n'\leq 2n-2$. Thus, if $\dim(\HH)\geq 3$ and $n\geq 3$, Corollary~\ref{cor_PCSP_quantum_advantage_summary} implies that there exist graphs having quantum chromatic number at most $n$ and classical chromatic number at least $2n-1$. However, no larger separation can be derived through Corollary~\ref{cor_PCSP_quantum_advantage_summary}.\footnote{We remark that this version of quantum chromatic number differs from the one introduced in~\cite{CameronMNSW07} (which we denote by $\tilde\chi_{\operatorname{q}}$), in that the latter makes use of projectors that are not necessarily commuting (cf.~Remark~\ref{rem_MR_homomorphisms}). In particular, for any graph $\X$, it holds that 
$\tilde\chi_{\operatorname{q}}(\X)\leq\chi_{\operatorname{q}}(\X)\leq\chi(\X)$, where $\chi$ is the classical chromatic number. 
An exponential separation between $\tilde\chi_{\operatorname{q}}$ and $\chi$ is known,  see~\cite{BuhrmanCW98,brassard1999cost,AvisHKS06}.}
\end{example}

As for part ($3$) of Corollary~\ref{cor_PCSP_quantum_advantage_summary}, the condition $\Sminion\to\Pol(\Y,\Y')$ does capture PCSP solvability via SDP; also in this case, the following, more general formulation of Theorem~\ref{thm_SDP_minion} is the one given in~\cite{cz23soda:minions,bgs_robust23stoc}.
\begin{thm}[\cite{cz23soda:minions,bgs_robust23stoc}]
\label{thm_SDP_minion_PCSP}
Let $\Y\to\Y'$ be relational structures.
Then $\SDP$ solves $\PCSP(\Y,\Y')$ if and only if $\Sminion\to\Pol(\Y,\Y')$. 
\end{thm}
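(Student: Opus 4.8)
The plan is to reduce the statement to Theorem~\ref{thm_minion_test_solvability_homo_PCSPs} by showing that the $\SDP$ relaxation has exactly the same power as the minion test $\Test{\Sminion}{}$. Recall that ``$\SDP$ solves $\PCSP(\Y,\Y')$'' asserts that $\X\to\Y'$ holds whenever the SDP system for $(\X,\Y)$ is feasible over a finite-dimensional real space, whereas ``$\Test{\Sminion}{}$ solves $\PCSP(\Y,\Y')$'' asserts that $\X\to\Y'$ holds whenever $\X\to\freeS(\Y)$. Hence it is enough to prove the single claim $(\star)$: \emph{for all similar structures $\X$ and $\Y$, the SDP system for $(\X,\Y)$ is feasible over a finite-dimensional real vector space if and only if $\X\to\freeS(\Y)$.} Granting $(\star)$, the two notions of ``solving'' coincide for every $\Y'$ (same hypothesis, same conclusion on $\X$), and Theorem~\ref{thm_minion_test_solvability_homo_PCSPs} then yields the equivalence with $\Sminion\to\Pol(\Y,\Y')$; note that $\Y'$ enters only through the common right-hand side ``$\X\to\Y'$'', so the promise case costs nothing over the non-promise one.

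For the ``if'' direction of $(\star)$, a homomorphism witnessing $\X\to\freeS(\Y)$ assigns to each $x\in X$ a matrix $M_x\in\Sminion^{(|Y|)}$ and to each pair $(R,\bx)$ with $R\in\sigma$, $\bx\in R^\X$ a matrix $\tilde M_{R,\bx}\in\Sminion^{(|R^\Y|)}$ with $M_{x_i}=(\tilde M_{R,\bx})_{/\pi_i}$ for all $i\in[\ar(R)]$, where $\pi_i\colon R^\Y\to Y$ sends $\by$ to $y_i$. After padding all these matrices with zero columns to a common width (which preserves membership in $\Sminion$ and all minor identities), let $\bu_{x,y}$ be the row of $M_x$ indexed by $y$, let $v_{R,\bx,\by}$ be the squared norm of the row of $\tilde M_{R,\bx}$ indexed by $\by$, and let $\bu_0:=\sum_{y\in Y}\bu_{x,y}$. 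Since $\bu_0$ equals the sum of \emph{all} rows of any $\tilde M_{R,\bx}$ in which $x$ occurs, and that sum has squared norm $\tr(\tilde M_{R,\bx}\tilde M_{R,\bx}^{\top})=1$, the vector $\bu_0$ is independent of $x$ within each connected component of the constraint hypergraph of $\X$ and can be made a single unit vector after an isometric re-embedding of each component. The four families of SDP constraints then reduce directly to the minor identities together with the defining property of $\Sminion$ (rows pairwise orthogonal, squared norms summing to $1$); elements of $X$ occurring in no constraint, and constraint-free $\X$, are handled at once since every $\Sminion^{(\ell)}$ is non-empty.

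For the ``only if'' direction, which I expect to be the main obstacle, suppose the SDP system for $(\X,\Y)$ admits a feasible solution $\{\bu_{x,y}\}$, $\{v_{R,\bx,\by}\}$, $\bu_0$ in a Hilbert space $\HH$. Choosing $i=j$ equal to a position at which $x$ occurs in some tuple, the SDP constraints give $\bu_{x,y}\perp\bu_{x,y'}$ for $y\neq y'$ and $\sum_{y}\|\bu_{x,y}\|^2=1$, so the matrix $M_x$ whose rows are the $\bu_{x,y}$ lies in $\Sminion^{(|Y|)}$ (elements in no constraint being trivial). It remains to produce, for each $(R,\bx)$, a matrix $\tilde M_{R,\bx}\in\Sminion^{(|R^\Y|)}$ with $(\tilde M_{R,\bx})_{/\pi_i}=M_{x_i}$ for all $i$ --- equivalently, pairwise orthogonal vectors $\{\bw_\by\}_{\by\in R^\Y}$ in a possible enlargement of $\HH$ with $\|\bw_\by\|^2=v_{R,\bx,\by}$ and $\sum_{\by\colon y_i=y}\bw_\by=\bu_{x_i,y}$ for all $i,y$. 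The key observation is that the vectors $\bu'_{i,y}:=\sum_{\by\colon y_i=y}\sqrt{v_{R,\bx,\by}}\,\mathbf{f}_\by$, built from a fresh orthonormal family $\{\mathbf{f}_\by\}_{\by\in R^\Y}$, have --- by the SDP constraint $\langle\bu_{x_i,y},\bu_{x_j,y'}\rangle=\sum_{\by\colon y_i=y,\,y_j=y'}v_{R,\bx,\by}$ --- the same Gram matrix as the family $\{\bu_{x_i,y}\}$; hence there is an isometric embedding $\Psi$ of $\Span\{\mathbf{f}_\by\}$ into a suitable finite-dimensional enlargement of $\HH$ with $\Psi(\bu'_{i,y})=\bu_{x_i,y}$, and one takes $\bw_\by:=\Psi(\sqrt{v_{R,\bx,\by}}\,\mathbf{f}_\by)$. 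Carrying out all these constructions inside one fixed enlargement of $\HH$ keeps the matrices $M_x$ pinned down (their rows are the original $\bu_{x,y}$), so the witnesses $\tilde M_{R,\bx}$ are mutually consistent and $x\mapsto M_x$ is the desired homomorphism $\X\to\freeS(\Y)$. The delicate point is precisely this compatibility --- certifying every constraint within a single space in which $x\mapsto M_x$ has already been fixed --- and the SDP inner-product constraints are exactly what make the required re-embeddings possible.
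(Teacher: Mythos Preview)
The paper does not supply its own proof of this theorem; it is quoted from \cite{cz23soda:minions,bgs_robust23stoc}, and the only indication of an argument is the remark after Theorem~\ref{thm_SDP_minion} that the result ``corresponds to the fact that $\SDP$ has precisely the same power as $\Test{\Sminion}{}$'', after which one invokes Theorem~\ref{thm_minion_test_solvability_homo_PCSPs}. Your plan is exactly this, and your sketch of $(\star)$ is correct: the Gram-matrix matching in the ``only if'' direction (building $\{\bw_\by\}$ from a fresh orthonormal family via an isometry dictated by the SDP inner-product constraints) is the standard device in the cited sources, and your handling of the compatibility issue---carrying out all per-constraint isometries inside one common finite-dimensional enlargement of $\HH$ while keeping the rows $\bu_{x,y}$ of the $M_x$ fixed inside $\HH$---works because the witnesses $\tilde M_{R,\bx}$ for distinct constraints never need to interact with one another, only with the already-fixed $M_{x_i}$.
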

As we see next, the result above allows finding a separation for the notions of quantum advantage for single structures and for pairs of structures.
\begin{proof}[Proof of Theorem~\ref{thm_separation_quantum_advantage_one_two}]
    Let $\textbf{Z}$ and $\textbf{Z}'$ be the relational structures encoding the CSPs ``\textsc{1-in-3 Sat}'' and ``\textsc{Not-All-Equal Sat}'', respectively. I.e., $\textbf{Z}$ and $\textbf{Z}'$ have Boolean domain and a unique, ternary relation 
    \begin{align*}
    R^\textbf{Z}&=\{(1,0,0),(0,1,0),(0,0,1)\},\\
    R^{\textbf{Z}'}&=\left\{\begin{array}{lll}
         (1,0,0),(0,1,0),(0,0,1),\\
         (0,1,1),(1,0,1),(1,1,0) 
    \end{array}
    \right\}.
    \end{align*}
It is well known~\cite{Garey79} that both $\CSP(\textbf{Z})$ and $\CSP(\textbf{Z}')$ are NP-complete problems, and it is straightforward to check that $\Pol(\textbf{Z})\to\Dminion$ and $\Pol(\textbf{Z}')\to\Dminion$. Hence, since we are assuming $\dim(\HH)\geq 3$, both $\textbf{Z}$ and $\textbf{Z}'$ have quantum advantage over $\HH$. On the other hand, it was shown in~\cite{bgs_robust23stoc} that $\PCSP(\textbf{Z},\textbf{Z}')$ is solved by $\SDP$. It follows from Theorem~\ref{thm_SDP_minion_PCSP} that $\Sminion\to\Pol(\textbf{Z},\textbf{Z}')$. Hence, Corollary~\ref{cor_PCSP_quantum_advantage_summary} implies that the pair $(\textbf{Z},\textbf{Z}')$ has no quantum advantage over $\HH$.
\end{proof}
\begin{rem}
\label{rem_1in3_NAE_CLP}
The fact that the pair $(\textbf{Z},\textbf{Z}')$ from the proof of Theorem~\ref{thm_separation_quantum_advantage_one_two} has no quantum advantage over any $\HH$ can also be obtained by using the minion $\Cminion$ from Remark~\ref{rem_CLP_skeletal} instead of $\Sminion$.
Indeed, also in this case, $\Cminion$ captures solvability by the CLP relaxation in the more general setting of PCSPs: It was proved in~\cite{cz23sicomp:clap} that CLP solves $\PCSP(\Y,\Y')$ if and only if $\Cminion\to\Pol(\Y,\Y')$. Since, as shown in Remark~\ref{rem_CLP_skeletal}, $\Qminion_\HH\to\Cminion$ for each $\HH$, part $(3)$ of Corollary~\ref{cor_PCSP_quantum_advantage_summary} also holds if we replace $\Sminion$ by $\Cminion$. It was shown in~\cite{BG21} that $\PCSP(\textbf{Z},\textbf{Z}')$ is solved by CLP; thus, $\Cminion\to\Pol(\textbf{Z},\textbf{Z}')$, which means that $(\textbf{Z},\textbf{Z}')$ has no quantum advantage over any $\HH$.

We also point out that, while a notion of width for PCSPs is known~\cite{BBKO21,Atserias22:soda},
neither of the minions $\Sminion$ and $\Cminion$ corresponds to bounded width for PCSPs. Indeed, as we have seen, $\Pol(\textbf{Z},\textbf{Z}')$ admits a homomorphism from both $\Sminion$ and $\Cminion$ but, as shown in~\cite{Atserias22:soda}, $\PCSP(\textbf{Z},\textbf{Z}')$ has unbounded width.
\end{rem}

\section*{Acknowledgements}
The author is grateful to David E. Roberson, Laura Man\v{c}inska, and Samson Abramsky for helpful conversations on quantum homomorphisms and quantum advantage.

{\small
\bibliographystyle{plainurl}
\bibliography{bibliography}
}

\appendix
\section{Omitted proofs}
\label{appendix_omitted_proofs}
\begin{prop*}
[Proposition~\ref{prop_quantum_homo_as_test} restated]
    Let $\X,\Y$ be similar relational structures. Then $\X\qTo{\HH}\Y$ if and only if $\X\to\freeQH(\Y)$.
\end{prop*}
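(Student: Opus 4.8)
The plan is to have both sides of the equivalence talk about the same object: a resolution of the identity on $\HH$ indexed by $Y$. Identify $Y$ with $[n]$, $n=|Y|$. By Lemma~\ref{lem_basic_projectors}, an element $\bq=(q_1,\dots,q_n)$ of $\Qminion_\HH^{(n)}$ --- a tuple of pairwise orthogonal subspaces with $\bigboxplus_{i}q_i=\HH$ --- carries exactly the same information as a family $(\pr_{q_y})_{y\in[n]}$ of orthogonal projectors with pairwise products $\zeroOp_\HH$ and with $\sum_{y}\pr_{q_y}=\id_\HH$; conversely, \emph{any} family of orthogonal projectors summing to $\id_\HH$ automatically has pairwise products $\zeroOp_\HH$ (multiplying $\sum_{y'}\pr_{q_{y'}}=\id_\HH$ on the left and right by $\pr_{q_y}$ gives $\sum_{y'\neq y}\pr_{q_y}\pr_{q_{y'}}\pr_{q_y}=\zeroOp_\HH$, and each summand is positive semidefinite, hence vanishes, forcing $\pr_{q_y}\pr_{q_{y'}}=\zeroOp_\HH$). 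Under this dictionary, a homomorphism $f:\X\to\freeQH(\Y)$ is the same as an assignment $x\mapsto(p_{x,y})_{y\in Y}$ of such measurements, with $p_{x,y}=\pr_{f(x)_y}$, subject to: for every $R\in\sigma$ and $\bx\in R^\X$ there is a single $\tilde M=(\tilde M_{\by})_{\by\in R^\Y}\in\Qminion_\HH^{(m)}$, $m=|R^\Y|$, with $f(x_i)=\tilde M_{/\pi_i}$ for all $i\in[\ar(R)]$; unravelling the minor operation of $\Qminion_\HH$, this says precisely that $p_{x_i,y}=\sum_{\by\in R^\Y,\,y_i=y}\pr_{\tilde M_{\by}}$ for all $i$ and $y$.

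For the implication $\X\to\freeQH(\Y)\Rightarrow\X\qTo{\HH}\Y$, I would take such an $f$ and set $p_{x,y}=\pr_{f(x)_y}$. Condition (Q1) is immediate from $f(x)\in\Qminion_\HH^{(n)}$ via Lemma~\ref{lem_basic_projectors}. Fixing $R$ and $\bx\in R^\X$ with witness $\tilde M$, the projectors $P_{\by}:=\pr_{\tilde M_{\by}}$ ($\by\in R^\Y$) are pairwise orthogonal with $\sum_{\by}P_{\by}=\id_\HH$ (because $\tilde M\in\Qminion_\HH^{(m)}$), and $p_{x_i,y}=\sum_{\by:\,y_i=y}P_{\by}$. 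As each $p_{x_i,y}$ is a partial sum of the pairwise orthogonal family $\{P_{\by}\}$, any two of them commute --- indeed $p_{x_i,y}p_{x_j,y'}=\sum_{\by:\,y_i=y,\,y_j=y'}P_{\by}$ --- which gives (Q2); and expanding $p_{\bx,\by}=\prod_{i}p_{x_i,y_i}=\prod_i\bigl(\sum_{\bz\in R^\Y,\,z_i=y_i}P_{\bz}\bigr)$ and using orthogonality to annihilate all off-diagonal terms, one obtains $p_{\bx,\by}=P_{\by}$ when $\by\in R^\Y$ and $p_{\bx,\by}=\zeroOp_\HH$ otherwise, which is (Q3).

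For the converse, starting from $\{p_{x,y}\}$ satisfying (Q1)--(Q3) I would set $f(x)=(\rg_{p_{x,1}},\dots,\rg_{p_{x,n}})$; by (Q1) and the fact quoted above the $p_{x,y}$ have pairwise products $\zeroOp_\HH$, so $f(x)\in\Qminion_\HH^{(n)}$ (the ranges summing to $\HH$ by Lemma~\ref{lem_basic_projectors}). To check the homomorphism condition on $\bx\in R^\X$, I propose the witness $\tilde M=(\rg_{p_{\bx,\by}})_{\by\in R^\Y}$. First, (Q2) makes the factors of each $p_{\bx,\by}$ into pairwise commuting orthogonal projectors, so $p_{\bx,\by}\in\Proj_\HH$ by Lemma~\ref{lem_basic_projectors}. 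Next, expanding $\id_\HH=\prod_{j}\bigl(\sum_{y\in Y}p_{x_j,y}\bigr)=\sum_{\by\in Y^{\ar(R)}}p_{\bx,\by}$ (legitimate as the factors commute) and dropping the terms with $\by\notin R^\Y$ by (Q3) yields $\sum_{\by\in R^\Y}p_{\bx,\by}=\id_\HH$; substituting $p_{x_i,y}$ for one factor in the same expansion gives $p_{x_i,y}=\sum_{\by\in R^\Y,\,y_i=y}p_{\bx,\by}$. Hence $\{p_{\bx,\by}\}_{\by\in R^\Y}$ is a resolution of $\id_\HH$ into projectors, so pairwise orthogonal (same fact once more), showing $\tilde M\in\Qminion_\HH^{(m)}$; taking ranges in the last identity (Lemma~\ref{lem_basic_projectors}) gives $\rg_{p_{x_i,y}}=\bigboxplus_{\by\in R^\Y,\,y_i=y}\rg_{p_{\bx,\by}}=(\tilde M_{/\pi_i})_y$, i.e.\ $f(x_i)=\tilde M_{/\pi_i}$.

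The main obstacle I anticipate is purely organisational: keeping the identifications $Y\cong[n]$, $R^\Y\cong[m]$ and the fibres $\pi_i^{-1}(y)=\{\by\in R^\Y:\,y_i=y\}$ aligned with the minor operation of $\Qminion_\HH$, and justifying that the two product expansions collapse to a single surviving term --- which requires commutativity of the relevant projectors, supplied by (Q2) in the second direction and by orthogonality of the $P_{\by}$'s in the first. The only non-formal input is the elementary fact that a resolution of the identity into orthogonal projectors automatically consists of pairwise orthogonal projectors, which is where positive semidefiniteness enters.
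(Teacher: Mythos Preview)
Your proof is correct and follows essentially the same route as the paper's: in both directions the map between $\{p_{x,y}\}$ and $f(x)=(\rg_{p_{x,y}})_y$ is the same, the witness for a constraint $\bx\in R^\X$ is $(\rg_{p_{\bx,\by}})_{\by\in R^\Y}$ (resp.\ the $P_{\by}=\pr_{\tilde M_{\by}}$), and the verification of (Q1)--(Q3) and of the free-structure relation proceeds via the same product expansions. The only notable difference is that the paper invokes an external lemma (Lemma~\ref{lem_small_family_of_projectors_vanishes}) to conclude that orthogonal projectors summing to $\id_\HH$ have pairwise products $\zeroOp_\HH$, whereas you supply the short direct argument via $P_yP_{y'}P_y=(P_{y'}P_y)^*(P_{y'}P_y)\succeq\zeroOp_\HH$; this makes your write-up slightly more self-contained but is not a different strategy.
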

The proof of Proposition~\ref{prop_quantum_homo_as_test} makes use of the following result.
\begin{lem}[\cite{prugovecki1982quantum}]
\label{lem_small_family_of_projectors_vanishes}
Let $\{p_1,p_2,\dots,p_\ell\}\subseteq\Proj_\HH$ be a finite family of orthogonal projectors. Then $\id_\HH-\sum_{i\in [\ell]}p_i\succeq \zeroOp_\HH$  if and only if $p_ip_j=\zeroOp_\HH$ for each $i\neq j\in [\ell]$.
\end{lem}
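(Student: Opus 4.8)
The plan is to prove both implications by reducing the operator inequality $\id_\HH-\sum_{i\in[\ell]}p_i\succeq\zeroOp_\HH$ to the single scalar inequality
$\sum_{i\in[\ell]}\|p_i(\bh)\|^2\le\|\bh\|^2$ holding for every $\bh\in\HH$. This reduction rests on the elementary identity $\ang{p(\bh),\bh}=\|p(\bh)\|^2$, valid for any orthogonal projector $p$: indeed $\ang{p(\bh),\bh}=\ang{p^2(\bh),\bh}=\ang{p(\bh),p^*(\bh)}=\ang{p(\bh),p(\bh)}$ using idempotency and self-adjointness. Applying this to the self-adjoint map $\id_\HH-\sum_i p_i$ gives $\ang{(\id_\HH-\sum_i p_i)(\bh),\bh}=\|\bh\|^2-\sum_i\|p_i(\bh)\|^2$, so that $\id_\HH-\sum_i p_i\succeq\zeroOp_\HH$ if and only if $\sum_i\|p_i(\bh)\|^2\le\|\bh\|^2$ for all $\bh\in\HH$.

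For the ``if'' direction, I would suppose $p_ip_j=\zeroOp_\HH$ whenever $i\neq j$. Taking adjoints and using $p_i^*=p_i$ gives also $p_jp_i=\zeroOp_\HH$, so $P:=\sum_i p_i$ satisfies $P^2=\sum_{i,j}p_ip_j=\sum_i p_i^2=P$ and $P^*=P$; hence $P\in\Proj_\HH$. Then $\id_\HH-P$ is again self-adjoint and idempotent, so $\ang{(\id_\HH-P)(\bh),\bh}=\|(\id_\HH-P)(\bh)\|^2\ge 0$ for every $\bh\in\HH$, i.e.\ $\id_\HH-\sum_i p_i\succeq\zeroOp_\HH$.

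For the ``only if'' direction, I would assume $\id_\HH-\sum_i p_i\succeq\zeroOp_\HH$, so by the reduction above $\sum_i\|p_i(\bh)\|^2\le\|\bh\|^2$ for all $\bh\in\HH$. Fix $i\neq j$ and an arbitrary $\bh\in\HH$, and evaluate this inequality at the vector $p_j(\bh)$: since $p_j$ is idempotent, $\|p_j(p_j(\bh))\|^2=\|p_j(\bh)\|^2$, so $\sum_{k\neq j}\|p_k(p_j(\bh))\|^2\le\|p_j(\bh)\|^2-\|p_j(p_j(\bh))\|^2=0$, forcing $p_i(p_j(\bh))=\bzero_\HH$; as $\bh$ was arbitrary, $p_ip_j=\zeroOp_\HH$. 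The argument is elementary and I do not foresee a genuine obstacle; the only point needing a little care is the choice of test vectors in this last step — feeding the scalar inequality a vector from the range of $p_j$, where $p_j$ acts as the identity, is exactly what isolates the cross terms $p_kp_j$ and makes them vanish.
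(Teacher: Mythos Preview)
Your proof is correct. The reduction to the scalar inequality $\sum_i\|p_i(\bh)\|^2\le\|\bh\|^2$ via the identity $\ang{p(\bh),\bh}=\|p(\bh)\|^2$ is clean, the ``if'' direction is immediate once $P=\sum_i p_i$ is recognised as a projector, and the ``only if'' direction is handled by the right choice of test vector: plugging $p_j(\bh)$ into the inequality saturates the $j$-th summand and forces all other summands to vanish.

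As for comparison with the paper: the paper does not give its own proof of this lemma. It is quoted as a standard fact from the cited reference and used as a black box in the proof of Proposition~\ref{prop_quantum_homo_as_test}. So there is nothing to compare against; your argument simply supplies a self-contained elementary proof where the paper defers to the literature.
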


\begin{proof}[Proof of Proposition~\ref{prop_quantum_homo_as_test}]
Suppose that $\X\qTo{\HH}\Y$, and take a family $\{p_{x,y}\}_{x\in X,\,y\in Y}$ of orthogonal projectors in $\Proj_\HH$ witnessing it. Let $n$ be the domain size of $\Y$.
Consider the function $f$ mapping $x\in X$ to the vector of length $n$ whose $y$-th element is $\rg_{p_{x,y}}$ for each $y\in Y$.
We claim that $f(x)\in\Qminion_\HH^{(n)}$. By ($\operatorname{Q_1}$), the family $\{p_{x,y}\}_{y\in Y}$ satisfies the hypothesis of Lemma~\ref{lem_small_family_of_projectors_vanishes}; hence, $p_{x,y}p_{x,y'}=\zeroOp_\HH$ for each $y\neq y'\in Y$. Then, a repeated application of the second part of Lemma~\ref{lem_basic_projectors} yields that $\rg_{p_{x,y}}\perp\rg_{p_{x,y'}}$ for each $y\neq y'\in Y$, and 
\begin{align*}
    \bigboxplus_{y\in Y}\rg_{p_{x,y}}
    =
    \rg_{\sum_{y\in Y}p_{x,y}}
    =
    \rg_{\id_\HH}
    =
    \HH.
\end{align*}
This proves the claim, and shows that $f:X\to\Qminion_\HH^{(n)}$ is a well-defined function. We now claim that $f$ yields a homomorphism from $\X$ to $\freeQH(\Y)$. To that end, take a symbol $R$ of arity $r$ in the common signature $\sigma$ of $\X$ and $\Y$, and consider a tuple $\bx\in R^\X$. We need to prove that $f(\bx)\in R^{\freeQH(\Y)}$ (where $f(\bx)$ denotes the entrywise application of $f$ to the entries of $\bx$). Consider the vector $\bw$ of length $m=|R^\Y|$ whose $\by$-th entry, for $\by\in R^\Y$, is $\rg_{p_{\bx,\by}}$ (where we recall that $p_{\bx,\by}=p_{x_1,y_1}\cdot p_{x_2,y_2}\cdot\ldots\cdot p_{x_r,y_r}$).
The condition ($\operatorname{Q_2}$) guarantees that the factors in $p_{\bx,\by}$ are pairwise commuting; hence,  by the first part of Lemma~\ref{lem_basic_projectors},
we find $p_{\bx,\by}\in\Proj_\HH$. Notice also that 
\begin{align*}
    \sum_{\by\in R^\Y}p_{\bx,\by}
    &=
    \sum_{\by\in Y^r}p_{\bx,\by}
    =
    \left(\sum_{y\in Y}p_{x_1,y}\right)
    \left(\sum_{y\in Y}p_{x_2,y}\right)
    \dots
    \left(\sum_{y\in Y}p_{x_r,y}\right)
    =\id_\HH^r
    =\id_\HH,
\end{align*}
where the first equality comes from ($\operatorname{Q_3}$) and the third from ($\operatorname{Q_1}$). We deduce through Lemma~\ref{lem_small_family_of_projectors_vanishes} that $p_{\bx,\by}p_{\bx,\by'}=\zeroOp_\HH$ whenever $\by\neq\by'$. It then follows from Lemma~\ref{lem_basic_projectors} that the entries of $\bw$ are pairwise orthogonal spaces summing up to $\HH$, which yields $\bw\in\Qminion_\HH^{(m)}$. 
Recall from Section~\ref{sec_quantum_minion} that, given $i\in [r]$, $\pi_i$ is the map assigning to each $\by\in R^\Y$ its $i$-th entry $y_i$.
We are left to show that $f(x_i)=\bw_{/\pi_i}$ for each $i\in [r]$. Given $y\in Y$, we find
\begin{align*}
    \sum_{\by\in\pi_i^{-1}(y)}p_{\bx,\by}
    &=
    \sum_{\substack{\by\in R^\Y\\y_i=y}}p_{\bx,\by}
    =
    \sum_{\substack{\by\in Y^r\\y_i=y}}p_{\bx,\by}
    =\id_\HH^{r-1}\cdot p_{x_i,y}
    =
    p_{x_i,y}.
\end{align*}
It follows that the $y$-th entry of $\bw_{/\pi_i}$ is
\begin{align*}
\bigboxplus_{\by\in\pi_i^{-1}(y)}\rg_{p_{\bx,\by}}
=
    \rg_{\sum_{\by\in\pi_i^{-1}(y)}p_{\bx,\by}}
    =
    \rg_{p_{x_i,y}},
\end{align*}
which equals the $y$-th entry of $f(x_i)$. This proves the claim and shows that $f(\bx)\in R^{\freeQH(\Y)}$, thus establishing that $f$ is indeed a homomorphism from $\X$ to $\freeQH(\Y)$.

To prove the converse implication, let $f:\X\to\freeQH(\Y)$ be a homomorphism. Recall that, for a linear subspace $v\in L_\HH$, $\pr_v\in\Proj_\HH$ denotes the orthogonal projector onto $v$; for typographical convenience, we shall denote $\pr_v$ by $\pr(v)$ in the rest of this proof. For $x\in X$ and $y\in Y$, let $p_{x,y}=\pr(f(x)_y)$, where $f(x)_y\in L_\HH$ is the $y$-th entry of the vector $f(x)$. 
We claim that the family $\{p_{x,y}\}_{x\in X, y\in Y}$ is a witness for the existence of a quantum homomorphism from $\X$ to $\Y$ over $\HH$. 

The condition ($\operatorname{Q_1}$) is easily satisfied as $\bigboxplus_{y\in Y}f(x)_y=\HH$. Take now $R\in\sigma$ of arity $r$, $\bx\in R^\X$, $k,\ell\in[r]$, and $y,y'\in Y$. Since $f$ is a homomorphism, there exists some vector $\bw\in\Qminion_\HH^{(m)}$ (where $m=|R^\Y|$) such that $f(x_i)=\bw_{/\pi_i}$ for each $i\in [r]$. Observe that
\begin{align}
\label{eqn_2024_01_12_2120}
    f(x_i)_y
    =
    (\bw_{/\pi_i})_y
    =
    \bigboxplus_{\by\in\pi_i^{-1}(y)}w_\by
    =
    \bigboxplus_{\substack{\by\in R^\Y\\y_i=y}}w_\by.
\end{align}
Hence, we have
\begin{align*}
    p_{x_k,y}
    &=
    \pr\bigg{(}\bigboxplus_{\substack{\by\in R^\Y\\y_k=y}}w_\by\bigg{)}
    =
    \sum_{\substack{\by\in R^\Y\\y_k=y}}\pr(w_\by)
    \intertext{and, similarly,}
    p_{x_\ell,y'}
    &=\sum_{\substack{\by\in R^\Y\\y_\ell=y'}}\pr(w_\by).
\end{align*}
Using the linearity of the commutator and the fact that the product of orthogonal projectors onto orthogonal spaces is zero (the second part of Lemma~\ref{lem_basic_projectors}), we deduce that 
\begin{align*}
[p_{x_k,y},p_{x_\ell,y'}]=\zeroOp_\HH,    
\end{align*}
which shows that the condition ($\operatorname{Q_2}$) holds. Take now $\bz\in Y^r\setminus R^\Y$. By~\eqref{eqn_2024_01_12_2120}, we have 
\begin{align*}
    p_{x_i,z_i}
    =
    \pr\bigg{(}\bigboxplus_{\substack{\by\in R^\Y\\y_i=z_i}}w_\by\bigg{)}
    =
    \sum_{\substack{\by\in R^\Y\\y_i=z_i}}\pr(w_\by)
\end{align*}
for each $i\in [r]$. 
Using again that the product of orthogonal projectors onto orthogonal spaces is zero and the fact that $\bz\not\in R^\Y$, we conclude that
\begin{align*}
    p_{\bx,\bz}
    =
    \prod_{i\in[r]}
     p_{x_i,z_i}
     =
     \prod_{i\in[r]}
     \sum_{\substack{\by\in R^\Y\\y_i=z_i}}\pr(w_\by)
     =
     \sum_{\substack{\by\in R^\Y\\\by=\bz}}\pr(w_\by)
     =
     \zeroOp_\HH,
\end{align*}
as needed to show that also the condition ($\operatorname{Q_3}$) is satisfied.
\end{proof}

\begin{lem*}
[Lemma~\ref{lem_Hopft_fibration} restated]
    If $\dim(\HH)=2$, there exists $C\subset \HH$ such that, given any two nonzero orthogonal
    vectors $\bv,\bw\in\HH$, exactly one of $\bv$ and $\bw$ lies in $C$.
\end{lem*}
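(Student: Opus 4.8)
The plan is to reduce the statement to the existence of a fundamental domain for the antipodal map on a sphere, via the correspondence between one-dimensional subspaces of $\HH$ and rank-one orthogonal projectors.

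\textbf{Reduction to projectors.} Since $\dim(\HH)=2$, any two nonzero orthogonal vectors $\bv,\bw$ span orthogonal complementary lines, so $\pr_{\Span(\bv)}+\pr_{\Span(\bw)}=\id_\HH$. Hence it suffices to produce a set $\mathcal{D}\subseteq\Proj_\HH$ of rank-one orthogonal projectors such that, for every rank-one $P\in\Proj_\HH$, exactly one of $P$ and $\id_\HH-P$ lies in $\mathcal{D}$; one then sets $C=\{\bv\in\HH\setminus\{\bzero_\HH\}:\pr_{\Span(\bv)}\in\mathcal{D}\}$. In other words, I need a fundamental domain for the involution $P\mapsto\id_\HH-P$ on the set $\mathcal{P}$ of rank-one orthogonal projectors, and this involution is free, since $P=\id_\HH-P$ would force $P=\tfrac12\id_\HH\notin\Proj_\HH$.

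\textbf{Passing to the Bloch--Riemann sphere.} Let $S$ be the real vector space of traceless self-adjoint operators on $\HH$, equipped with the inner product $\langle A,B\rangle=\tr(AB)$; its real dimension is $d+1$, where $d=1$ if $\HH$ is real and $d=2$ if $\HH$ is complex. The map $P\mapsto A_P:=2P-\id_\HH$ sends $\mathcal{P}$ into $S$, and a short computation with the Cayley--Hamilton identity for $2\times2$ matrices shows $A_P^2=\id_\HH$ and $\tr(A_P^2)=\dim(\HH)=2$; conversely, every $A\in S$ with $\tr(A^2)=2$ satisfies $A^2=\id_\HH$, whence $\tfrac12(\id_\HH+A)\in\mathcal{P}$. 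Thus $P\mapsto A_P$ is a bijection from $\mathcal{P}$ onto the sphere $\Sigma=\{A\in S:\tr(A^2)=2\}$, which is homeomorphic to $S^d$. The crucial point is that $A_{\id_\HH-P}=\id_\HH-2P=-A_P$, so the involution $P\mapsto\id_\HH-P$ becomes the antipodal map $A\mapsto-A$ on $\Sigma$. (Precomposing $P\mapsto A_P$ with the map sending a unit vector to the projector onto its span gives, up to scaling, the Hopf fibration $S^{2\dim(\HH)-1}\to S^{d}$ mentioned in the Introduction, under which orthogonal unit vectors go to antipodal points of $\Sigma$.)

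\textbf{A fundamental domain on the sphere.} Fix a linear isomorphism $S\cong\R^{d+1}$ and let $<$ be the induced lexicographic order on $S$. Set $\mathcal{D}_\Sigma=\{A\in\Sigma:-A<A\}$. For every $A\in\Sigma$ we have $A\neq\zeroOp_\HH$, hence $A\neq-A$, hence — as $<$ is a strict total order — exactly one of $-A<A$ and $A<-A$ holds; equivalently, exactly one of $A$ and $-A$ lies in $\mathcal{D}_\Sigma$. Pulling back, $\mathcal{D}=\{P\in\mathcal{P}:A_P\in\mathcal{D}_\Sigma\}$ is a fundamental domain for $P\mapsto\id_\HH-P$, and
\[
C=\bigl\{\bv\in\HH\setminus\{\bzero_\HH\}:\pr_{\Span(\bv)}\in\mathcal{D}\bigr\}
\]
has the required property: if $\bv,\bw$ are nonzero and orthogonal, then $P=\pr_{\Span(\bv)}$ and $Q=\pr_{\Span(\bw)}$ satisfy $Q=\id_\HH-P$, so $A_Q=-A_P$, so exactly one of $P,Q$ lies in $\mathcal{D}$, i.e.\ exactly one of $\bv,\bw$ lies in $C$.

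\textbf{Main obstacle.} The only genuine content is the identification in the second step: checking that $P\mapsto 2P-\id_\HH$ is a bijection from rank-one orthogonal projectors onto a sphere and that orthocomplementation corresponds to antipodality. This amounts to routine trace and idempotency computations, carried out uniformly for the real ($d=1$) and complex ($d=2$) cases; the remaining steps are purely formal.
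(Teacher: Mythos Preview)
Your proof is correct and follows essentially the same strategy as the paper's: identify one-dimensional subspaces of $\HH$ with points on a sphere so that orthocomplementation becomes the antipodal map, then take a fundamental domain for that map. The paper does this via the explicit Hopf coordinates $(x,y)\mapsto(x\bar x-y\bar y,2x\bar y)$ on $\C^2$ and an explicit antipodal partition of $S^2$, whereas you use the coordinate-free Bloch map $P\mapsto 2P-\id_\HH$ on rank-one projectors together with a lexicographic fundamental domain; as you yourself note, these are the same construction, with the minor bonus that your phrasing handles the real and complex cases uniformly.
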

\begin{proof}
    Consider the map 
    \begin{align*}
        f:\C^2&\to\R\times\C\\
        (x,y)&\mapsto(x\bar{x}-y\bar{y},2x\bar{y}).
    \end{align*}
    Let $S^3$ be the subset of $\C^2$ containing all pairs $(x,y)\in\C^2$ such that $x\bar{x}+y\bar{y}=1$, and let $S^2$ be the subset of $\R\times\C$ containing all pairs $(t,z)\in\R\times\C$ such that $t^2+z\bar{z}=1$.
    A straightforward calculation shows that $f$ maps elements in  $S^3$ to elements in $S^2$. Moreover, given two orthogonal vectors $(x,y)$ and $(p,q)$ in $S^3$ (i.e., $x\bar{p}+y\bar{q}=0$), notice that the matrix
    \begin{align*}
        M=\begin{bmatrix}
            x&p\\y&q
        \end{bmatrix}
    \end{align*}
    satisfies $M^*M=I_2$; i.e., $M$ is unitary. As a consequence,
    \begin{align*}
        I_2
        =
        MM^*
        =
        \begin{bmatrix}
            x\bar{x}+p\bar{p} & x\bar{y}+p\bar{q}\\
            y\bar{x}+q\bar{p} & y\bar{y}+q\bar{q}
        \end{bmatrix},
    \end{align*}
and we deduce that $f(x,y)=-f(p,q)$. In other words, $f$ maps pairs of orthogonal vectors of $\C^2$ having norm $1$ to antipodal vectors in $S^2$. 

Choose an antipodal partition of $S^2$; i.e., a partition $S^2=U\cup V$ such that, for each $(t,z)\in S^2$, $(t,z)$ and $(-t,-z)$ lie in different parts. For example, we might let $U$ be the set of vectors $(t,z)\in S^2$ such that $t>0$, or $t=0$ and $z_R>0$ (where $z_R$ is the real part of $z$), or $t=0$ and $z=\mathrm{i}$, and $V=S^2\setminus U$.
Fix an orthonormal basis $B$ for $\HH$, and let $r:\HH\to\C^2$ be the function mapping elements of $\HH$ to the corresponding vectors  of coordinates in the basis $B$. Finally, consider the function $g:\HH\setminus\{\bzero_\HH\}\to S^2$ mapping $\bv$ to the image of $\frac{\bv}{\|\bv\|}$ under the function $f\circ r$.  
We define $C=g^{-1}(U)$. If $\bv,\bw$ are two nonzero orthogonal vectors in $\HH$, the argument above implies that $g(\bv)=-g(\bw)$ and, hence, exactly one of $g(\bv)$ and $g(\bw)$ lies in $U$. This shows that the set $C$ has the required property.
\end{proof}

\end{document}